\documentclass[runningheads]{llncs}

\usepackage{cite}
\usepackage{pifont}
\usepackage{amssymb}
\usepackage{comment}

\usepackage{mathtools}
\usepackage{algorithm}
\usepackage{algorithmicx}
\usepackage{algpseudocodex} 
\usepackage{subcaption}

\usepackage{xcolor}
\usepackage{url}
\definecolor{okabe1}{HTML}{000000}
\definecolor{okabe2}{HTML}{E69F00}
\definecolor{okabe3}{HTML}{56B4E9}
\definecolor{okabe4}{HTML}{009E73}
\definecolor{okabe5}{HTML}{F0E442}
\definecolor{okabe6}{HTML}{0072B2}
\definecolor{okabe7}{HTML}{D55E00}
\definecolor{okabe8}{HTML}{CC79A7}

\usepackage{hyperref}
\usepackage{cleveref}

\spnewtheorem{observation}{Observation}{\bfseries}{\itshape}
\crefname{observation}{Observation}{Observations}
\Crefname{observation}{Observation}{Observations}

\renewcommand{\emph}[1]{\textbf{\textit{#1}}}
\usepackage[]{todonotes} 

\title{How to Sort in a Refrigerator:\\ Simple Entropy-Sensitive Strictly In-Place Sorting Algorithms}
\titlerunning{Entropy-Sensitive Strictly In-Place Sorting}

\author{Ofek Gila\inst{}\orcidID{0009-0005-5931-771X} \and
Michael T. Goodrich\inst{}\orcidID{0000-0002-8943-191X} \and
Vinesh~Sridhar\inst{}\orcidID{0009-0009-3549-9589}}

\authorrunning{O. Gila, M.T. Goodrich, and V. Sridhar}

\institute{University of California, Irvine, USA\\
\email{\{ogila,goodrich,vineshs1\}@uci.edu}}

\date{}

\begin{document}

\maketitle
\begin{abstract}
While modern general-purpose computing systems have ample amounts of memory, it is still the case 
that embedded computer systems, such as in a refrigerator, are memory limited; hence, such embedded systems
motivate the need for strictly in-place algorithms, which use only $O(1)$ additional memory besides that used for the input.
In this paper,
we provide the first comparison-based sorting algorithms
that are strictly in-place and have a running time that is optimal in terms of the run-based entropy, $\mathcal{H}(A)$, of an input array, $A$, of size~$n$.
In particular, we describe two remarkably 
simple paradigms for implementing stack-based 
natural mergesort algorithms to be strictly in-place in $O(n(1+\mathcal{H}(A)))$ time.
\end{abstract}

\section{Introduction}

Embedded computer systems are designed for a specific function within a larger device, and they comprise a processor
and a limited amount of memory. They are found in diverse applications such as household appliances, vehicles, 
and medical equipment. Unlike with general-purpose computers, it is often difficult to extend the limited
memory in an embedded systems; hence, embedded systems are a natural application domain for strictly in-place
algorithms, which use only $O(1)$ additional memory besides that used for the input.
Further, embedded systems that utilize non-volatile main memory (NVMM)~\cite{nvmm} have additional need of
strictly in-place algorithms, because these memories degrade if they are over-written many times, such as if
they had to maintain a data structure such as a stack.
Still, we would like to be able to implement fast algorithms in embedded systems, since they
often have real-time performance requirements.

For example, in
\emph{beyond worst-case analysis of algorithms}~\cite{roughgarden2019beyond,roughgarden2021beyond},
we are interested in algorithms whose running time can beat worst-case bounds depending on parameters or metrics defined
for problem instances that capture typical properties for expected inputs.
For example, suppose we are given an array,
$A=[x_1,x_2,\ldots,x_n]$, of elements that 
come from a total order.
If we let $\mathcal{R}=\{R_1,R_2,\ldots,R_{\rho(A)}\}$ 
be a partition of~$A$ into a set of maximal increasing or decreasing \emph{runs}
(i.e., subsequences of consecutive elements), and we
let $r_i$ denote the size, $|R_i|$, of the $i$-th run in $A$
(so $\sum_{i=1}^{\rho(A)} r_i \,=\, n$), then
the \emph{run-based entropy}, $\mathcal{H}({A})$,
for $A$ is defined as 
follows:\footnote{All logarithms used 
               in this paper are base $2$ unless otherwise stated.}
\[
\mathcal{H}({A}) = \sum_{i=1}^{\rho(A)} (r_i/n) \log (n/r_i).
\]
As it turns out, researchers have proven that any comparison-based sorting
algorithm has a lower bound for its running time
that is $\Omega(n(1+\mathcal{H}({A})))$, and there are several sorting algorithms
that run in $O(n(1+\mathcal{H}({A})))$ time;
see, e.g.,~\cite{munro18,juge24,
gelling,auger2019,buss19,
Takaoka09,schou2024persisort,BARBAY2013109}.
Interestingly, all of these algorithms are
stack-based natural mergesort algorithms, where we start from the runs and implement
a type of mergesort that uses a stack to determine which runs to merge.
In fact, two of these algorithms, TimSort and PowerSort, have
become widely used in practice, e.g., as the default sorting algorithms in
Python, Java, and other major programming languages.
As their description suggests, these algorithms maintain a stack of runs
and decide when to merge them based on their
lengths and/or position in order to take advantage of the
pre-sortedness of the input data. Unfortunately, none of these algorithms are strictly in-place, i.e., 
they use $\omega(1)$ additional memory cells besides
the $n$ cells used for the input array. 
In fact,
their stack
alone can be as large as $\Omega(\log n)$.
Thus, none of these algorithms are suitable for embedded systems where we wish to maintain
only an $O(1)$-sized amount of information in addition to the input, which is the subject of this paper.
Alternatively, none of the known strictly in-place sorting algorithms, such as heapsort, are entropy-sensitive.

\paragraph{Prior Related Work.}
The existing stack-based natural mergesort algorithms generally
consider the top $k$ runs on the
stack 
and merge adjacent runs (using the standard sorted-list merge algorithm)
according to some rule, where $k\ge 2$ is a constant that depends on the algorithm.
Munro and Wild~\cite{munro18} introduce 
PeekSort and PowerSort, which use rules based on powers of $2$,
and achieve an instance-optimal running time of 
$O(n(1+\mathcal{H}({A})))$.
Auger, Jug{\'e}, Nicaud, and Pivoteau~\cite{auger2015,auger2019} study
the popular TimSort algorithm~\cite{tim}, 
whose stack-based combination rule is
loosely based on the Fibonacci sequence, and show that
it is also instance-optimal.\footnote{TimSort was recently replaced
   by PowerSort in the CPython reference implementation 
   of Python~\protect\cite{gelling}.}
Other stack-based natural mergesort algorithms
with this running time
include the Adaptive ShiversSort of Jug{\'e}~\cite{juge24};
the Multiway PowerSort of
Gelling, Nebel, Smith, and Wild~\cite{gelling}; and the $\alpha$-MergeSort~\cite{buss19} of Buss and Knop. 
Takaoka’s MinimalSort~\cite{Takaoka09},
Schou and Wang's PersiSort~\cite{schou2024persisort},
and adaptive search-tree sorting by
Barbay and Navarro~\cite{BARBAY2013109} take different approaches yet still achieve instance-optimality.
Furthermore,
TimSort and PowerSort have become the reference sorting algorithms for popular software libraries, including for Python, Java, and Swift~\cite{gelling}.

There are two main challenges for implementing
stack-based  mergesort algorithms to be strictly in-place.
The first, which has been extensively studied, is how to merge two sorted subarrays in-place.
Fortunately, there are existing methods that perform linear-time merging strictly in-place; 
see, e.g.,~\cite{CHEN200634,DALKILIC20111049,CHEN2003191,ellis,GEFFERT2000159,huang1992fast,huang2,kat,wong1981some}.
Even with an in-place merge method, however,
we 
must overcome the second challenge to 
implementing stack-based mergesorts strictly in-place---\emph{the stack itself}---which can be as deep
as $\Omega(\log n)$.

\paragraph{Our Results.}
We define a simple 
\emph{walk-back algorithm}, which can implement
any stack-based mergesort algorithm to be strictly in-place.
In this algorithm, we only maintain
the lengths of a constant number of runs on the stack plus 
at most $O(1)$ additional metadata.
If we ever need to determine
the length of a run that occurs deeper in the stack,
we simply walk backwards down the array to recover it.   
Also, if the algorithm used to merge runs 
is stable~\cite{huang1992fast,DALKILIC20111049,CHEN2003191,ellis} 
(i.e., constructed so that equal elements retain their relative order 
from the input array, $A$), 
then the resulting in-place natural mergesort is also stable.

We call a stack-based natural mergesort algorithm, $\mathcal{S}$,
\emph{walkable} if applying the
walk-back algorithm to $\mathcal{S}$ increases the running time of $\mathcal{S}$ only
by a constant factor. 
Admittedly, this definition hides some important details about our analysis of the walk-back algorithm.
In \Cref{sec:walkable}, we give our main result, showing that PowerSort~\cite{munro18} is 
walkable. 
In Appendices~\ref{sec:shivers-appendix} and~\ref{sec:additional-proofs}, 
we show that several other stack-based mergesorts are walkable, 
including $c$-Adaptive ShiversSort, ShiversSort, $\alpha$-StackSort, and 2-MergeSort~\cite{juge24,buss19}. 
In \Cref{sec:counter}, we give a negative result and show that the well-known adaptive sorting algorithms TimSort\footnote
{We show that the original version of TimSort is walkable despite having a bug~\cite{de2015openjdk}. 
In fact, the bug concerns the failure to allocate enough memory to maintain the stack of runs.
See \Cref{sec:additional-proofs}.} ~\cite{tim,auger2015,auger2019}
and $\alpha$-MergeSort~\cite{buss19}
are \textit{not} walkable.
%
%

Our negative results motivate our second simple algorithm, which we call the 
\emph{jump-back algorithm}, that can also be used to implement
virtually all stack-based mergesorts in-place.
In this algorithm, we only maintain
the lengths of a constant number of runs on the stack plus 
at most $O(1)$ metadata, and we encode the lengths of
runs in-place in the runs themselves using simple bit-encoding methods.
If we ever need to determine
the length of a run that occurs deeper in the stack,
we decode its length from its bit encoding in $O(\log n)$ time, 
which then allows us to jump to the beginning of the run (and possibly
repeat this lookup for the next earlier run).
In \Cref{sec:encoding},
      we show that 
      we can implement any stack-based natural mergesort algorithm
      to be in-place using the jump-back algorithm while only increasing
      its running time by a small constant factor,
which is a property we call \emph{jumpable}.
Combining our main contributions with known results for 
the running times of
stack-based mergesort algorithms
and in-place merge algorithms
implies the first strictly in-place
sorting algorithms whose running times are 
instance-optimal with respect to run-based entropy. 

We support our theoretical results with experiments found in \Cref{sec:experiments}.
%
%


\section{Preliminaries}
Let us begin with some preliminaries for stack-based mergesort
algorithms.
Let $A$ be an input array of $n$ elements.
We define a run as either a non-decreasing or strictly decreasing subsequence of
elements in $A$.
All decreasing runs can easily be flipped in-place
to be increasing runs in an initial
linear-time sweep of $A$
without affecting stability, so, without loss of generality,
we only consider non-decreasing runs.

We can separate all stack-based mergesort algorithms into two phases.
First is the \emph{main loop}, in which we sweep $A$ from left-to-right, pushing
new runs into our stack.
Each algorithm defines
and is distinguished by
a merge policy, $\mathcal{P}$.
Whenever we push a new run into
our stack and $\mathcal{P}$'s invariant is violated, 
we perform a sequence of merges of adjacent runs in
our stack determined by $\mathcal{P}$
until the invariant is restored.
We call this a \emph{merge sequence}.
After processing all runs in $A$, the main loop exits and we perform a
\emph{collapse} phase on the remaining runs in the stack, where we
repeatedly merge the top two runs in the stack until
only one run remains, our final sorted list.
We use the convention that runs closer to the top of the stack
are labeled with lower indices%
, such that
the run at the top of the stack takes label $R_1$.
We use $r_i$ to denote the length of run $R_i$,
i.e., $r_i=|R_i|$.
We present pseudocode of a generic stack-based mergesort in
Algorithm~\ref{alg:generic-stack}.
See Figure~\ref{fig:stack}.

\begin{algorithm}[hbt]
	\caption{A Generic Stack-Based Mergesort Algorithm with Merge Policy, $\mathcal P$}\label{alg:generic-stack}
	\begin{algorithmic}[1]
		\State{\textbf{Input:} array $A$ of size $n$. Merge policy $\mathcal P$.}
        \State{\textbf{Output:} sorted array $A'$.}

        \State $\mathcal R \gets$ run decomposition of $A$
        \State $S \gets \emptyset$ \Comment{the stack of runs}
        \While{$\mathcal R$ is not empty} \Comment{main loop}
			\State Push next run from $\mathcal R$ onto $S$
            \While{$S$ violates $\mathcal P$'s invariant}
            \label{line:merge-policy-loop} \Comment{algorithm-specific merge policy}
                \State Merge a pair of adjacent runs in $S$ 
                   according to $\mathcal P$
            \EndWhile
        \EndWhile
        \While{$|S| > 1$} \Comment{collapse phase}
            \State Pop $R_1$ and $R_2$ from $S$
            and push \textsc{Merge}$(R_1,R_2)$ onto $S$
        \EndWhile
        \State \Return $\text{pop}(S)$
	\end{algorithmic}
\end{algorithm}

\begin{figure}[b]
  \centering
  \vspace*{-\medskipamount}
  \includegraphics[width=0.8\linewidth]{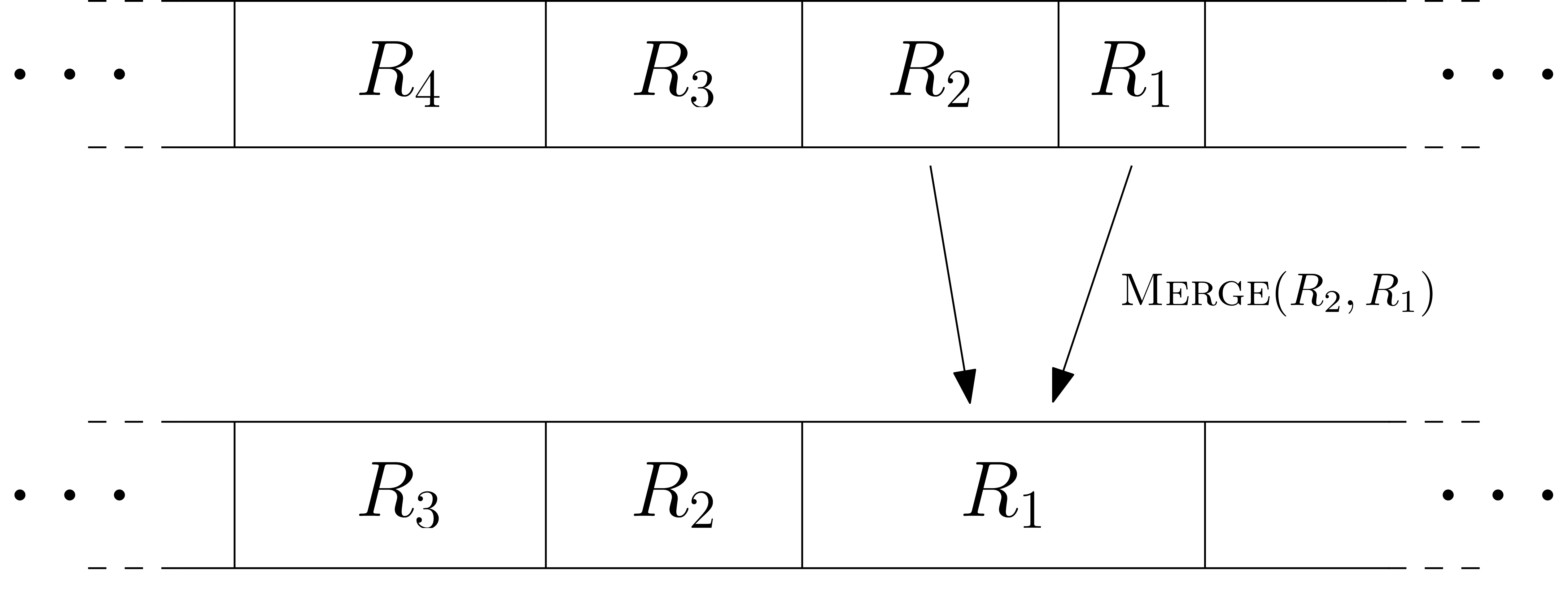}
  \caption{A merge step according to a merge policy, $\mathcal{P}$. 
  The merged run takes the label of the higher run. Merging decreases
  the labels of all runs below the two merged runs.}
  \label{fig:stack}
\end{figure}

\vspace*{-\medskipamount}

For ease of expression, we may still use the term ``stack'' 
when referring to a given stack-based mergesort, even though
our algorithms do not explicitly maintain all of $S$.
We review the merge policies of specific stack-based 
mergesorts as needed throughout the paper.

\subsection{Shallow Mergesorts}

Interestingly, most instance-optimal mergesorts have merge policies that
only consider a constant number, $k$, of runs from the top of the stack. 
Buss and Knop~\cite{buss19} made a similar observation, 
which led them to formulate the
notion of \emph{$k$-aware} mergesorts. 
A stack-based mergesort is 
\emph{$k$-aware} if its merge conditions 
are based on the run lengths of the top $k$ runs in the stack 
and it only merges runs 
in the top $k$ entries of the stack. For our purposes, 
we can also allow merge conditions to operate on other metadata or properties 
of runs that can be recovered in constant time,
such as their position in the array, as well as other global data,
provided all of this fits in $O(1)$ memory cells. 
We call this relaxed definition \emph{almost-$k$-aware}. 
Any $k$-aware algorithm is trivially almost-$k$-aware.%

These definitions suggests a very simple way to make 
almost-$k$-aware algorithms in-place:
replace our stack $S$ with a ``shallow stack'' that only 
maintains the top $k$ runs.
Of course, as runs are merged, other runs whose lengths we have forgotten
enter the top $k$ entries of $S$. We must 
determine the length of these forgotten runs to
test whether the merge policy's invariant
is still violated.
In this paper, we propose two simple algorithms that efficiently 
do so, the walk-back algorithm and
the jump-back algorithm.
They are described respectively in \Cref{sec:walkable} and
\Cref{sec:encoding}. 
If we can efficiently implement some
stack-based mergesort, $\mathcal S$, in-place
using a shallow stack, then we call $\mathcal S$ a 
\emph{shallow} mergesort.
We summarize our results and related properties of known 
algorithms in Table~\ref{tab:summary}.

\begin{table*}[b]
  \begin{center}
  \begin{tabular}{|c|c|c|c|c|}
    \hline
    {\bf Algorithm} & {\bf Shannon-Optimal} & {\bf Walkable} & {\bf Jumpable} & {\bf Shallow}\\
    \hline
    PowerSort~\cite{munro18} & $\checkmark$ & $\checkmark$ & $\checkmark$ & $\checkmark$ \\
    \hline
    $c$-Adaptive ShiversSort~\cite{juge24} & $\checkmark$ & $\checkmark$ & $\checkmark$ & $\checkmark$   \\
    \hline
    TimSort~\cite{tim,auger2019} & $\checkmark$ & \ding{55} & $\checkmark$ & $\checkmark$  \\
    \hline
    $\alpha$-StackSort~\cite{buss19} & \ding{55} & $\checkmark$ & $\checkmark$ & $\checkmark$ \\
    \hline
    $2$-MergeSort~\cite{buss19} & \ding{55} & $\checkmark$ & $\checkmark$ & $\checkmark$ \\
    \hline
    $\alpha$-MergeSort ($\varphi < \alpha < 2$) \cite{buss19,ghasemi2025} & $\checkmark$ & \ding{55} & $\checkmark$ & $\checkmark$\\
    \hline
    ShiversSort~\cite{buss19} & \ding{55} & $\checkmark$ & $\checkmark$ & $\checkmark$ \\
    \hline
  \end{tabular}
\end{center}
  \caption{A summary of our results applied to selected stack-based mergesorts. 
  }
  \label{tab:summary}
\end{table*}

\section{The Walk-Back Algorithm}\label{sec:walkable}
In this section, we define the walk-back algorithm and 
show that a variety of stack-based mergesorts are \emph{walkable},
i.e., applying the walk-back algorithm to them increases 
runtime by no more than a constant factor. 
The walk-back algorithm implements a shallow stack like so: whenever
we must determine the length of a run for a comparison, we start 
at the beginning of the run above it in the stack and walk backwards. 
See \Cref{fig:walking-backwards}. 

\begin{figure}[hbt]
    \centering
    \includegraphics[width=.8\linewidth]{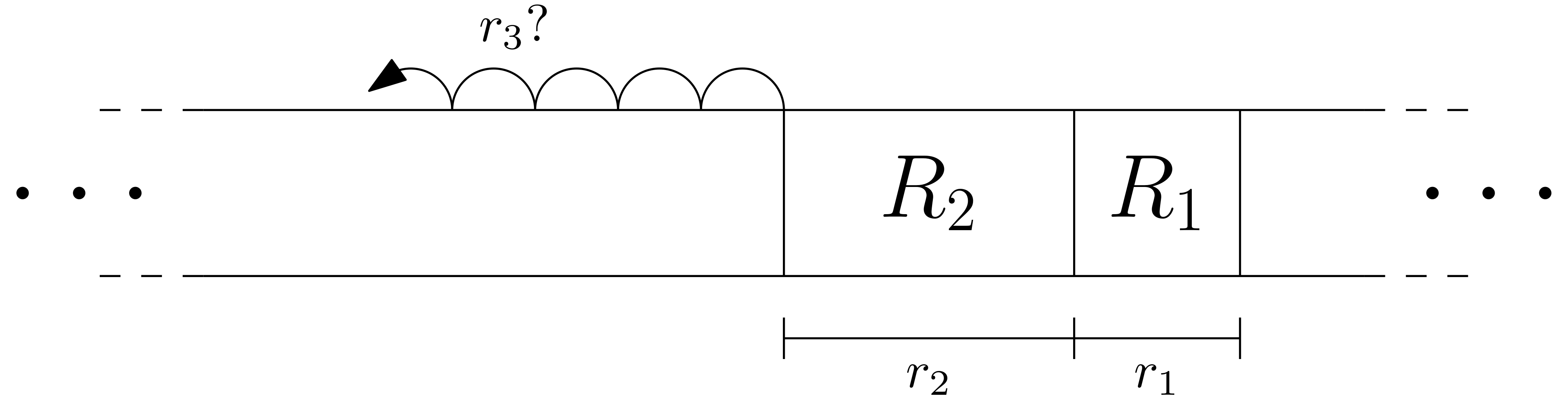}
    \caption{The walk-back algorithm applied to an almost-3-aware
    mergesort. 
    We currently know the lengths of the runs labeled $R_1$ and
    $R_2$. 
    There may be $O(\log n)$ runs to the left of $R_2$ in the array (our
    ``stack'' of runs), but we do not maintain any information about them.
    Whenever we need to know 
    the length of $R_3$, we start walking backwards from the start of $R_2$. Surprisingly, this strategy only increases runtime by a constant factor in several stack-based mergesorts.}
    \label{fig:walking-backwards}
\end{figure}

\subsection{The Walk-Back Procedure and Stopping Condition}
We keep walking until either (1) we determine the length of the run in
question or (2) we walk far enough to confirm 
whether the given comparison is false. 
We call (2) our \emph{stopping condition}. It varies by algorithm.
If we meet our stopping condition, we forget any information about how far we walked and retry from the
beginning the next time we invoke the walk-back algorithm. 

In the remainder of this section, we prove that two instance-optimal algorithms, 
PowerSort and $c$-Adaptive ShiversSort, are walkable.
In combination with an in-place merge algorithm,
this produces the first in-place instance-optimal mergesort algorithm.

Let us denote runs from the initial run decomposition $\mathcal R$ as \emph{original runs},
and those obtained as the result of merges as \emph{intermediate runs}.
Let $m$ be the sum of the sizes of all intermediate runs; 
the sum of the sizes of all original runs is $n$ by definition. For any stack-based mergesort algorithm $\mathcal S$, 
its runtime $T_\mathcal S(n) \geq m + n$. 
This is because, for all intermediate runs $i$, 
we must have spent $|i|$ time to merge the two runs that became $i$. 
In addition, any sorting algorithm must spend $n$ time just to read the input. 
If we can show that the walk-back algorithm applied to $\mathcal S$ 
adds no more than $O(m + n)$ time over
any input sequence, 
then we have shown that $\mathcal S$ is walkable. 

\begin{observation}\label{obs:always-know}
    The length of the topmost run, $r_1$, is always known without need of
    walking back during the execution of any stack-based mergesort,
    $\mathcal{S}$.
\end{observation}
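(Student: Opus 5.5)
We prove the observation by induction on the steps of the algorithm. The invariant is that, at every moment, we store $r_1$ explicitly, or equivalently the array index $s_1$ at which $R_1$ begins. The key structural fact is that the runs on the stack occupy a contiguous prefix of $A$ in stack order. $R_1$ is the rightmost run of this prefix and ends exactly at the current scan position $p$, the index just past the last element consumed by the main loop. Hence $r_1 = p - s_1$, and storing $s_1$ together with $p$ takes $O(1)$ memory cells.

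We check that the invariant survives each operation that changes the top of the stack.
\begin{itemize}
\item \emph{Push.} The main loop identifies the next run by scanning forward from $p$ until the run ends at $p'$. The new top run therefore has length $p'-p$, which is computed during the scan. We set $s_1 \gets p$ and $p \gets p'$.
\item \emph{Merge not involving $R_1$.} Merges occur only between adjacent runs. If the merge involves $R_i$ and $R_{i+1}$ with $i\ge 2$, then $R_1$ is unchanged and so is $r_1$.
\item \emph{Merge involving $R_1$.} If $R_1$ and $R_2$ are merged, the new top run occupies the whole interval from the start of $R_2$ to $p$, so its length is $r_1 + r_2$. Here $r_2$ is known, because the merge was triggered by a policy test that already determined $r_2$, possibly by walking back. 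Alternatively, the in-place merge procedure itself is told the boundary of $R_2$ and so knows where $R_2$ begins.
\item \emph{Collapse phase.} This phase repeatedly performs the $R_1,R_2$ merge, which is the case just handled.
\end{itemize}

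The only subtle point is the last merge case: we must argue that the start of $R_2$ is available whenever $R_1$ and $R_2$ are merged. This holds because no algorithm can merge two runs without knowing their boundary. Once that boundary is known, $r_1$ follows without any additional walking.
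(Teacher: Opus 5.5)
Your proof is correct and follows essentially the paper's reasoning. The paper states this observation without proof, but in proving that $r_2$ is always known for PowerSort it uses the same case split: a top run is either freshly pushed, so its length comes from the forward scan, or is the result of a merge whose input boundaries had to be known. In the $R_1$--$R_2$ merge case, rely on your second justification (a merge must be given where $R_2$ begins), since that holds for every policy, whereas not every policy tests $r_2$ before merging.
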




\begin{lemma}\label{lem:walkable-collapse}
The collapse phase of any stack-based mergesort $\mathcal S$ can be implemented in-place
using the walk-back algorithm with only a constant-factor increase in runtime.
\end{lemma}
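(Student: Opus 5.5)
The plan is to charge each walk-back in the collapse phase to the merge that immediately follows it. In the collapse phase we repeatedly merge $R_1$ and $R_2$. By \Cref{obs:always-know}, $r_1$ is known. The run $R_2$ is the run immediately to the left of $R_1$ in the array, and it ends exactly where $R_1$ begins. So before each merge we recover $r_2$ by walking back from the start of $R_1$. We stop at the first index $j$ such that $A[j-1] > A[j]$, or when we reach the left end of the array. This walk costs $O(r_2)$ time and needs only $O(1)$ extra cells. We then merge $R_1$ and $R_2$ in place, which costs $\Theta(r_1+r_2)$, so the walk at most doubles the cost of that merge.

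The main subtlety is whether walking back to the first descent actually finds the true left boundary of $R_2$. Adjacent runs on the stack could happen to concatenate into a non-decreasing sequence, for example if the last element of $R_3$ is at most the first element of $R_2$. In that case the walk overshoots into $R_3$, and we would misidentify $r_2$.

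I would handle this first by arguing that overshooting is harmless for the collapse phase. Suppose the walk lands in $R_3$. Then $R_3R_2$ is already a sorted contiguous block, and merging $R_1$ with this longer block produces exactly the array state that the sequence of merges ($R_1$ with $R_2$, then the result with $R_3$, and so on through the absorbed runs) would produce. The collapse phase only ever merges the top two runs, so its final output is the same sorted array. For the running time, the walk and the merge together cost $O(r_1 + \ell)$, where $\ell$ is the length walked. This is no more than the cost of performing those absorbed merges explicitly, since each of them costs at least the size of the runs involved. Every element of the array is walked over only in the walk that precedes its run's merge into the top run, so the total extra work is $O(m+n)$.

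If stability or an exact simulation of $\mathcal{S}$ is required, I would fall back on charging each walk directly to the merge it precedes. The collapse phase performs at most $\rho(A)-1$ merges. The walk preceding the merge of $R_1$ and $R_2$ costs $O(r_2)$, and the merge itself costs $\Omega(r_1+r_2)$. Summing over all collapse merges, the walking adds at most a constant factor to the collapse-phase cost, which proves the lemma.
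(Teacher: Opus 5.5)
Your proposal is correct and follows the paper's one-line argument: $r_1$ is always known, so recovering $R_2$ costs $O(r_2)$, and this is charged to the immediately following merge of cost $\Theta(r_1+r_2)$ (and, as you observe, each element is walked over at most once in the whole collapse phase anyway). Your treatment of walks that overshoot a run boundary goes beyond the paper, which silently assumes boundaries are detectable. For maximal original runs every boundary between adjacent stack runs $X,Y$ is in fact a strict descent ($\max X \ge a > b \ge \min Y$, where $a$ and $b$ are the original elements on either side of the boundary). When a boundary is not a descent (e.g., after flipping decreasing runs), your overshoot argument already suffices and even preserves stability with a stable merge. So your fallback paragraph is unnecessary, and it would not help in the overshoot case anyway, since it presupposes the boundary can be found.
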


The proof follows from the fact that the cost of walking back to recover the
last two runs is proportional to the cost to merge them.
Because of the above lemma,
we only consider the main loop of each stack-based
mergesort for the remainder of this work. 
\subsection{PowerSort is Walkable}\label{sec:powersort-body}

In this section, we show that PowerSort, an instance-optimal stack-based
mergesort, is walkable
by seeing how the walk-back cost is `paid for' by the cost already spent on
merges.
In particular, we will analyze the walk-back cost of successful merges,
determine a stopping condition, and use this stopping condition to compute the
walk-back cost of failed merges.

Munro and Wild's PowerSort~\cite{munro18} assigns every run a
\emph{power} value, computed jointly with the run who sits above it
on the stack.
These power values are used to determine whether the second and third
runs on the stack should be merged.
%
To understand how runs are assigned powers, imagine superimposing a perfect
binary tree on top of the input array, where each index in the array is
associated with a binary tree node.\footnote{We assume for simplicity that the input
is of size $n = 2^a - 1$ for some $a$, although the result holds for
general $n$.}
Each index in the array is associated with a power value, where the power of an
index is the depth of the corresponding tree node.
The center index has a power 0, the first and third quartiles have a power 1,
and so on down to $\lfloor \log{n} \rfloor$.
Each run is associated with the smallest power value of any index between its
midpoint and the midpoint of the next run, which we refer to as the runs
\emph{power interval} $I$.
See Figure~\ref{fig:power} for an example of how power values appear and how to compute them.
Note that \cite{munro18} define power slightly differently, and in general
those power values are one greater than ours.
The resulting algorithm is the same, however.
We provide a simplified sketch of
PowerSort's merge policy in \Cref{alg:powersort}.

\begin{algorithm}[b]
	\caption{PowerSort's Merge Policy (see \Cref{line:merge-policy-loop} of \Cref{alg:generic-stack})}\label{alg:powersort}
	\begin{algorithmic}[1]
		\If{$|S| > 1$}
			\State $p_2 \gets \Call{NodePower}{R_2, R_1}$
            \Comment{$p_3$ already known}
			\While{$|S| > 2$ and $p_3 > p_2$}
			\State $R_2 \gets \Call{Merge}{R_3, R_2}$ 
            \State $p_3 \gets \Call{NodePower}{R_3, R_2}$
			\EndWhile
            \State $p_3 \gets p_2$
		\EndIf
	\end{algorithmic}
\end{algorithm}

\begin{figure}
    \centering
    \includegraphics[width=0.7\linewidth]{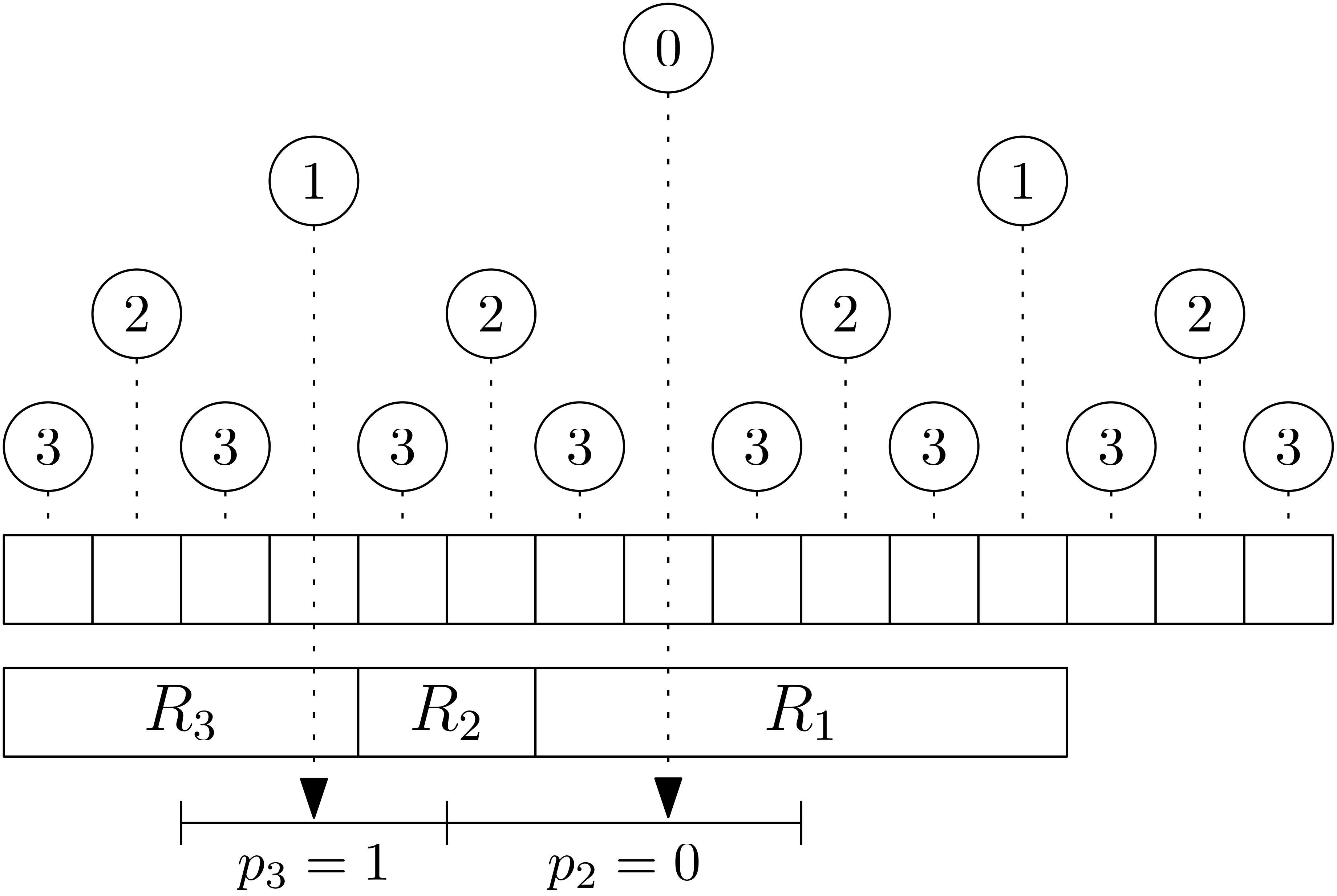}
    \caption{
		Each index in the array is associated with a \emph{power} value, where
		the power of an index is the depth of the corresponding tree node
		(depicted above it).
		Each run is associated with the smallest power value of any index
		between its midpoint and the midpoint of the next run, i.e. its \emph{power interval} $I$.
		The power intervals $I_3$ and $I_2$ define powers $p_3$ and $p_2$. They are depicted under their respective runs.
		$R_1$, being the top run, does not yet have an assigned power.
	}
    \label{fig:power}
\end{figure}

One barrier to applying the walk-back algorithm is that Munro and Wild's original algorithm does not recompute power values when merges occur. 
Since as many as $\Omega(\log n)$ power values 
may be stored at a time in the original PowerSort algorithm, 
maintaining a shallow stack necessitates recomputing power values as 
runs rise to the top of the stack.
It could be that such recomputed power values do not equal their original values,
causing the behavior of a shallow PowerSort to differ 
from standard PowerSort.
Fortunately, we show that this is not the case and power values computed on-the-fly are faithful to their original values.
\begin{lemma} \label{lem:powers-unchanged}
	The merge step of PowerSort (line 4 of \Cref{alg:powersort}) does not change
	the power of the second run, nor the power of any existing run.
\end{lemma}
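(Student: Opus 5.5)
We argue directly from the definition of power as the minimum node depth over a power interval, and use the loop condition of \Cref{alg:powersort} to control how the intervals change. Write the relevant runs left to right as $R_3, R_2, R_1$ with midpoints $m_3 < m_2 < m_1$, and let $R_4$ be the run below $R_3$, with midpoint $m_4 < m_3$. The power of a run $R$ with successor $R'$ is the minimum depth of any index in the half-open interval between their midpoints. We will write this as $\mathrm{pw}(m, m')$ with $m<m'$.

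A merge of $R_3$ and $R_2$ is performed only when $p_3 = \mathrm{pw}(m_3,m_2) > p_2 = \mathrm{pw}(m_2,m_1)$. The merged run $R' = R_3 \cup R_2$ has midpoint $m'$ with $m_3 \le m' \le m_2$. The merge only affects the powers of runs whose interval touches $R'$: the power of $R_4$ and the power of the new second run $R'$. The runs below $R_4$ keep the same endpoints and the same successors, so their powers trivially do not change.

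\textbf{Power of $R'$.} Its new interval is $[m', m_1] \supseteq [m_2, m_1]$, so its power is at most $p_2$. The extra part $[m', m_2) \subseteq [m_3, m_2)$ contains only indices of depth at least $p_3 > p_2$. Hence the minimum is still $p_2$, which is realised inside $[m_2,m_1]$. This is what ``does not change the power of the second run'' means: the value $p_2$ that was stored for the pair $(R_2,R_1)$ now belongs to the merged run, and recomputation agrees with it.

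\textbf{Power of $R_4$.} Here lies the main obstacle. The interval of $R_4$ changes from $[m_4,m_3]$ to $[m_4,m']$, and we must show the extra part $(m_3,m']$ contains no index shallower than $p_4 = \mathrm{pw}(m_4,m_3)$. The extra part is contained in $[m_3,m_2)$, where every depth is at least $p_3$. So it suffices to show $p_4 \le p_3$. This is the standard PowerSort invariant: the stored powers are weakly increasing toward the top of the stack. It holds because whenever a run is pushed with power $p$, the loop first merges away every run below it with power greater than $p$. Since merges keep powers unchanged (we will use this inductively), the invariant is preserved.

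Formally, we would prove the lemma and the monotonicity invariant together by induction over the sequence of pushes and merges. Each merge uses the invariant to fix the power of $R_4$ and the loop guard to fix the power of $R'$. The invariant is then re-established because the powers after the merge are exactly the old ones with $p_3$ removed.

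One subtlety we would check is the tie case, where the minimum-depth index is shared by two adjacent intervals. A perfect binary tree has a unique shallowest index in any interval, so the depth $p_3$ node lies in $[m_3,m_2)$ and not in $(m_3,m']$ only if that interval is strictly deeper. Our argument needs only the inequality ``depth $\ge p_3 \ge p_4$'' anyway, so ties do no harm.
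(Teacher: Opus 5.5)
Your proposal is correct and follows essentially the same argument as the paper: both the merged run's interval and $R_4$'s interval grow only into the old interval $I_3$, where every index has depth at least $p_3$. Since $p_3 > p_2$ by the loop guard and $p_3 \ge p_4$, neither minimum changes. The one addition in your version is that you prove $p_4 \le p_3$ from the monotone-stack invariant, established jointly with the lemma by induction; the paper only asserts this inequality, in its figure caption, so your version is a useful tightening rather than a different route.
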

\begin{proof}
	See \Cref{fig:dont-change}.
    A successful merge combines runs $R_2$ and $R_3$ where $p_3 > p_2$.
    Let $I_2'$ denote $R_2$'s power interval after the merge.
    Since merging only extends $I_2$ leftward into what was $I_3$ and all indices in $I_3$ have power $\geq p_3 > p_2$, 
    the power of $R_2$ remains $p_2$.
    Similarly, $R_4$'s interval extends rightward into $I_3$, gaining only indices with power $\geq p_3 \geq p_4$,
    so its power is unchanged.
    Deeper runs are unaffected.
    \qed
\end{proof}

\begin{figure}[t]
    \centering
    \includegraphics[width=0.7\linewidth]{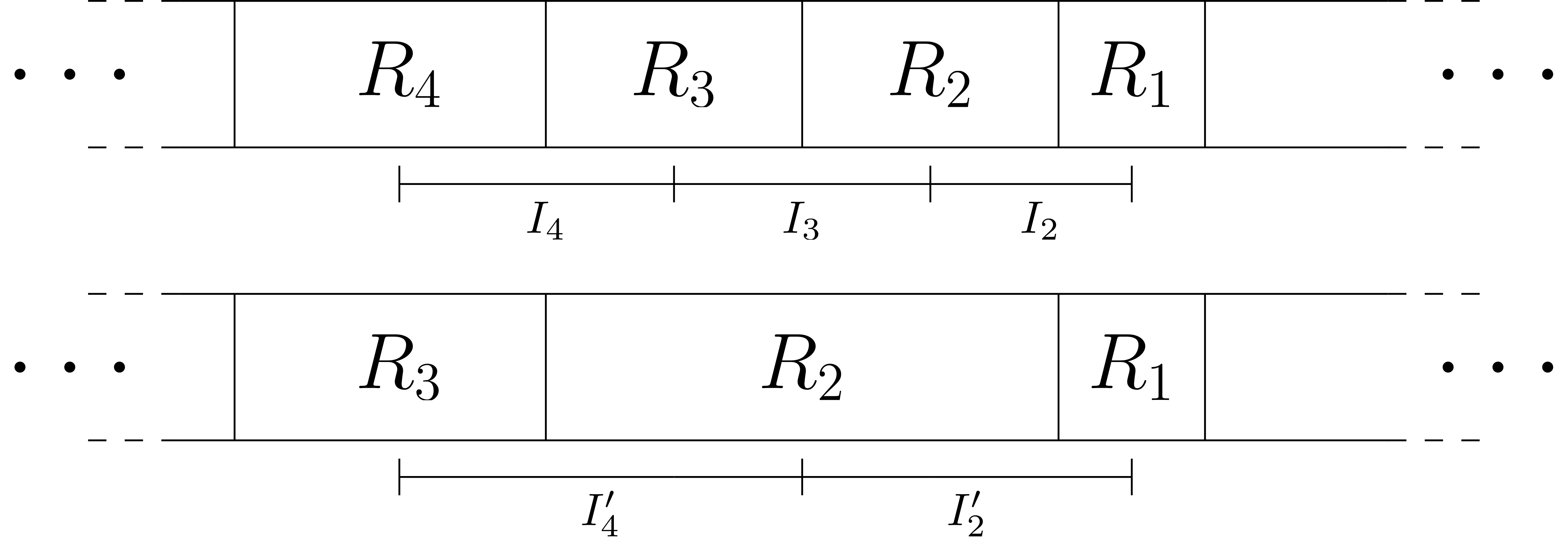}
    \caption{Merging
	runs $R_2$ and $R_3$
	causes $I_2$ and $I_4$ to extend into what used to be $I_3$. 
    Nevertheless, when $R_2$ and $R_3$ are merged, $p_3 > p_2$ and $p_3 \geq p_4$.  Therefore, the powers of $I'_4$ and $I'_2$ do not change.}
    \label{fig:dont-change}
\end{figure}

From Lemma~\ref{lem:powers-unchanged}, it follows that
PowerSort is almost-$3$-aware since we can correctly recompute all power values
with only knowledge of their \textit{current} power intervals. 
Given the lengths of the top-3 runs and the position of the topmost run, 
we can compute power intervals in constant time. 
By \Cref{obs:always-know}, we always know $r_1$. It costs at most $r_2 + r_3$ time to recover the lengths of the runs labeled $R_2$ and $R_3$ by walking back. 
If the merge condition is successful, 
then the merge cost of $R_2$ and $R_3$ equals our walk-back cost, 
so the total walk-back cost in this case is at most $m$. 
Thus, we have shown that successful merges ``pay'' for their own walk-back cost.
It remains to bound the walk-back cost of failed merges. 
To do so, we next develop a stopping condition for walk-back PowerSort.
We begin with the following observations about powers.

\begin{observation}
	Each power $p$ is evenly spaced in the sequence.
	Two indices with the same power $p > 0$ are separated by
	at least $n/2^p$ elements.\footnote{
    The distance here and in other observations may be off by
	at most one depending on the exact input size $n$.}
\end{observation}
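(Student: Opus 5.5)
The plan is to make the implicit tree explicit and read the claim off the arithmetic of heap indices. Take $n = 2^a - 1$, as in the paper's simplifying assumption, and index the array positions $1,\dots,n$. A position $x$ sits at depth $p$ of the superimposed perfect binary tree exactly when $x$ is an odd multiple of $2^{a-1-p}$. That is, $x = (2j+1)\,2^{a-1-p}$ for some integer $j \ge 0$. I would derive this first, by induction on $p$. The root at depth $0$ is the center index $2^{a-1}$. The in-order positions of the children of a depth-$p$ node at $(2j+1)2^{a-1-p}$ are obtained by adding and subtracting $2^{a-2-p}$. This gives $(4j+1)2^{a-2-p}$ and $(4j+3)2^{a-2-p}$, which are both odd multiples of $2^{a-2-p}$. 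So every node at depth $p+1$ has the required form. Counting shows the converse as well: there are exactly $2^p$ odd multiples of $2^{a-1-p}$ in $[1,n]$, which matches the $2^p$ nodes at depth $p$.

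From this description the observation is immediate. Indices of power $p$ form the arithmetic progression $2^{a-1-p}, 3\cdot 2^{a-1-p}, 5\cdot 2^{a-1-p}, \dots$. So they are evenly spaced with common gap $2^{a-p} = (n+1)/2^p$. This gap is at least $n/2^p$, and it exceeds $n/2^p$ by less than one, consistent with the footnote. The restriction $p > 0$ only matters because power $0$ has a single index, so the statement is vacuous there.

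For general $n$, I would appeal to the paper's convention that positions are normalized to $[0,1)$, as in Munro and Wild's definition. The power of position $x$ is then determined by the first bit at which the binary expansions of the normalized positions differ. Equivalently, the power-$p$ points are the odd multiples of $2^{-p-1}$, scaled by $n$ and rounded. These points are again evenly spaced at distance $n/2^p$ before rounding, and rounding perturbs each distance by at most one. This is the only mildly delicate step. I would handle it by bounding the rounding error of each endpoint by $1/2$ and noting that the footnote already permits an off-by-one discrepancy.
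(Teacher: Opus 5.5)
Your proof is correct and takes the same route as the paper. The paper simply asserts that the observation follows directly from the superimposed perfect-binary-tree description of powers, and your in-order computation is the explicit version of that claim: the power-$p$ indices are the odd multiples of $2^{a-1-p}$, which gives a common gap of $2^{a-p}=(n+1)/2^p$. One small remark is that the $[0,1)$-normalization you use for general $n$ is Munro and Wild's convention, not something this paper states; the paper only asserts without proof that the result extends beyond $n=2^a-1$, and your rounding argument for that case is consistent with its off-by-one footnote.
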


\begin{observation} \label{obs:power-interval}
    The distance between any index with power 
    $p$ and the closest
    index with power less than $p$ is $n/2^{p + 1}$ 
    on either side.
	Consequently, any range of $n/2^{p + 1}$ indices contains either a power $p$
	or a power less than $p$.
\end{observation}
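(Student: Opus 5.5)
The plan is to make the tree picture concrete through binary arithmetic on the indices and then read both claims off divisibility. As in the paper, I will assume $n = 2^a - 1$ and number the array positions $1,\dots,n$. These positions are the in-order traversal of a perfect binary tree of height $a-1$. The first step is to show by induction on depth that index $i$ sits at depth $p(i) = a-1-\nu_2(i)$, where $\nu_2(i)$ is the exponent of the largest power of two dividing $i$:
\begin{itemize}
\item the root is the middle index $2^{a-1}$, which has depth $0$;
\item the two subtrees of a node are the two halves of its index range, so each child sits at the midpoint of a half and has one fewer factor of two than its parent.
\end{itemize}
Setting $k = a-1-p$, this gives two descriptions:
\begin{itemize}
\item the indices of power exactly $p$ are the odd multiples of $2^k$;
\item the indices of power strictly less than $p$ are the multiples of $2^{k+1}$.
\end{itemize}

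For the first claim, take an index of power $p$, say $i = (2j+1)2^k$. Its nearest multiples of $2^{k+1}$ are $2j\cdot 2^k$ on the left and $(2j+2)2^k$ on the right. Each lies at distance exactly $2^k$. No multiple of $2^{k+1}$ lies strictly between them, because any two such multiples differ by at least $2^{k+1}$. Since $2^k = 2^{a-1-p} = (n+1)/2^{p+1}$, the distance is $n/2^{p+1}$ up to the off-by-one the footnote allows.

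There is a boundary case: one of these neighbours may be $0$ or $2^a = n+1$, which lies just outside the array. I would handle this by treating positions $0$ and $n+1$ as sentinel positions of power $-1$, the ancestors of the whole tree. With that convention the statement "on either side" holds literally. This boundary case is the only delicate point, and the sentinel convention, consistent with the footnote's tolerance, is how I would resolve it.

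For the consequence, note that the indices of power $p$ or less are exactly the multiples of $2^k$, since these are the odd multiples of $2^k$ together with the multiples of $2^{k+1}$. Any $2^k$ consecutive integers cover every residue class modulo $2^k$, so they contain a multiple of $2^k$. If the range lies inside $1,\dots,n$, that multiple is a genuine array index, and it has power at most $p$. Since $2^k$ matches $n/2^{p+1}$ within one, this proves that every range of that many indices contains a power $p$ or a smaller power.
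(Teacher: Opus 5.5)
Your proof is correct, and it is essentially the paper's argument made explicit. The paper only asserts that both claims follow from the in-order perfect-binary-tree description of powers, and your formula $p(i)=a-1-\nu_2(i)$ is that description in arithmetic form; your sentinel convention reasonably handles the boundary sides the paper glosses over, and you can drop the ``within one'' hedge in the last step, since $2^k-1 < n/2^{p+1} < 2^k$ means any range of at least $n/2^{p+1}$ indices already contains at least $2^k$ of them.
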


Both follow directly from our description of powers implicitly held at indices.

\begin{observation}\label{obs:run-size}
Any run with size at least $n/2^p$ has a power of at most $p$.
\end{observation}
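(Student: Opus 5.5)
We need to show that any run $R$ with $|R| \ge n/2^p$ has power at most $p$. The plan is to show that $R$'s power interval $I$ is long enough that \Cref{obs:power-interval} forces it to contain an index of power at most $p$. The run's power is defined as the minimum power over the indices of $I$, so this immediately gives the bound.

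The first step is to lower-bound $|I|$. The interval $I$ runs from the midpoint of $R$ to the midpoint of the run $R'$ that follows it in the array. Every index of the right half of $R$ lies between those two midpoints, because $R'$ starts after $R$ ends. Hence $I$ contains at least the right half of $R$, so $|I| \ge \lfloor |R|/2 \rfloor$. With $|R| \ge n/2^p$ this gives $|I| \ge n/2^{p+1}$, up to the off-by-one slack the footnote already allows.

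The second step applies the ``consequently'' clause of \Cref{obs:power-interval}: any range of $n/2^{p+1}$ consecutive indices contains an index of power $p$ or an index of power less than $p$. Applying this to $n/2^{p+1}$ consecutive indices inside $I$ produces an index of power at most $p$ within $I$. Therefore the power of $R$ is at most $p$.

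The main obstacle is only bookkeeping at the boundaries. One issue is rounding when $|R|$ is odd, which is absorbed by the same off-by-one slack as before. The other is the top run $R_1$, which has no successor and whose power is not yet assigned. For $R_1$ the statement is vacuous, or one can take $I$ to extend to the end of the array, and the same argument applies. It is also worth checking that the midpoint convention (which half-index counts as the midpoint) does not shrink $I$ below $n/2^{p+1}$. It does not, since $I$ always contains the full right half of $R$ up to one index.
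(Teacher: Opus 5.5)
Your proof is correct and follows the paper's argument: the right half of the run lies inside its power interval, and the second part of \Cref{obs:power-interval} then gives an index of power at most $p$ in that interval. The paper's one-line justification says this right half has size at least $n/2^{p-1}$, which is a slip; your $n/2^{p+1}$ is the correct bound and is exactly what that observation requires.
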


This observation follows from the fact that the right half of this run, which
is at least size $n/2^{p - 1}$, is part of its power interval, and from the second
part of Observation~\ref{obs:power-interval}.
%
%
Rearranging, we obtain the following.

\begin{observation}
A run of size $r$ must have power at most $p \leq \log{\frac n r}$.
\end{observation}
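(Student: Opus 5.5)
The plan is to derive the statement directly from \Cref{obs:run-size} by choosing the largest admissible exponent. Let a run have size $r$, and let $p^\ast = \lfloor \log \frac{n}{r} \rfloor$. By definition of the floor, $2^{p^\ast} \le n/r$, so $r \ge n/2^{p^\ast}$. \Cref{obs:run-size} with $p = p^\ast$ then says the run's power is at most $p^\ast \le \log \frac{n}{r}$, which is the claim. If one prefers to avoid floors, note that powers are integers. The condition $r \ge n/2^p$ is equivalent to $p \ge \log\frac{n}{r}$, so the smallest integer $p$ satisfying it is $\lceil \log \frac{n}{r}\rceil$. \Cref{obs:run-size} applied with that $p$ bounds the power, and we then compare with $\log\frac nr$.

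The one step needing care is this rounding. Applying \Cref{obs:run-size} with the smallest valid $p$ yields the ceiling, whereas using $p^\ast$ (the floor) needs $r \ge n/2^{p^\ast}$. That inequality holds precisely because $2^{p^\ast}\le n/r$. So I would state the argument with $p^\ast = \lfloor\log\frac nr\rfloor$, verify $n/2^{p^\ast} \le r$, and invoke \Cref{obs:run-size}.

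I also need \Cref{obs:run-size} to hold for that particular $p^\ast$, including the edge case $p^\ast = 0$, i.e.\ $r > n/2$. There the run's power interval is long enough to contain the center index, which has power $0$, consistent with \Cref{obs:power-interval}. The only other subtlety is the paper's footnoted ``off by at most one'' slack in distances. I would absorb that by noting the bound is used only asymptotically in the subsequent stopping-condition analysis, so an additive constant in $p$ is harmless.
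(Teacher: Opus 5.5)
There is a genuine gap, and it is at exactly the step you flagged as needing care. From $p^\ast=\lfloor\log\frac nr\rfloor$ you correctly get $2^{p^\ast}\le n/r$. Multiplying through by $r/2^{p^\ast}$ gives $r\le n/2^{p^\ast}$, the opposite of the hypothesis $r\ge n/2^{p^\ast}$ that \Cref{obs:run-size} requires. So the main argument does not go through, except when $n/r$ is a power of $2$. Your ceiling route is the correct reading of \Cref{obs:run-size}: $r\ge n/2^p$ is equivalent to $p\ge\log\frac nr$, so the best you get is power $\le\lceil\log\frac nr\rceil<\log\frac nr+1$. ``Comparing with $\log\frac nr$'' cannot then push this below $\log\frac nr$, since the ceiling is the larger quantity.

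The floor step cannot be repaired, because the floor version is false. Take your own edge case $p^\ast=0$. Let a run of size $r=0.6n$ occupy $[0.35n,0.95n]$, followed by a run of size $0.05n$. Its power interval runs from $0.65n$ to $0.975n$. This interval contains the power-$1$ index $3n/4$ but not the center, so the run has power $1>\log(1/0.6)\approx 0.74$. Your claim that a run with $r>n/2$ has a power interval containing the center index is therefore also wrong. The failure is not confined to large runs. With $n=1023$, a run on indices $79..220$ followed by a run on $221..222$ has power interval $150..221$, which contains $192$ but no multiple of $128$. Its power is $3>\log(1023/142)\approx 2.85$.

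The paper's justification is a one-line ``rearranging'' of \Cref{obs:run-size}. Rigorously, that yields only ``power $\le p$ for every $p\ge\log\frac nr$,'' that is, power $\le\lceil\log\frac nr\rceil$, so the inequality in the statement must be read up to this rounding. That weaker form is all that is used later, since \Cref{lem:powersort-stopping} invokes \Cref{obs:run-size} directly. Your closing idea of absorbing an additive constant in $p$ is the right fix. Apply it at this rounding step, where the $+1$ genuinely arises, rather than attributing it to the footnote's slack in distances.
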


It would be convenient if we could show that long runs always have small powers and short runs always have large powers. However, short runs can
have arbitrarily small powers if
they appear at 
indices of small power.
Such  ``lucky'' short runs may incur a disproportionate
walk-back cost for failed merge comparisons since
their eventual merge cost could fail to be proportional
to this walk-back cost. 
Despite this, we 
show that the overall walk-back cost over the entire main
loop is small.

\begin{observation}\label{obs:merge-distance}
	Consider a merge of runs $R_3$ and $R_2$ in the merge step of the
	algorithm.
	The distance between the midpoints of $R_3$ and $R_1$ is at least
	$n/2^{p_3 + 1}$, where $p_3$ was the power of $R_3$ before the merge.
\end{observation}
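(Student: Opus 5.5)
The plan is to read off the claim from the power intervals and \Cref{obs:power-interval}. A merge of $R_3$ and $R_2$ is performed only when $p_3 > p_2$. By definition, $p_3$ is the minimum power over the indices of $I_3$, the stretch from the midpoint of $R_3$ to the midpoint of $R_2$. Likewise $p_2$ is the minimum power over $I_2$, the stretch from the midpoint of $R_2$ to the midpoint of $R_1$. So the union $I_3 \cup I_2$ is exactly the stretch between the midpoints of $R_3$ and $R_1$, and it is this stretch whose length we must bound from below.

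The key fact is that $I_3$ contains no index of power less than $p_3$, while $I_2$ contains an index $j$ of power exactly $p_2 < p_3$. Let $i \in I_3$ be an index with power exactly $p_3$; it exists because $p_3$ is the minimum over $I_3$.

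\Cref{obs:power-interval} says the closest index to $i$ with power less than $p_3$ lies at distance $n/2^{p_3+1}$ on either side. Since $j$ is such an index, $|j - i| \geq n/2^{p_3+1}$. Both $i$ and $j$ lie between the midpoint of $R_3$ and the midpoint of $R_1$, so that distance is at least $|j-i| \geq n/2^{p_3+1}$, up to the off-by-one slack noted in the footnote.

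One small point to check is orientation. Here $i$ lies left of $j$, since $I_3$ precedes $I_2$, and the observation applies on either side, so orientation causes no trouble. A boundary-sharing subtlety also arises: the midpoint of $R_2$ may belong to both intervals depending on the convention. If it does, it cannot carry power below $p_3$, because it lies in $I_3$. Hence $j$ is strictly to the right of every index of $I_3$, and the argument is unaffected. I expect no real obstacle beyond stating this interval bookkeeping carefully.
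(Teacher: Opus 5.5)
Your proposal is correct and follows essentially the same argument as the paper. You pick an index of power $p_3$ in $I_3$ and one of power $p_2<p_3$ in $I_2$, use Observation~\ref{obs:power-interval} to separate them by at least $n/2^{p_3+1}$, and note that both lie between the midpoints of $R_3$ and $R_1$; your orientation and shared-endpoint remarks are harmless but unnecessary, since only $|j-i|$ and containment in that span are used.
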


\begin{proof}
    The range of indices 
    between the midpoints of $R_3$ and $R_1$ comprises the power intervals $I_3$ and $I_2$.
	Since the merge occurred, $p_3 > p_2$. 
    From the first part of
	Observation~\ref{obs:power-interval}, the distance between the indices of
	these two powers is at least $n/2^{p_3 + 1}$.
	The proof results from the fact that the nodes containing $p_2$ and $p_3$
	are in $I_2$ and $I_3$ respectively.
	\qed
\end{proof}

\begin{lemma}\label{lem:powersort-stopping}
If $r_3 \geq n/2^{p_2}$, then $p_3 \leq p_2$.
\end{lemma}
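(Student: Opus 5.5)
The plan is to locate an index of small power inside the power interval $I_3$ of $R_3$. By definition, $I_3$ runs from the midpoint of $R_3$ to the midpoint of $R_2$, so it contains the entire right half of $R_3$. That right half has at least $\lfloor r_3/2\rfloor \ge n/2^{p_2+1}$ indices, up to the off-by-one slack the paper already permits in its observations. Applying the second part of \Cref{obs:power-interval} with $p = p_2$, this range contains an index of power at most $p_2$. Since $p_3$ is the minimum power over $I_3$, we get $p_3 \le p_2$, as required.

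Equivalently, one can argue through \Cref{obs:run-size}. With $p = p_2$, a run of size at least $n/2^{p_2}$ has power at most $p_2$, and this applies directly to $R_3$ with its current power interval. By \Cref{lem:powers-unchanged}, the power of $R_3$ computed on the fly equals its power in standard PowerSort, so the conclusion holds for the power used in the merge test.

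The only delicate point is the rounding. The right half of $R_3$ has exactly $\lfloor r_3/2\rfloor$ indices, which could be one short of $n/2^{p_2+1}$, and the observations themselves hold only up to $\pm 1$. I would handle this the way the paper does, by the footnoted convention that distances are exact up to one. If needed, one can also note that the midpoint index of $R_3$ itself belongs to $I_3$, which supplies the extra index.

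The lemma then yields the stopping condition. While walking back to measure $r_3$, once we have covered $n/2^{p_2}$ positions without reaching the start of $R_3$, we know the merge test $p_3 > p_2$ fails, so we can stop.
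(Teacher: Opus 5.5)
Your proof is correct and takes the paper's route: the paper proves this lemma in one line from \Cref{obs:run-size}, and that observation is justified by exactly your argument, namely that the right half of $R_3$ lies in $I_3$, after which the second part of \Cref{obs:power-interval} applies. Your bound of $n/2^{p_2+1}$ on that right half is the correct one; the paper's justification of \Cref{obs:run-size} writes $n/2^{p-1}$, which is a typo for $n/2^{p+1}$.
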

\begin{proof}
Follows directly from \Cref{obs:run-size}.
\qed
\end{proof}

\begin{lemma}\label{lem:powersort-r2}
During PowerSort's main loop, so long as the stack has at least two entries, we always know $r_2$.
\end{lemma}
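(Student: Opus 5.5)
We prove the claim by induction over the sequence of stack operations in PowerSort's main loop. The invariant is that whenever $|S|\ge 2$ we store $r_1$ and $r_2$ explicitly, together with the starting position of $R_1$ and the power $p_3$ that \Cref{alg:powersort} carries between iterations. The stack changes in only two ways: a new run is pushed, or $R_3$ and $R_2$ are merged in line~4 of \Cref{alg:powersort}. It therefore suffices to show that each operation preserves the invariant without any walk-back.

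\emph{Push.} When a new run is pushed, the old top run becomes $R_2$. By \Cref{obs:always-know} we knew its length $r_1$ just before the push. We set $r_2\gets$ old $r_1$. The length of the new $R_1$ is found while scanning it during run decomposition, and that scan is charged to the $n$ reading cost. The case of the very first push is immediate: $|S|$ rises from $0$ to $1$, so nothing about $r_2$ is claimed. On the push that makes $|S|=2$, the same argument gives $r_2$.

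\emph{Merge.} A merge replaces $R_3$ and $R_2$ by a single run, which becomes the new $R_2$, of length $r_2+r_3$. The key observation is that the merge is performed only after its condition $p_3>p_2$ has been tested. Computing $p_3$ requires the power interval of $R_3$, so $r_3$ must already be known at that moment. By \Cref{lem:powers-unchanged} this power is the faithful one, and it can be computed from the current intervals. There are two possibilities. If $p_3$ was recovered by walking back over all of $R_3$, then $r_3$ is in hand. Alternatively, $r_3$ is recovered as a by-product of the in-place merge routine, which must know both run boundaries anyway. Either way, after the merge we set $r_2\gets r_2+r_3$, and $r_1$ is untouched.

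The main obstacle is the \emph{Merge} case. Walk-back runs with a stopping condition (\Cref{lem:powersort-stopping}), so a test may stop before recovering $r_3$. I would argue that stopping early is allowed only when it certifies $p_3\le p_2$, which is exactly the case in which no merge occurs. Consequently, every successful merge is preceded by a complete walk-back that determines $r_3$ exactly. Since that cost is bounded by the merge cost, this discussion also confirms that stored lengths are never stale after a failed test.
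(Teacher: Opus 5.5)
Your proof is correct and is essentially the paper's argument. The paper's proof is just your case split: any new $R_2$ is either the former $R_1$, whose length is known by \Cref{obs:always-know}, or the result of merging $R_3$ and $R_2$. Your extra justification that $r_3$ is in hand at merge time fills in what the paper leaves implicit, but rely on the merge needing the left boundary of $R_3$ rather than on the computation of $p_3$: at the first test of each merge sequence, $p_3$ is carried over from the previous iteration, not recomputed from $R_3$'s power interval.
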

\begin{proof}
Any new $R_2$ was either previously labeled
$R_1$, whose length is known (\Cref{obs:always-know}), or 
is the result of a
merge.
\qed
\end{proof}

From \Cref{obs:always-know} and \Cref{lem:powersort-r2}, we always know $r_1$ and $r_2$ 
when testing PowerSort's merge condition. 
Thus, \Cref{lem:powersort-stopping} defines our stopping condition and shows that failed merges incur a walk-back cost of at most $n/2^{p_2}$.

\begin{lemma}\label{lem:powersort-walkback-cost}
The total walk-back cost of failed merges in PowerSort is $O(m+n)$. 
\end{lemma}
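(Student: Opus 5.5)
The plan is to charge each failed comparison to one merge of size $\Omega(n/2^{p_2})$, in such a way that no merge is charged more than once.

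\textbf{Setting up the charge.} A failed merge test ends the inner loop of \Cref{alg:powersort}, so it happens at most once per push. Fix the push of an original run $R_1^{(j)}$ that ends with a failed test, and let $p=p_2$ be the power of the run $R_2$ lying directly below it. By \Cref{lem:powersort-stopping}, the stopping condition caps the walk-back cost of this test at $n/2^{p}$. I charge this cost to the boundary $\beta_j$ between $R_2$ and $R_1^{(j)}$. Since $R_1^{(j)}$ is an original run, $\beta_j$ is the boundary between two consecutive original runs, so distinct pushes give distinct boundaries.

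\textbf{The merge across the boundary.} In the whole execution, including the collapse phase, exactly one merge joins a run ending at $\beta_j$ with the run starting at $\beta_j$; call it $M_j$. The merge tree has one internal node per boundary between consecutive original runs, so the map $j\mapsto M_j$ is injective. It therefore suffices to show $|M_j|=\Omega(n/2^{p})$, because then the total failed-merge cost is at most $\sum_j O(|M_j|)=O(m+n)$. Until $M_j$ occurs, the run whose right end is $\beta_j$ keeps power $p$: merges onto it only extend its left end (it may absorb runs below it only if their power exceeds $p$), and \Cref{lem:powers-unchanged} says such merges leave its power unchanged. So at the moment of $M_j$, the left operand has power $p$, and its power interval contains an index $x$ of power $p$.

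\textbf{Case 1: $M_j$ happens in the main loop.} The left operand is then $R_3$ and the right operand $R_2$ at a moment when $p=p_3>p_2'$, where $p_2'$ is the power of the current $R_2$. Exactly as in \Cref{obs:merge-distance}, the merged run together with half of the run above it spans an index of power $p$ and an index of power $p_2'<p$. By \Cref{obs:power-interval} these two indices are at distance at least $n/2^{p+1}$. Hence $|M_j|$ plus the length of the next run is $\Omega(n/2^{p})$.

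\textbf{Case 2: $M_j$ happens in the collapse phase.} The right operand then extends to the end of the array. By \Cref{obs:power-interval}, the distance from $x$ to the array end, treated as a power-$0$ position, is at least $n/2^{p+1}$.

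\textbf{Main obstacle.} The hard step is the size bound in Case 1. The $x$--to--lower-power span reaches up to the midpoint of the run above, so it is not literally contained in $M_j$. To handle this I would either charge the half of that run to the later merge that absorbs it, or argue directly that the lower-power index already lies in $I_2'$ within the merged run. That half-run overhang costs only a constant factor, since each run's half is charged $O(1)$ times. I expect most of the care to go into making this containment precise and into checking the edge effects of $n\neq 2^a-1$ (the off-by-one noted in the observations).
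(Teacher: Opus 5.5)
Your framework matches the paper's charging. The merge of $s$ as $R_3$ that the paper uses in its Case~2 is exactly your $M_j$, the merge across $\beta_j$. Injectivity of $j\mapsto M_j$, the invariance of the power $p$ via \Cref{lem:powers-unchanged}, and your collapse-phase bound are all fine.

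The gap is the step you flagged, and neither patch closes it. In the main loop PowerSort only merges $R_3$ with $R_2$, so throughout a merge sequence $R_1$ is one fixed, freshly pushed original run. \Cref{obs:merge-distance} gives only $|M_j|+r_1/2\ge n/2^{p+1}$, with the overhang drawn from that same run. One merge sequence can merge $\Theta(\log n)$ boundaries, e.g.\ runs of lengths $n/4,n/8,n/16,\dots$ ending just before the centre, followed by a long run across the centre. Every merged boundary except possibly the lowest is a charged $\beta_j$, because at the end of its own merge sequence there were already two runs below it, so that sequence ended with a failed test. Charging $r_1/2$ per $M_j$, whether to $R_1$ or to the later merge that absorbs $R_1$, therefore puts $\Theta(\log n)$ charges on a single run. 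The claim that each run's half is charged $O(1)$ times is false for your scheme. Option (b) fails too: $I_2'$ extends to the midpoint of $R_1$, and the index of power $p_2'<p$ can lie in the first half of $R_1$. This is the typical situation when a long $R_1$ straddling a low-power index triggers the cascade, and then $|M_j|$ alone can be far below $n/2^{p}$.

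The missing idea is to draw on $R_1$ only once per merge sequence. One way: the boundaries merged in one sequence have strictly decreasing powers $q_1>\dots>q_t$, since powers on PowerSort's stack strictly increase toward the top. Their total charge is then $\sum_i n/2^{q_i}<2n/2^{q_t}\le 4(|M^{(t)}|+r_1/2)$, paid once by the last merge of the sequence and once by $R_1$.

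The paper gets the same effect by a case split you did not use. The walk at push $j$ traverses only the run $q$ directly below $s$, so its cost is also at most $|q|$. If $M_j$ is not the last merge of its sequence, or occurs in the collapse phase, then $q$ is never in a failed test again, and the cost is charged to $q$ itself. Only when $M_j$ ends its sequence does the paper invoke \Cref{obs:merge-distance} with $R_1,R_2,R_3$. After that the sequence terminates and $R_1$ is pushed down for good, so it can be charged as $R_1$ only once.
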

\begin{proof}
Consider two runs, $q$ and $s$, such that $s$ appears directly above $q$ in the
stack.
Their respective powers are $p_q$ and $p_s$. Say $q$ currently has label $R_3$,
$s$ has label $R_2$,
and we are about to test the merge condition between them.
Say we find that $|q| > n/2^{p_s}$;
by \Cref{lem:powersort-stopping}, the merge condition fails, we exit the walk-back algorithm and terminate the merge sequence. What can we charge this walk-back cost of $n/2^{p_s}$ to?

We consider two cases. In case 1, $q$ is never part of a failed merge again.
Then we can charge this cost to $q$ itself since $|q| > n/2^{p_s}$. Over all
runs, $q$, this adds at most $(m+n)$ to the walk-back cost as $q$ may be
original or intermediate. 

In case 2, $q$ will be part of another failed merge. This means that $q$ will have to take the label $R_3$ again, which implies that $s$ gets merged during the main loop. When $s$ is merged, $q$ has label $R_4$ and $s$ has label $R_3$. By \Cref{obs:merge-distance}, 
the distance between the midpoints of $R_3$ and $R_1$ is at least $n/2^{p_s+1}$. 
This implies that $r_1/2 + r_2 + r_3/2 \geq n/2^{p_s+1}$, so $r_1 + 2r_2 + r_3 \geq n/2^{p_s}$. For simplicity, we upper bound this as $2(r_1 + r_2 + r_3)$. 

We can charge our cost of $n/2^{p_s}$ to twice the sizes of the current runs with labels $R_1$, $R_2$, and $R_3$. 
All three runs are charged in this way once. For $R_2$ and $R_3$, this is because they are immediately merged and so destroyed. 
For $R_1$, recall that we assume that $q$ fails its merge condition, so the merge sequence is subsequently terminated. 
When a merge sequence finishes, we push a new run onto the stack, shifting the run labeled $R_1$ to $R_2$. There is no way for this run to ascend back to $R_1$, so it is charged as $R_1$ once.

As a result, this case charges each run seen during the main loop twice its length at most three times (i.e., when labeled $R_1$, $R_2$, and $R_3$). 
This contributes an additional $O(m+n)$ to the total walk-back cost of failed
merges.
\qed
\end{proof}

We have shown that, over all merges, the walk-back cost is bounded by $O(m + n)$. We conclude with the following.
\begin{theorem}\label{thm:powersort}
	PowerSort is walkable.
\end{theorem}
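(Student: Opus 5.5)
The plan is to assemble the pieces already established, using the accounting framework set up at the start of \Cref{sec:walkable}. Since every stack-based mergesort satisfies $T_\mathcal{S}(n) \ge m+n$, it suffices to show two things: the walk-back algorithm faithfully simulates PowerSort with $O(1)$ extra memory, and the total extra walking time is $O(m+n)$.

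First I will establish correctness of the shallow simulation. We keep only $r_1$, $r_2$, the position of the top run, the global $n$, and the stored power $p_3$ (or recompute it on demand). \Cref{lem:powers-unchanged} shows that powers recomputed from current power intervals agree with those the original PowerSort would have stored. Hence PowerSort is almost-$3$-aware, and the shallow version makes exactly the same merge decisions as the original. \Cref{obs:always-know} and \Cref{lem:powersort-r2} guarantee that $r_1$ and $r_2$ are always available without walking. So the only walking needed in the main loop is to find $r_3$, by walking left from the start of $R_2$. Once a merge completes, the new $r_2$ is known as the sum of the merged lengths.

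Next I will split the walking cost of the main loop into three parts.
\begin{itemize}
\item \emph{Successful merge tests.} We walk at most $r_3$ steps, which is dominated by the merge cost $r_2+r_3$ of the intermediate run produced. Summed over all merges this is at most $m$.
\item \emph{Failed tests.} The stopping condition from \Cref{lem:powersort-stopping} caps the walk at $n/2^{p_2}$. \Cref{lem:powersort-walkback-cost} bounds the total of these by $O(m+n)$.
\item \emph{Collapse phase.} \Cref{lem:walkable-collapse} shows it adds only a constant factor.
\end{itemize}
The run decomposition and the flipping of decreasing runs are a single linear sweep, costing $O(n)$.

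Summing the three parts, the total overhead is $O(m+n) = O(T_{\text{PowerSort}}(n))$, which is exactly the definition of walkable.

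The one step needing care is confirming that a walk interrupted by the stopping condition still yields the correct decision. I must also check that when the stack has fewer than three runs, the walk simply reaches the array start and no other case arises. Both follow from the contrapositive in \Cref{lem:powersort-stopping}. If the walk covers $n/2^{p_2}$ cells without reaching the start of $R_3$, then $r_3 \ge n/2^{p_2}$, so $p_3 \le p_2$ and no merge is performed, exactly as in the original algorithm.
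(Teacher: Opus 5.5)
Your proposal is correct and follows the paper's own assembly. It uses almost-$3$-awareness via \Cref{lem:powers-unchanged}, successful merges paying for their own walk over $R_3$, failed tests capped by the stopping condition of \Cref{lem:powersort-stopping} and charged via \Cref{lem:powersort-walkback-cost}, and the collapse phase via \Cref{lem:walkable-collapse}, all measured against $T_\mathcal{S}(n)\ge m+n$. (Two minor wording points: your last step is a direct application of \Cref{lem:powersort-stopping}, not its contrapositive, and the $|S|\le 2$ case is recognized simply because $R_2$ begins at index $0$, not by that lemma.)
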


\begin{corollary}
We can implement PowerSort with the walk-back algorithm to produce an in-place, stable sorting algorithm with running time $\Theta(n(1 + \mathcal H(A)))$.
\end{corollary}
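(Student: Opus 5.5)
The plan is to assemble the pieces already established: show that the walk-back implementation of PowerSort simulates the original algorithm exactly, and that its total extra cost beyond merging is $O(m+n)$. Since $T_{\mathcal S}(n) \ge m+n$, this gives only a constant-factor slowdown, which is the definition of walkable.

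\textbf{Correctness of the simulation.} I would first argue that the shallow-stack version makes the same merge decisions as standard PowerSort. By \Cref{lem:powers-unchanged}, the power of every run on the stack equals the power obtained by recomputing it from its \emph{current} power interval. That interval is determined by the positions of the top three runs. By \Cref{obs:always-know} and \Cref{lem:powersort-r2} we always know $r_1$ and $r_2$, together with the position of the top run. We also keep $p_3$ as $O(1)$ metadata, and recompute it only after a merge. Hence only $r_3$ ever has to be recovered by walking back from the start of $R_2$. The stopping condition of \Cref{lem:powersort-stopping} is sound: once we have walked $n/2^{p_2}$ cells without reaching the start of $R_3$, we know $p_3 \le p_2$ and the merge correctly fails. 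This uses only $O(1)$ words of memory.

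\textbf{Cost accounting.} I would split the walk-back work into three parts.
\begin{enumerate}
\item \emph{Successful merge tests.} Here the walk costs at most $r_3$. This is dominated by the cost $r_2+r_3$ of the merge that follows, so the total over all such tests is at most $m$.
\item \emph{Failed merge tests.} Each failed test ends a merge sequence. Whether it stops via the stopping condition or by finding the run boundary, its cost is at most $\min(r_3, n/2^{p_2})$. \Cref{lem:powersort-walkback-cost} bounds the total of these costs by $O(m+n)$.
\item \emph{The collapse phase.} Its overhead is constant-factor by \Cref{lem:walkable-collapse}.
\end{enumerate}
All other bookkeeping (power computations, pushes) takes $O(1)$ per run or per merge, which is $O(n)$ in total. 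Summing the parts gives a total overhead of $O(m+n) = O(T_{\mathcal S}(n))$.

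\textbf{Main obstacle.} The delicate point is checking that every walk-back falls into exactly one of these categories. In particular, a recomputation of $p_3$ right after a merge must not cause an uncharged walk. I would handle this by observing that, after a merge, the next test either succeeds or is the single failing test that ends the sequence. Each iteration of the while loop in \Cref{alg:powersort} therefore performs exactly one walk, which is charged either to a merge or to the terminating failure. For the corollary, I would then plug in a stable, strictly in-place linear-time merge and invoke Munro and Wild's bound $m = O(n(1+\mathcal H(A)))$.
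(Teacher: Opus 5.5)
Your proposal is correct and follows the paper's route. The corollary combines four ingredients: PowerSort's walkability (faithful recomputation of powers, the $n/2^{p_2}$ stopping condition, the charging of failed tests, and the collapse lemma), a stable linear-time in-place merge, and Munro and Wild's $O(n(1+\mathcal H(A)))$ bound; you just unpack these in more detail than the paper's one-line proof. You leave the lower half of the $\Theta$ implicit. It follows because the running time is at least $m+n$, and any binary merge tree over the runs has merge cost $m=\sum_i r_i d_i \ge n\mathcal H(A)$ by Kraft's equality, where $d_i$ is the depth of run $i$ in the tree.
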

\begin{proof}
Follows from properties of PowerSort, stable in-place merge algorithms, and the
walk-back algorithm.
\qed
\end{proof}

\section{The Jump-Back Algorithm}
\label{sec:encoding}
In this section, we describe our jump-back algorithm,
which is a general method for implementing all the known 
stack-based natural mergesort algorithms to be in-place without a stack, albeit
while sacrificing stability.
The only assumption
required is that a stack-based
mergesort algorithm, $\mathcal{S}$, 
be almost-$k$-aware.
Importantly, if the running time of $\mathcal{S}$ is
$O(n(1 + \mathcal H(A)))$, then the running time of our in-place
jump-back implementation of $\mathcal{S}$ will also
run in time
$O(n(1 + \mathcal H(A)))$.

Suppose we are given an input array, $A$, of size $n$,
and let $\lambda=\lceil\log n\rceil + 1$ ($\lambda$ is a parameter we that
we use throughout our jump-back algorithm).
We begin by moving the elements in
\emph{short} runs, those of size at most $3\lambda$, to the end of $A$.
We then sort them via a standard in-place mergesort algorithm; e.g., 
see~\cite{edelkamp2020quickxsort,edelkamp2019worst,huang1992fast,DALKILIC20111049,CHEN2003191,ellis}.
Our method for performing this move in-place can be done
in linear time
using a partitioning algorithm described in \Cref{sec:compression}.
%
%
See Figure~\ref{fig:compression}.

\begin{figure}[hbt]
    \centering
    \includegraphics[width=0.8\linewidth]{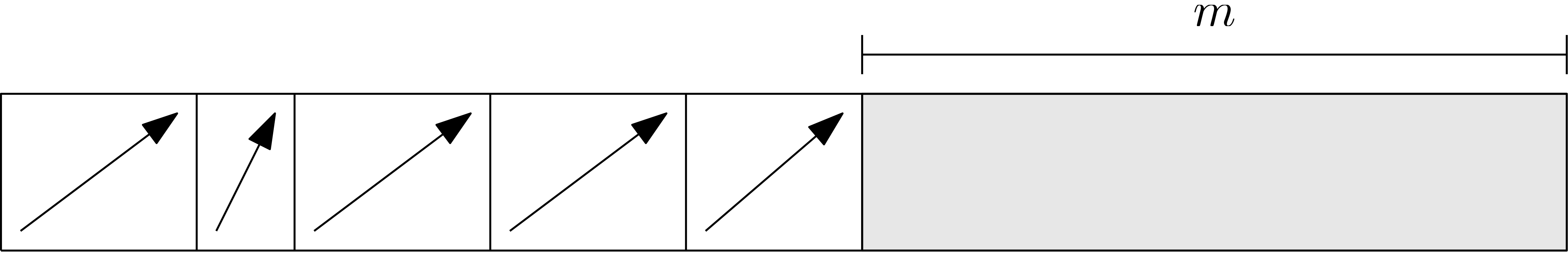}
    \caption{The result of performing partitioning to move all short runs to the
    end of the array. All runs to the left of the gray block are of size $>
    3\lambda$.
    The elements in the gray block belonged to runs of size $\leq 3\lambda$. Their ordering may not be preserved. }
    \label{fig:compression}
\end{figure}

We then apply the given stack-based natural mergesort algorithm $\mathcal{S}$
to the remaining, \emph{long}, runs, with one change---rather than using a complete 
stack to store
run sizes, as in the original way $\mathcal{S}$ was defined,
we use a shallow stack of size $k$ and 
we use bit-encoding to encode the size of each run {\it in the run itself}.

\begin{lemma} \label{lem:encoding}
    The size of any \emph{long} run can be encoded in the run itself using a
    reversible in-place encoding scheme that can be encoded, decoded, and
    reversed in $O(\lambda)=O(\log n)$ time.
\end{lemma}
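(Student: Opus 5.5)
The plan is to use the classic pair-swap bit-encoding trick on a long run. Let $R$ be a long run with $r = |R| > 3\lambda$; it is non-decreasing. We need to store the integer $r$ (at most $n$, so $\lambda-1 = \lceil \log n\rceil$ bits suffice, and we use $\lambda$ slots to be safe). We reserve the first $2\lambda$ elements of $R$ as $\lambda$ consecutive pairs $(x_{2j-1}, x_{2j})$, $j=1,\dots,\lambda$. To encode bit $b_j$, we leave the pair in sorted order for $0$ and swap it for $1$. Decoding reads bit $j$ by comparing the two elements of pair $j$, and reversing means re-sorting each pair. Each of these operations costs $O(\lambda)=O(\log n)$ time and $O(1)$ extra space. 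The requirement $r > 3\lambda$ guarantees these $2\lambda$ cells fit inside the run with room to spare. The spare room could also hold a second copy at the end of the run, so that the length can be read starting from either endpoint; this matters because the jump-back algorithm walks leftward from the start of the run above and may need to decode from the run's right end. I would therefore encode $r$ in both the first $2\lambda$ and the last $2\lambda$ elements, which fit disjointly since $4\lambda$ could exceed $3\lambda$ only if we are careless. To be safe I would instead use $\lambda$ pairs at the end only, or bump the threshold. Given the paper's $3\lambda$, I would commit to encoding at the \emph{end} of the run, which is where a jump-back from the run above lands.

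The main obstacle is \textbf{equal elements}. If $x_{2j-1}=x_{2j}$, swapping is invisible, so the bit cannot be read. My first attempt is to choose pairs that are guaranteed distinct. Since the run is sorted, I would pair element $i$ with element $i+\lambda$ within a $2\lambda$ block rather than adjacent elements. Failing that, I would pair the $j$-th element of the encoding block with a far-away element. If the whole block is constant, however, no pairing inside it works, and there is then a degenerate case to handle.

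For the degenerate case I would argue as follows. If the first and last elements of the chosen block are equal, then the entire block is equal, since the run is non-decreasing. We can detect this in $O(1)$ and fall back: either scan the run for a position where distinct pairs exist, or mark the situation with a reserved flag pair elsewhere. I expect the cleanest resolution, and the one I would try first, is a one-bit ``all-equal'' flag encoded in a guaranteed-distinct pair. Such a pair exists, because maximality of runs together with the strictly-decreasing convention makes consecutive runs comparable. Alternatively, one can observe that a block of equal keys can be treated as an already-merged chunk whose length is recoverable by galloping (exponential search) in $O(\log n)$ time.

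Finally, I would verify reversibility: undoing the swaps restores the non-decreasing order exactly. Stability is not claimed here, consistent with the jump-back section sacrificing it, so the swaps are harmless. All three operations therefore run in $O(\lambda)$ time.
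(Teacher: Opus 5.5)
Your pair-swap scheme works only when every pair you use contains two distinct elements, and the proposal never gives a correct $O(\log n)$ method for when this fails.

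\textbf{The detection test is insufficient.} With the pairing $x_j \leftrightarrow x_{j+\lambda}$ inside a $2\lambda$-block, some pair is equal as soon as the block contains $\lambda+1$ consecutive equal keys (e.g.\ $1,2,2,\ldots,2$). So comparing the block's first and last elements does not certify that the encoding is readable.

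\textbf{No rearrangement can help in the degenerate case.} A constant run has exactly one arrangement, so \emph{no} permutation-based scheme can store anything in it. None of your fallbacks repairs this:
\begin{itemize}
\item Scanning for distinct pairs is not $O(\log n)$ and may find none.
\item A ``guaranteed-distinct pair'' across the boundary with a neighbour need not exist. Once the short run between two long runs is moved to the end, as in $\ldots,5\mid 4,3\mid 5,5,\ldots,5$, a run ending in $v$ sits directly next to a constant run of $v$'s.
\item A one-bit flag still does not record $r$.
\item Galloping cannot locate a boundary across which the value does not change. For a run such as $1,5,\ldots,5,9$ it finds the wrong boundary.
\end{itemize}

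\textbf{The missing idea.} Elements are copyable, not merely permutable, and the failure of a suitable test gives you free scratch space.
\begin{itemize}
\item \emph{Pivot-encoding.} The paper pairs the first $\lambda$ cells $F$ of the whole run with its last $\lambda$ cells $L$. It tests only $F_\ell < p$, where $p$ is the cell just before $L$. If this holds, all of $F$ lies strictly below all of $L\cup\{p\}$. Swapping $L[i]$ with $F[i]$ when $b_i=0$ is then always visible, and decoding is just $L[i]\ge p$.
\item \emph{Marker-encoding.} If the test fails, monotonicity forces the whole middle block $M$ to equal $p$. This block has size $r-2\lambda\ge\lambda+1$ because $r>3\lambda$. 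One copies $L$ into the first $\lambda$ cells of $M$. One then overwrites each $L[i]$ with one of two globally stored values $m_1<m_2$ according to $b_i$, and sets $p\leftarrow m_2$, so decoding is unchanged. Reversal copies $L$ back and refills $M$ with copies of $F_\ell$.
\end{itemize}
A reserved bit $b_\lambda$ tells the decoder which reversal to apply. This is exactly where the threshold $3\lambda$ is needed; in your proposal it plays no essential role.
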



This lemma is proved by the introduction of two new encoding schemes which we
call \emph{pivot-encoding}, which works when there are a sufficient number of
distinct elements in the run, and \emph{marker-encoding}, which works in general.
Under either scheme, we encode the size of a run within the last
$\lambda + 1$ elements of the run.
We discuss these encoding schemes in
\Cref{sec:encoding-app}.
%
%
See Figure~\ref{fig:walking-encodings}.

\begin{figure}[b]
    \centering
    \includegraphics[width=0.7\linewidth]{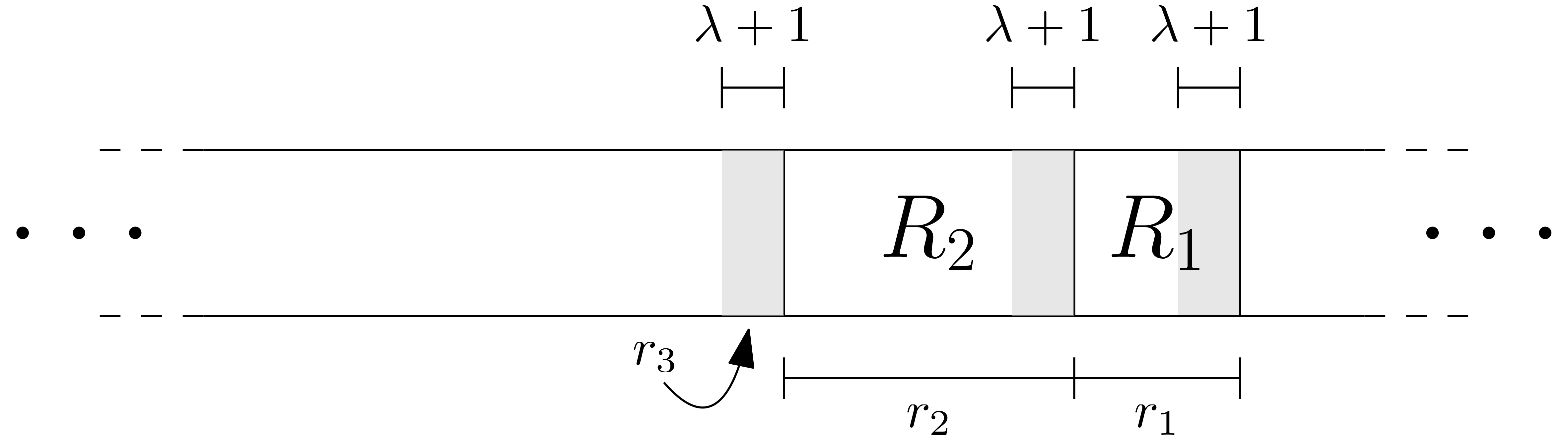}
    \caption{We encode run lengths
    within the last $\lambda + 1$ elements of each run, indicated with shading,
    allowing us to 
    compute the run's length in $O(\log n)$ time. This eliminates the need for walking back via the walk-back algorithm albeit at the loss of stability.}
\label{fig:walking-encodings}
\end{figure}



Recall that we have already moved all short runs to the end of the array.
Using \Cref{lem:encoding}, we can implement the almost-$k$-aware algorithm,
$\mathcal{S}$, in-place on the remaining long runs
using our jump-back algorithm with only two
changes.
%
\begin{enumerate}
    \item Whenever a new run, $q$, is added and the bottommost run, $s$, in the shallow
    stack no longer fits, we \textit{encode} $s$'s size using the algorithm of
    \Cref{lem:encoding}.
    This step takes $O(\lambda)$ time, and can be charged to the run, $q$,
    that was just added to the stack, since its size must have been at least
    $3\lambda + 1$.

    \item Symmetrically, whenever a run is merged and the shallow stack is no longer
    full, we add the run deeper in the array to the shallow stack by
    \textit{decoding} its size and by \textit{reversing} the encoding
    such that the original run is restored.
    This step also takes $O(\lambda)$ time, and can be charged to the merge
    operation.
\end{enumerate}


Using these two changes, we can maintain all the information of our shallow
stack efficiently, without asymptotically increasing the running time of our
algorithm, $\mathcal{S}$.
All that is left is to
sort
all the remaining
\textit{short} runs.

All short runs, of size at most $3\lambda$, were initially moved
to the end of the array.
Let us denote this portion of the array as $A'$.
By the definition of the Shannon entropy of an array,
%
$n\mathcal H(A) = \sum_{i=1}^{\rho(A)} r_i\log(n/r_i)$, and we see that
\[
\sum_{i=1}^{\rho(A')}r_i\log(n/r_i) \geq
\sum_{i=1}^{\rho(A')}r_i\log(n/3\lambda).
\]
But $\lambda=\lceil\log n\rceil + 1$; hence,
if we let $m = \sum_{i=1}^{\rho(A')}r_i$, then the sum of their entropy is
$\Omega(m\log n)$. 
As a result, the cost of sorting all runs of
size at most $3\lambda$ in time $O(m\log m)$ is ``paid'' for by their
contribution to the entropy of the original array. We spend an
additional $O(n)$ time moving all such runs to the end of the array.
(see \Cref{app:in-place}).
%
As noted above, the time to sort the long runs in $A$ is
asymptotically the same as $\mathcal{S}$.
Thus, we have the following:


\begin{theorem}
Given an array, $A$, of size $n$, and a stack-based natural mergesort
algorithm, $\mathcal{S}$, which is almost-$k$-aware for constant $k\ge2$,
we can implement
$A$ in-place in $O(n(1+\mathcal{H}(A))+T_{\mathcal{S}}(n))$
time, where $T_{\mathcal{S}}(n)$ is the running time of $\mathcal{S}$.
\end{theorem}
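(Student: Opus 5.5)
The plan is to split the running time into three parts and bound each: (i) the preprocessing phase (flipping decreasing runs, identifying runs, and moving short runs to the end), (ii) sorting the short runs in $A'$ and merging them with the sorted long part, and (iii) running $\mathcal{S}$ on the long runs with the shallow stack and the encoding of \Cref{lem:encoding}.

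For (i), the initial sweep that flips strictly decreasing runs and determines run boundaries is $O(n)$. The in-place partition moving all runs of size at most $3\lambda$ to the end is also $O(n)$, as cited from \Cref{sec:compression}. I will take that result as given; the one point to verify is that the long runs stay contiguous and in relative order, so their run decomposition, and hence the behavior of $\mathcal{S}$, is unchanged.

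For (ii), let $m$ be the total size of the short runs. An in-place $O(m\log m)$ mergesort sorts $A'$. By the entropy calculation preceding the theorem, $\sum r_i\log(n/r_i)\ge m\log(n/3\lambda)$. For $n$ larger than a constant this is $\Omega(m\log n)$, since $\log(n/(3\lceil\log n\rceil+3))\ge \tfrac12\log n$ eventually, and small $n$ is absorbed into constants. So this cost is $O(n\mathcal{H}(A)+n)$. One step the statement leaves implicit is combining the sorted long prefix with the sorted $A'$. A single linear-time in-place merge at the end costs $O(n)$, and I would include it explicitly.

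For (iii), I simulate $\mathcal{S}$ step by step, keeping the top $k$ runs' lengths plus $O(1)$ metadata. Almost-$k$-awareness guarantees every decision and merge of $\mathcal{S}$ uses only this information, so the simulation makes the same merges at the same costs, using a linear in-place merge. The extra work consists of encodings and decodings, each $O(\lambda)$ by \Cref{lem:encoding}. An encoding happens only when a new long run of size $>3\lambda$ is pushed, so all encodings cost $O(n)$ in total. A decoding happens only after a merge shrinks the shallow stack, and the merge costs at least the length of two long runs, which exceeds $6\lambda$, so decodings add only a constant factor to $T_{\mathcal{S}}(n)$.

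The main obstacle is making sure the encoded tail region of a deeper run is never touched while it is encoded, and is fully reversed before that run takes part in a merge. Otherwise merges would operate on corrupted data, and the $O(\lambda)$ bounds would not reflect the true costs. My approach is to show that runs below the shallow stack are untouched by $\mathcal{S}$, because only the top $k$ runs are ever merged. A run is decoded exactly when it re-enters the top $k$, before any merge can involve it. In the collapse phase, the same decode-on-entry rule applies, and each decode is charged to the preceding merge. Summing (i)--(iii) gives $O(n(1+\mathcal{H}(A))+T_{\mathcal{S}}(n))$.
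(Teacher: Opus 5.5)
Your proposal is correct and follows essentially the same route as the paper. You move the short runs (size $\le 3\lambda$) to the end in $O(n)$ and sort them in $O(m\log m)$, which is paid for by their $\Omega(m\log n)$ entropy contribution; you then simulate $\mathcal{S}$ on the long runs with a size-$k$ shallow stack, charging each $O(\lambda)$ encoding to the push of a run longer than $3\lambda$ and each $O(\lambda)$ decoding to the merge that triggered it. Your explicit final linear-time in-place merge of the sorted long prefix with the sorted short block, and your check that an encoded run is never touched until it has been decoded, fill in steps the paper leaves implicit.
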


\section{Our Bit Encoding Methods}\label{sec:encoding-app}



In this section, we describe our methods for encoding run sizes in-place
in the runs themselves.
Unlike previous encoding methods, such as those of Kuszmaul and
Westover~\cite{kuszmaul2020place}, 
our methods make no assumptions about the elements and their 
representation themselves, except that they are
comparable and copyable.
Recall that we define $\lambda = \lceil \log n\rceil + 1$. We do so because the size of each run can be represented using $\lceil \log n\rceil$ bits, 
along with one extra bit to specify the encoding method used.
More specifically, let $b_1b_2\ldots b_{\lceil \log n\rceil}$ be a binary
encoding of the size, $r$, of a \textit{long} run, $R$, and let $b_\lambda$
indicate the encoding method used, i.e., $b_\lambda=0$ for the
\emph{pivot-encoding} method and $b_\lambda=1$ for the
\emph{marker-encoding} method.
Both of our methods efficiently encode $\vec{b}$ in-place within the run, $R$.


\paragraph{Our Pivot-Encoding Method.}
Our first encoding method is a simple encoding method we
call
the \emph{pivot-encoding} method.
Let $R$ be a \textit{long} run, and 
let $F$ and $L$ be the first and last $\lambda$ elements in $R$,
respectively.
Also, let $p$ be the element of $R$ that is immediately before $L$,
which we call the \emph{pivot} 
(i.e., $p$ is the element $\lambda+1$ cells from the end of $R$).
We say that the run $R$ is \emph{pivotable} if the last element of $F$, $F_\ell$, is
strictly smaller than our pivot, $p$. See \Cref{fig:pivotable}.

\begin{figure}[t]
    \centering
    \includegraphics[width=0.8\linewidth]{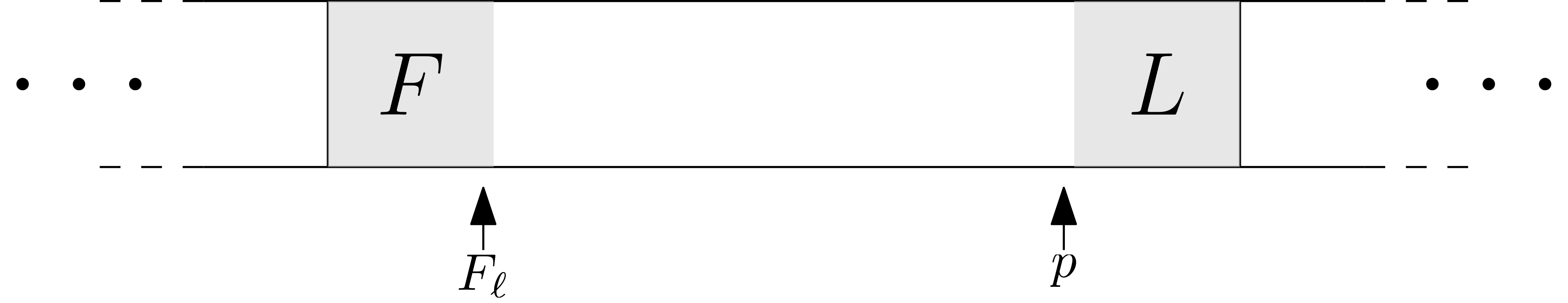}
    \caption{Here we depict the test to determine whether a run is \emph{pivotable}. We define $F$ and $L$ as the first and last $\lambda$ cells of the run. Our pivot $p$ is the element just prior to $L$ and $F_\ell$ is the last element of $F$. This run is pivotable if $F_\ell < p$.}
    \label{fig:pivotable}
\end{figure}

\begin{figure}[b]
    \centering
    \includegraphics[width=0.8\linewidth]{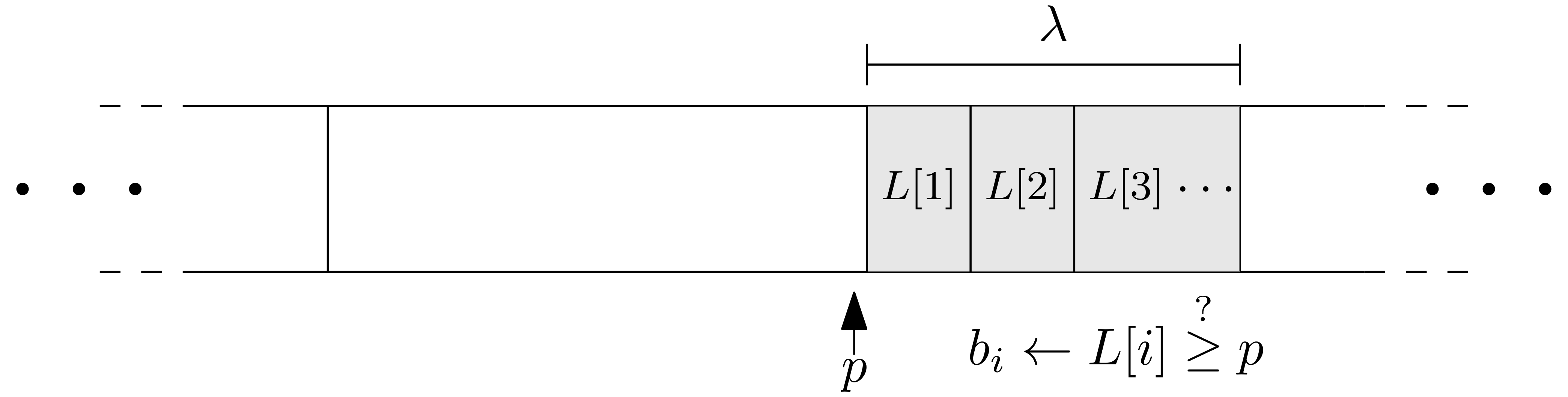}
    \caption{The decoding process, which applies to both our pivot-encoding and marker-encoding schemes. The last $\lambda$ elements encode our bit string $\vec b$. Element $p$, positioned $\lambda + 1$ elements from the end, determines the bit corresponding to each $L[i]$. When decoding, if $L[i] \geq p$, then $b_i$ is 1. Otherwise, it is 0.}
    \label{fig:encoding-retrieval}
\end{figure}

Suppose then that $R$ is pivotable, which implies that each element in $F$ is
strictly smaller than every element in $L$ and $p$.
Encoding $\vec b$
in our pivot-encoding scheme is easy:
We consider each element, $L[i]$, of $L$ as corresponding to the
bit, $b_i$, and, if $b_i=0$, then we swap $L[i]$ and $F[i]$.
Thus,
if $L[i]<p$, we know
that $b_i=0$, and if $L[i]\ge p$, then we know that $b_i=1$. See
\Cref{fig:encoding-retrieval}.
Encoding, reading, and reversing such an encoding can be done in $O(\log n)$
time.


\paragraph{Our Marker-Encoding Method.}
For a long run, $R$, not to be pivotable, i.e., $F_\ell \geq p$, it must be true that
not only $F_\ell = p$, but also that the entirety of the elements between $F$
and $L$, denoted $M$, are equal to $p$.
%
In this case, our pivot-encoding method does not work, and we use a different
encoding method, which we
call
the \emph{marker-encoding} method.

For our marker-encoding method, we designate the first two distinct elements
encountered as \emph{markers}, denoting them $m_1$ and $m_2$, where $m_1 < m_2$,
and store them using $O(1)$ memory.
These markers are in fact derived from the first run itself, since each run must
contain at least two distinct elements.

Since $R$ is not pivotable, we know that $F_\ell = p$, and thus all elements in $M$ are equal to $p$.
First, we store a copy of $L$ in the first $\lambda$ cells of $M$.
Next, we encode $\vec b$ in $L$ using only our two markers, letting $L[i]$
correspond to the bit $b_i$, and, if $b_i=0$, then we set
$L[i] \leftarrow m_1$ and if $b_i=1$, then we set
$L[i] \leftarrow m_2$.
Finally, we set the pivot value $p$ to $m_2$.
Reading this encoding is identical to how the pivot-encoding method is read,
since it is still true that if $L[i] < p$, then $b_i=0$ and if $L[i] \ge p$,
then $b_i=1$.
After the encoding is read and we must reverse the encoding, we simply set $L$
back to its original values using its copy in $M$, and then we clean up $M$
by copying $F_\ell$ into its first $\lambda$ elements and into the original pivot location, $p$.
See \Cref{fig:marker-encoding}.

\begin{figure}[t]
    \centering
    \includegraphics[width=0.8\linewidth]{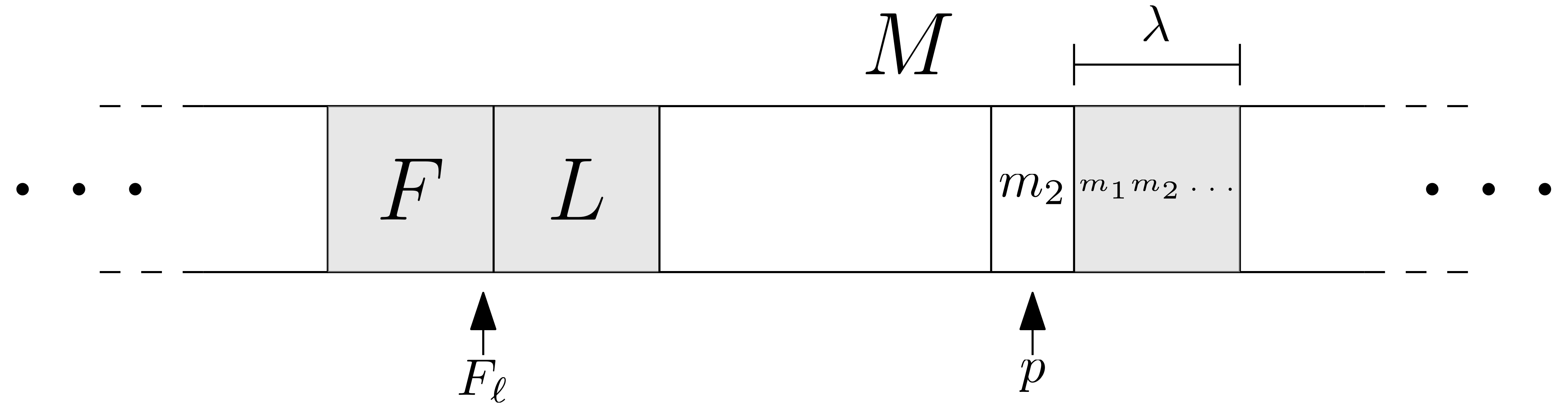}
    \caption{Transforming the run to apply marker-encoding. 
    We copy the block $L$ to just after $F$, with the remaining entries belonging to $M$. As in pivot-encoding, we use $p$ and the last $\lambda$ entries of the array to encode $\vec b$. However, this time it is done via our markers $m_1$ and $m_2$, rather than the elements of $F$ and $L$. We replace $p$'s element with $m_2$ to make the decoding process identical to that of pivot-encoding.}
    \label{fig:marker-encoding}
\end{figure}

\section{Conclusion}
We have shown how to implement stack-based natural mergesort algorithms to be strictly in-place and retain their
original asymptotic running times, including showing that several such algorithms, but not all,
can be implemented to be strictly in-place using a simple walk-back algorithm.
Our bit-encoding methods in our jump-back are immutable, in that they do not require modifying any bits of any elements,
unlike previous encoding methods.

\bibliographystyle{splncs04}
\bibliography{references}

\clearpage
\appendix
\section{$c$-Adaptive ShiversSort is Walkable}\label{sec:shivers-appendix}

In this section, we show that the recent instance-optimal stack-based mergesort, 
$c$-Adaptive ShiversSort~\cite{juge24}, is walkable. 
See~\Cref{alg:c-shivers} for its merge policy. 
The author defines $c$ as a tunable parameter which allows
for a tighter runtime analysis. 
Such details are not relevant to our work, however.

\begin{algorithm}[htb]
	\caption{$c$-Adaptive ShiversSort's Merge Policy (see \Cref{line:merge-policy-loop} of \Cref{alg:generic-stack})}\label{alg:c-shivers}
	\begin{algorithmic}[1]
        \State $\ell_1 \gets \lfloor\log(r_1 / c)\rfloor$ (resp. $\ell_2, \ell_3$)
        \While{$|S| \geq 3$ and $\ell_3 \leq \max\{\ell_2, \ell_1\}$}
            \State $R_2 \gets \Call{Merge}{R_3, R_2}$ \Comment{removes $R_3$ from $S$}
        \EndWhile
    \end{algorithmic}
\end{algorithm}

We first show that all successful merges during the main loop ``pay for'' the walk-back cost required to perform them.
Assume that the condition on line 2 of \Cref{alg:c-shivers} is true. From \Cref{obs:always-know}, we already know $r_1$. 
The walkback cost to perform this operation is at most $r_2 + r_3$. This is exactly the cost to merge
the runs labeled $R_2$ and $R_3$. As a result, the walk-back cost over all successful merges is bounded by the total merge cost throughout the main loop, which is at most $m$. 
It remains to bound the walk-back cost when the condition fails to hold. We first 
determine the stopping condition for $c$-Adaptive ShiversSort.

\begin{figure}[hbt]
    \centering
    \includegraphics[width=.8\linewidth]{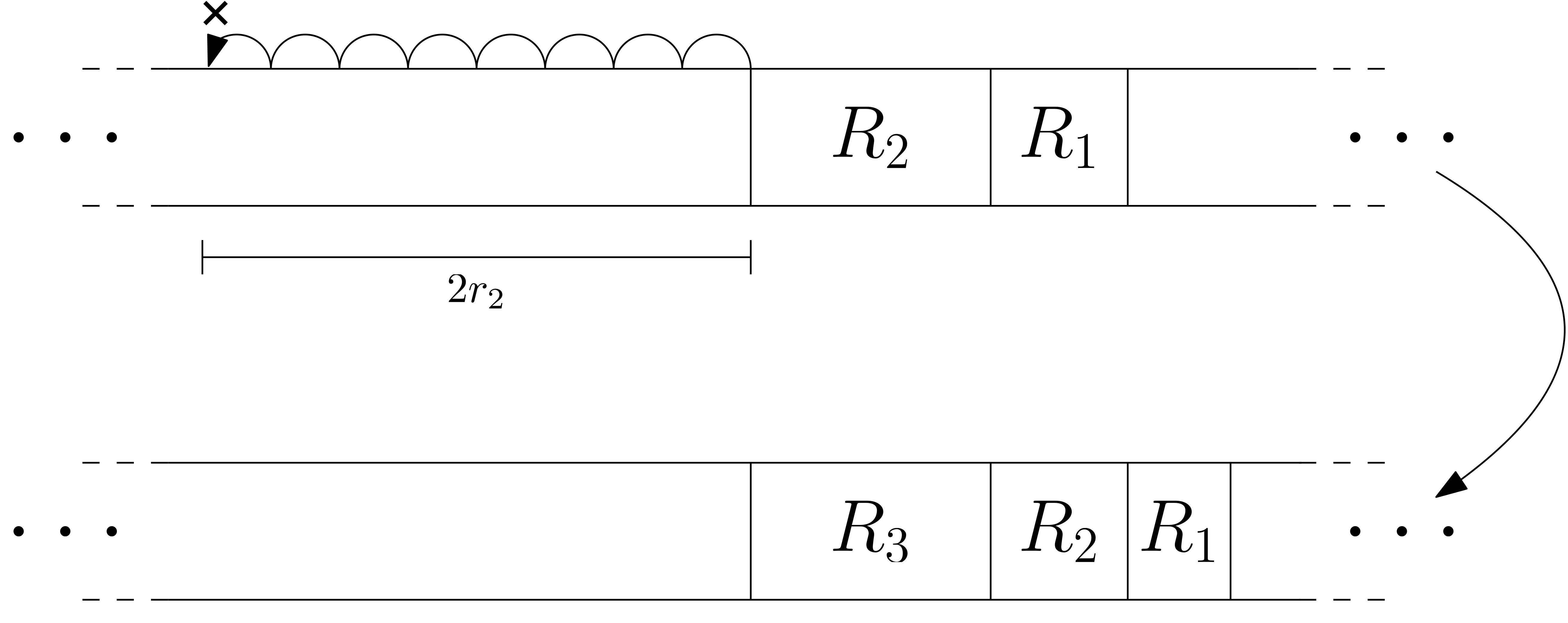}
    \caption{In this figure, we illustrate what happens when the merge condition
    fails in $c$-Adaptive ShiversSort. Assume w.l.o.g. that $\ell_2 =
    \max\{\ell_2, \ell_1\}$. We have walked back $2\times r_2$ steps yet have not
    found $r_3$. By \Cref{lem:stop-condition}, the merge condition must be false.
    Thus, we continue the main loop by pushing a new run onto the stack.
    }
    \label{fig:shiver-failure-case}
\end{figure}

\begin{lemma}\label{lem:stop-condition}
If $r_3 > 2\times \max\{r_1,r_2\}$, then $\ell_3 > \max\{\ell_1, \ell_2\}$.
\end{lemma}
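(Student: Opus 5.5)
Let $r^* = \max\{r_1,r_2\}$. Since $\lfloor\log(\cdot/c)\rfloor$ is non-decreasing in its argument, we have $\max\{\ell_1,\ell_2\} = \lfloor \log(r^*/c)\rfloor$. This reduces the lemma to a one-variable inequality: it suffices to show $\lfloor\log(r_3/c)\rfloor > \lfloor\log(r^*/c)\rfloor$ whenever $r_3 > 2r^*$.

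We argue directly from the definition of the floor. Put $x = \log(r^*/c)$ and $y = \log(r_3/c)$. The hypothesis $r_3 > 2r^*$ gives
\[
y = \log(r_3/c) > \log(2r^*/c) = 1 + x .
\]
Hence
\[
\lfloor y\rfloor \ge \lfloor 1 + x\rfloor = 1 + \lfloor x\rfloor > \lfloor x \rfloor ,
\]
where the first step uses monotonicity of the floor together with $y > 1+x$. This gives $\ell_3 > \max\{\ell_1,\ell_2\}$.

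The only point needing care is the reduction $\max\{\ell_1,\ell_2\} = \lfloor\log(r^*/c)\rfloor$. It follows because the map $r \mapsto \lfloor\log(r/c)\rfloor$ is monotone, so it commutes with $\max$. No other obstacle arises, since all the quantities are determined by $r_1, r_2, r_3$ and $c$, and the factor $2$ is exactly what shifts the logarithm by one. The strict inequality in the hypothesis is not even needed ($r_3 \ge 2r^*$ suffices).

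This is the stopping condition used afterwards. Once the walk-back from the start of $R_2$ covers more than $2\max\{r_1,r_2\}$ cells without finding the start of $R_3$, the merge test on line~2 of \Cref{alg:c-shivers} must fail, and we can stop walking.
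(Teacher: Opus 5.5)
Your proof is correct and is essentially the paper's argument: your reduction via $r^* = \max\{r_1,r_2\}$ and monotonicity of $r \mapsto \lfloor\log(r/c)\rfloor$ is the paper's ``without loss of generality $r_2 = \max\{r_1,r_2\}$,'' followed by the same chain $\lfloor \log(r_3/c)\rfloor \ge \lfloor \log(r^*/c) + 1\rfloor = \lfloor\log(r^*/c)\rfloor + 1$. Your added remark that $r_3 \ge 2r^*$ already suffices is also correct.
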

\begin{proof}
Without loss of generality, let $r_2 = \max\{r_1,r_2\}$. Then $\ell_2 \geq \ell_1$.  
Since $r_3 > 2r_2$, the following holds true. 
$$\ell_3 \geq \lfloor \log(2r_2/c) \rfloor = \lfloor \log(r_2 / c) + 1 \rfloor = \lfloor \log(r_2 / c) \rfloor + 1 > \ell_2$$
\qed
\end{proof}

The above lemma gives us the stopping condition for the walk-back algorithm applied to $c$-Adaptive ShiversSort: if we walk back $2\times \max\{r_1,r_2\}$ steps and have not found $r_3$, then we know that the merge condition in line 2 of \Cref{alg:c-shivers} is false. See \Cref{fig:shiver-failure-case}. 

\begin{lemma}\label{lem:failed-cost}
The walk-back cost incurred by failed merges is $O(m + n)$. 
\end{lemma}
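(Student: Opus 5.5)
The plan is to charge each failed merge test to the top two runs, $R_1$ and $R_2$, at the moment of failure, and then show that each run is charged only $O(1)$ times. By \Cref{lem:stop-condition}, when the merge condition on line 2 of \Cref{alg:c-shivers} fails, we stop walking after at most $2\max\{r_1,r_2\}$ steps. The step count is at most $\min\{r_3, 2\max\{r_1,r_2\}\}$, so the walk-back cost of a failed test is at most $2(r_1+r_2)$. (By \Cref{obs:always-know} and the argument of \Cref{lem:powersort-r2}, which applies verbatim here, we know $r_1$ and $r_2$ without walking, so the stopping condition can be checked.) We assign this cost to the two runs currently labeled $R_1$ and $R_2$, charging each twice its length.

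Next we bound how often a given run can receive such a charge. The key structural fact is that each merge sequence ends with at most one failed test. A failed test terminates the merge sequence, and the main loop then pushes a new run. So failed tests are in bijection with pushes (plus the final one, which the collapse phase absorbs).

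\emph{Charge as $R_1$.} When a test fails, the current $R_1$ is immediately pushed down to $R_2$. As observed in the proof of \Cref{lem:powersort-walkback-cost}, a run can never return to label $R_1$. Hence each run is charged as $R_1$ at most once.

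\emph{Charge as $R_2$.} Suppose run $s$ is $R_2$ during a failed test. After the next push, $s$ becomes $R_3$. For $s$ to be charged as $R_2$ again, it must regain label $R_2$ without being merged, which requires the run above it to be merged into something above. Merges in $c$-Adaptive ShiversSort only combine $R_3$ and $R_2$, and the merged run takes the label $R_2$. So if $s$ sits at $R_3$ and is not involved, the only way it moves up is if it itself is merged, which destroys it and creates a new intermediate run. A run sitting at $R_3$ or deeper whose upper neighbor merges stays at depth at least $3$.

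I expect this depth-monotonicity argument to be the main point to verify carefully. I would argue that a run's label can decrease only by pushes, which increase it, or by merges strictly below it, which cannot happen. Once $s$ leaves $R_2$ by a push, it never again becomes $R_2$ unless it is merged. So each run, original or intermediate, is charged as $R_2$ at most once.

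Summing, each run is charged at most $2|q|$ twice. Since the total size of all original and intermediate runs is $n+m$, the total walk-back cost of failed merges is at most $4(m+n)=O(m+n)$.
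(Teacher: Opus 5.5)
Your proof is correct and is essentially the paper's argument: you bound a failed test's walk-back by $2\max\{r_1,r_2\}\le 2(r_1+r_2)$ and charge it to the runs currently labeled $R_1$ and $R_2$. A run never returns to $R_1$ and can re-enter $R_2$ from $R_3$ only by being merged (hence destroyed), so the total is at most $4(m+n)$. The one blemish is the muddled sentence saying a run's label ``can decrease only by pushes, which increase it.'' What you mean, and what your preceding paragraph already establishes, is that a run's label changes only by pushes, which increase it, or by a merge of $R_2$ and $R_3$ above it, which lifts a run from depth $\ge 4$ to depth $\ge 3$ at most.
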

\begin{proof}
As described above, we spend $2\times \max\{r_1,r_2\}$ walking back when we fail a merge condition. Since no subsequent merge occurs, we do not immediately ``pay back'' the cost. For simplicity, we can upper bound this cost to at most $2\times (r_1 + r_2)$. Thus, every failed merge check incurs a walk-back cost of at most twice the length of the runs labeled $R_1$ and $R_2$. 

To complete the argument, we show that each run is labeled $R_1$ and $R_2$ for at most one merge sequence each throughout the main loop.
Consider any run $q$ with label $R_1$ or $R_2$ when a failed condition occurs. Assuming the run decomposition is not empty, we push a new run onto the stack and then repeat the main loop.
This relabels $q$ to $R_2$ or $R_3$, respectively. 
Since $R_1$ and $R_2$ are never merged in this algorithm, it is impossible to ascend from label $R_2$ to $R_1$. It is possible to change labels from $R_3$ to $R_2$ with a merge. Nevertheless, doing so destroys the constituent runs in the merge and replaces them with a new merged run. As a result, each run is labeled $R_1$ and $R_2$ for at most one merge sequence. 

The sum of all original and intermediate runs is $m + n$. Each run $q$ contributes a walk-back cost of $2|q|$ at most twice (once as $R_1$ and again as $R_2$). Then the total walk-back cost of failed merges is at most $4(m + n)$. 
\qed
\end{proof}

We have shown that the walk-back cost of successful merges and failed merges during the main loop of $c$-Adaptive ShiversSort is $O(m + n)$. In combination with \Cref{lem:walkable-collapse}, we conclude the following.
\begin{theorem}
$c$-Adaptive ShiversSort is walkable.
\end{theorem}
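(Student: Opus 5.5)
The theorem follows by assembling the preceding results. By \Cref{lem:walkable-collapse}, the collapse phase of any stack-based mergesort, in particular $c$-Adaptive ShiversSort, runs in-place under the walk-back algorithm with only a constant-factor overhead. It therefore suffices to show that the total walk-back cost incurred during the main loop is $O(m+n)$. Since $T_{\mathcal S}(n)\ge m+n$ for any stack-based mergesort $\mathcal S$, such a bound is at most a constant factor of the original running time, which is exactly the definition of walkable.

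For the main loop, I will split every evaluation of the condition on line 2 of \Cref{alg:c-shivers} into successful and failed checks. First I must confirm that the walk-back algorithm can evaluate the condition at all, i.e., that the algorithm is almost-$3$-aware with $r_1$ and $r_2$ always known.
\begin{itemize}
\item $r_1$ is known by \Cref{obs:always-know}.
\item $r_2$ is known by the same argument as \Cref{lem:powersort-r2}: any new $R_2$ is either the former $R_1$ or the output of a merge, and both lengths are known.
\item Only $r_3$ ever requires walking back, starting from the beginning of $R_2$.
\end{itemize}

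\textbf{Successful checks.} Walking back to recover $r_3$ costs at most $r_3\le r_2+r_3$, which equals the cost of the merge that follows. Summed over all successful checks, this is at most $m$.

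\textbf{Failed checks.} I use the stopping condition from \Cref{lem:stop-condition}: once we have walked $2\max\{r_1,r_2\}$ cells without reaching the start of $R_3$, the merge condition is certainly false. So each failed check costs at most $2(r_1+r_2)$. \Cref{lem:failed-cost} then charges this cost to the runs currently labeled $R_1$ and $R_2$. It shows that each run occupies each of these two labels during at most one failed check, because $R_1$ and $R_2$ are never merged with each other and a promotion from $R_3$ to $R_2$ happens only through a merge, which creates a new run. This bounds the failed checks by $4(m+n)$.

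Adding the two bounds gives $O(m+n)$ walk-back work in the main loop, and together with the collapse phase this proves the theorem. The only delicate point is the invariant that each run is charged at most once per label. That invariant is already established in \Cref{lem:failed-cost}, so the remaining work is bookkeeping.
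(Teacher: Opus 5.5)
Your proposal is correct and follows the paper's argument essentially step for step. Successful checks are paid for by the ensuing merge of $R_3$ and $R_2$; failed checks are capped at $2\max\{r_1,r_2\}$ by \Cref{lem:stop-condition} and charged to the runs labeled $R_1$ and $R_2$ as in \Cref{lem:failed-cost}; and \Cref{lem:walkable-collapse} handles the collapse phase. Your extra remark that $r_2$ is always known is a harmless refinement: a successful check then walks only $r_3$ rather than the paper's bound of $r_2+r_3$, and it makes explicit why the stopping threshold can be computed.
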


Upon further inspection, our proof relies on two key properties of $c$-Adaptive ShiversSort. First, we only ever merge $R_3$ and $R_2$ and never examine runs below $R_3$. Second, the walk-back cost of a failed merge condition is bounded by a constant multiple of our known runs. 
The first property allows us to easily show that the walk-back cost of successful merges is paid for by the subsequent merge. 
The second is a relaxed stopping condition that allows us to bound the walk-back cost of failed merges.
Indeed, with slight modification, the above proof applies to {\it any} almost-$k$-aware mergesort with the following two properties:
    \begin{enumerate}
        \item Only $R_k$ and $R_{k-1}$ are ever merged during the main loop.
        \item If $r_k > \alpha\sum_{i=1}^{k-1}r_i$ for some constant $\alpha$, no merge occurs. 
    \end{enumerate}

\begin{corollary}
    The original ShiversSort and $\alpha$-StackSort are walkable.
\end{corollary}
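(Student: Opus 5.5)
The plan is to check that the original ShiversSort and $\alpha$-StackSort both fit the general template just stated. That template is an almost-$k$-aware mergesort with two properties: (1) during the main loop only $R_k$ and $R_{k-1}$ are ever merged, and (2) if $r_k > \alpha\sum_{i=1}^{k-1} r_i$ for some constant $\alpha$, no merge occurs. Once both algorithms are shown to fit, I then need to explain why the proof for $c$-Adaptive ShiversSort carries over to any such algorithm.

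\emph{Step 1: the generalized argument.} Fix an almost-$k$-aware policy with properties (1) and (2).
\begin{itemize}
\item I first argue, in the style of \Cref{lem:powersort-r2}, that $r_1,\dots,r_{k-1}$ are always known when the merge condition is tested. A run enters one of the labels $R_1,\dots,R_{k-1}$ in only two ways: by being pushed (and $r_1$ is known by \Cref{obs:always-know}), or by being the output of a merge of $R_k$ with $R_{k-1}$, whose length we just computed. Labels in the top $k-1$ can only shift downward otherwise.
\item The only unknown length is therefore $r_k$. For a successful merge, walking back costs at most $r_k$, which is at most the merge cost $r_k + r_{k-1}$. Summed over all successful merges this is at most $m$.
\item For a failed test, property (2) gives the stopping rule: once we have walked $\alpha\sum_{i<k} r_i$ steps without finding the start of $R_k$, we stop. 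So a failed test costs $O(\sum_{i<k} r_i)$.
\item A failed test ends the merge sequence, and the next push moves every run down one label. By property (1), a run at label $R_j$ with $j<k$ can rise to $R_{j-1}$ only if a merge happens strictly below it. In particular $R_{k-1}$ is consumed by any merge, so runs at labels $R_1,\dots,R_{k-1}$ never ascend. Any rise in label is caused by a merge that destroys the runs involved and replaces them with a new run.
\item Hence each original or intermediate run is charged at most once at each of the $k-1$ top labels. The total failed walk-back cost is $O(\alpha(k-1)(m+n)) = O(m+n)$.
\end{itemize}
Adding the collapse phase via \Cref{lem:walkable-collapse} gives walkability.

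\emph{Step 2: ShiversSort.} Its policy merges $R_2$ and $R_3$ when $\lfloor\log r_3\rfloor \le \lfloor\log r_2\rfloor$ (sometimes stated with the top two runs). This is the $c=1$, $\ell_1$-free analogue of \Cref{alg:c-shivers}, so $k=3$ and only $R_3,R_2$ are merged. The argument of \Cref{lem:stop-condition} shows that $r_3 > 2r_2$ forces the condition to fail, so $\alpha=2$.

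\emph{Step 3: $\alpha$-StackSort.} Its policy merges $R_1$ and $R_2$ while $r_2 \le \alpha r_1$. This is $2$-aware, merges only $R_2$ and $R_1$ (property (1) with $k=2$), and $r_2 > \alpha r_1$ means no merge (property (2) with the same $\alpha$). The only unknown length is then $r_2$. Because the top run is the one being merged, I would check directly that $r_1$ stays known: after a merge it is the newly computed length, so \Cref{obs:always-know} still applies.

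\emph{Main obstacle.} The delicate part is the charging step in Step 1: showing rigorously that a run cannot be charged twice at the same label. I expect this to follow by the same destruction argument as in \Cref{lem:failed-cost}, but I would write it out carefully for general $k$. A secondary risk is matching the exact published merge conditions of ShiversSort and $\alpha$-StackSort to properties (1) and (2), possibly after reindexing labels.
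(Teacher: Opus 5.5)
Your proposal is correct and follows the paper's route. The paper's proof only checks that both algorithms satisfy the two template properties, and your Step~1 writes out the ``slight modification'' of \Cref{lem:failed-cost} that the paper leaves implicit; your observation that $r_1,\dots,r_{k-1}$ stay known, so only $r_k$ is ever walked, is a correct tightening. One correction: the original ShiversSort merges the top two runs under $2^{\lfloor\log r_2\rfloor}\le r_1$, so, like $\alpha$-StackSort, it is almost-$2$-aware with $k=2$ rather than the $R_2,R_3$ form you wrote first, and property~(2) holds with $\alpha=2$ because $r_2>2r_1$ gives $2^{\lfloor\log r_2\rfloor}>r_2/2>r_1$.
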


\begin{proof}
    Since both ShiversSort and $\alpha$-StackSort~\cite{buss19} are
    almost-$2$-aware, they trivially satisfy the first property.
    ShiversSort's merge condition, $2^{\lfloor \log{r_2} \rfloor} \leq
    r_1$, satisfies the second property with $\alpha=2$.
    $\alpha$-StackSort similarly does so with its own parameter $\alpha$. Thus,
    both stack-based mergesorts are walkable.
    \qed
\end{proof}
\section{TimSort is Not Walkable} \label{sec:counter}

It is tempting to try to extend our walk-back analyses to TimSort.
Unfortunately, in this section we show that TimSort is
not walkable by providing an input instance such that
TimSort implemented with the walk-back algorithm performs
asymptotically worse than standard, non-in-place TimSort.
This
motivates our jump-back algorithm, presented in \Cref{sec:encoding}, which works for virtually all
stack-based mergesorts.
In \Cref{sec:timsort-proof}, we use a similar argument to show that $\alpha$-MergeSort is also not walkable.

\begin{figure}
    \centering
    \includegraphics[width=0.9\linewidth]{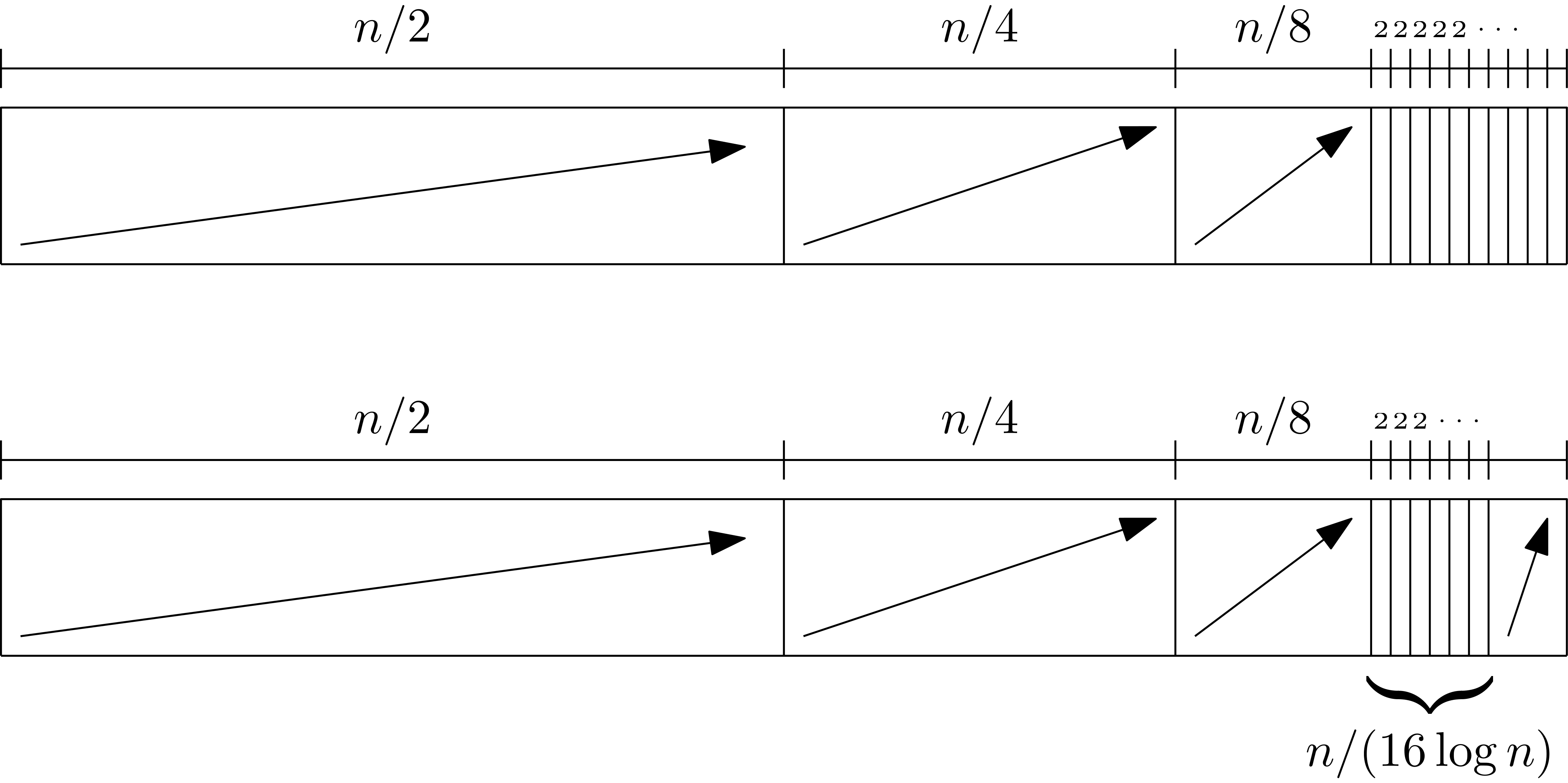}
    \caption{The worst-case input sequences we construct for our TimSort walk-back implementation. The top array represents the first version we analyze. The series of $n/16$ runs of size 2 cause $\Theta(\log n)$ {\it stages} to occur, blowing up walk-back cost to $\Omega(n\log n)$. The bottom array reduces the number of runs of size $2$ to $n/(16\log n)$. The walk-back cost remains $\Omega(n\log n)$ while the Shannon entropy of the array becomes constant. Therefore, TimSort with the walk-back algorithm will perform asymptotically worse than standard TimSort on the second sequence.}
    \label{fig:timsort-ctr}
\end{figure}

\begin{algorithm}
	\caption{TimSort's Merge Policy (see \Cref{line:merge-policy-loop} of \Cref{alg:generic-stack})}\label{alg:timsort}
	\begin{algorithmic}[1]
        \While{true}
            \If{$|S| > 3$ and $r_1 > r_3$} \Comment{merge condition \#1}
                \State $R_2 \gets \Call{Merge}{R_3, R_2}$
            \ElsIf{$|S| > 2$ and $r_1 \geq r_2$} \Comment{merge condition \#2}
                \State $R_1 \gets \Call{Merge}{R_2, R_1}$
            \ElsIf{$|S| > 3$ and $r_1 + r_2 \geq r_3$} \Comment{merge condition \#3}
                \State $R_1 \gets \Call{Merge}{R_2, R_1}$
            \ElsIf{$|S| > 4$ and $r_2 + r_3 \geq r_4$} \Comment{merge condition \#4}
                \State $R_1 \gets \Call{Merge}{R_2, R_1}$
            \Else
                \State \textbf{break}
            \EndIf
        \EndWhile
    \end{algorithmic}
\end{algorithm}

Our counterexample considers an array of length $n$ with the following preexisting runs. The first $n/2$ elements form a run, the next $n/4$ elements form a run, and the next $n/8$ elements form a run. The remaining elements form a sequence of $n/16$ runs, each of size 2. For simplicity, assume $n$ evenly divides into these runs. See \Cref{fig:timsort-ctr}. Our proof exploits the following key observations, which follow directly from TimSort's merge policy (see \Cref{alg:timsort}).

\begin{observation}\label{obs:merge-terminate}
If a merge sequence in TimSort produces a stack of runs whose lengths form a sequence of strictly decreasing powers of 2, then the merge sequence terminates.
\end{observation}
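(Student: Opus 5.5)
We argue directly from TimSort's merge policy in \Cref{alg:timsort}. The claim is that a merge sequence terminates as soon as the top of the stack consists of runs with $r_1 < r_2 < r_3 < \cdots$, all powers of $2$ and pairwise distinct. It suffices to show that, in such a configuration, none of the four merge conditions holds. Then the \textbf{else} branch executes, the loop breaks, and the merge sequence ends.

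We write the top runs as $r_i = 2^{a_i}$ with integers $a_1 < a_2 < a_3 < a_4 < \cdots$. Strict decrease of integer exponents going down the labels (upward in the stack) gives $a_{i+1} \ge a_i + 1$, and hence $r_{i+1} \ge 2 r_i$.

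We then check the conditions one at a time, using only this doubling inequality.
\begin{itemize}
\item Condition \#1, $r_1 > r_3$, fails because $r_3 \ge 4r_1 > r_1$.
\item Condition \#2, $r_1 \ge r_2$, fails because $r_2 \ge 2r_1 > r_1$.
\item Condition \#3, $r_1 + r_2 \ge r_3$, fails because $r_1 < r_2$ gives $r_1 + r_2 < 2r_2 \le r_3$.
\item Condition \#4, $r_2 + r_3 \ge r_4$, fails for the same reason: $r_2 + r_3 < 2r_3 \le r_4$.
\end{itemize}
Each condition is guarded by a stack-size test. When the stack is too short for a condition to apply, that condition is false trivially, so these cases need no separate treatment.

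The only point needing care is the strictness in conditions \#3 and \#4. Here we use that the powers are \emph{distinct}: this gives $r_1 < r_2$, which is what makes $r_1 + r_2$ strictly less than $r_3$. Equal powers of $2$ would instead allow $r_1 + r_2 = r_3$, and then condition \#3 would trigger a merge. Since the hypothesis is a strictly decreasing sequence, this case is excluded and the observation follows.
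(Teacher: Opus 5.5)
Your proposal is correct and follows the same route as the paper. The paper only asserts that this observation ``follows directly from TimSort's merge policy,'' and your condition-by-condition check using $r_{i+1} \ge 2r_i$ is exactly the verification it leaves implicit.
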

\begin{observation}\label{obs:timsort-validate}
We must know $r_2$ to validate that all merge conditions are false at the end of a merge sequence.
\end{observation}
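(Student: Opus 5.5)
The statement to prove is \Cref{obs:timsort-validate}: to certify that a TimSort merge sequence has ended, we must know $r_2$. My plan is a direct inspection of \Cref{alg:timsort}. A merge sequence ends exactly when the \textbf{break} branch is reached, that is, when all four merge conditions evaluate to false. So I would check which of the four conditions can be refuted without any knowledge of $r_2$.

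I would first classify the conditions by whether they mention $r_2$. Condition~\#1 ($r_1 > r_3$) is the only one that does not. Conditions \#2 ($r_1 \ge r_2$), \#3 ($r_1 + r_2 \ge r_3$) and \#4 ($r_2 + r_3 \ge r_4$) all involve $r_2$. Refuting \#2 already pins $r_2$ down further than what we know about $r_1$, and the strongest form of the argument should rest on \#2 alone.

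The key step is an adversary (indistinguishability) argument. Suppose the algorithm certifies termination knowing only $r_1$, together with partial bounds on the deeper runs obtained by walking back. Fix a stack configuration in which every condition is false, which requires $r_2 > r_1$. Now build a second configuration that agrees with the first on $r_1$, on every element the algorithm has inspected, and on the lengths of all runs other than $R_2$. The only change is that the boundary between $R_2$ and $R_3$ moves so that $r_2 \le r_1$. I would make the moved boundary fall inside the region not yet inspected, keeping the concatenated sorted contents consistent. In this second instance condition~\#2 holds, so TimSort would merge, yet the algorithm cannot tell the two instances apart.

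From this I would conclude that any correct certification must determine $r_2$ at least to the precision needed to decide $r_1 \ge r_2$. Under the walk-back model, where boundaries are found only by walking, that means locating the start of $R_2$, i.e.\ knowing $r_2$.

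The step I expect to be the main obstacle is making the indistinguishability argument respect the walk-back model precisely. The length $r_2$ is known exactly once the boundary of $R_2$ is found, and the algorithm might carry $r_2$ as retained $O(1)$ metadata from earlier. So I would phrase the observation as saying that $r_2$ is required information: either it is retained, or it must be recovered by walking back over $R_2$. Either way, the later cost analysis may charge a walk of length $r_2$ to every terminating merge sequence in which $r_2$ is not already known.
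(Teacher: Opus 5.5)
There is a genuine gap, and it sits where you put the weight of the argument. Condition~\#2 is the one merge condition that can be refuted \emph{without} knowing $r_2$. To see that $r_1 \ge r_2$ is false you only need $r_2 > r_1$. Walking back $r_1+1$ cells from the start of $R_1$ without meeting a run boundary certifies this (compare the failure case in the proof of \Cref{thm:2-merge-sort}, which uses exactly this kind of stopping condition). Once the algorithm has done that, your adversary has nowhere to put the moved $R_2/R_3$ boundary: every position in the uninspected region gives $r_2 \ge r_1+1$. So the second configuration with $r_2 \le r_1$ does not exist, and your step ``deciding $r_1 \ge r_2$ means locating the start of $R_2$'' is false. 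This also breaks the downstream use. In \Cref{lem:tim-cost} the final configuration is $\{n/2,n/4,n/8,2^i\}$ with $r_1 = 2^i$ and $r_2 = n/8$. An argument based on \#2 only forces a walk of about $2^i$ cells, not the $\Omega(n)$ the lemma needs.

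The missing idea is that knowing $r_2$ is forced by the conditions you set aside: \#1 ($r_1 > r_3$) and \#3 ($r_1 + r_2 \ge r_3$). Both are active whenever $|S|>3$, as in that four-run configuration. Condition \#1 does not mention $r_2$ syntactically, but under walk-back $R_3$ ends exactly where $R_2$ begins. So no information about $r_3$ is available until all of $R_2$ has been traversed. Your indistinguishability argument goes through if you aim it there. If the $R_2/R_3$ boundary has not yet been found, the adversary places it just beyond the inspected suffix of $R_2$ and makes $R_3$ short enough that \#1 or \#3 holds, so TimSort would merge rather than stop. Hence certifying termination requires locating the left end of $R_2$, i.e.\ knowing $r_2$ exactly (or having it stored). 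This is what the paper reads off directly from the merge policy, and it is what yields the $n/8$ walk-back cost in \Cref{lem:tim-cost}.
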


\Cref{obs:merge-terminate} has two notable effects. First, the beginning three runs, those of lengths $n/2$, $n/4$, and $n/8$, are only ever merged at the very end of the main loop. Second, repeatedly adding runs of size 2 causes them to collapse into powers of 2. For example, adding the first four runs to the stack produces the series of run lengths $\{n/2,n/4,n/8,2\}$. This produces no merges. However, adding the next run produces a stack like so: $\{n/2,n/4,n/8,2,2\}$, which collapses into $\{n/2,n/4,n/8,4\}$ by merge condition \#2 of \Cref{alg:timsort}. We generalize this with the following definition.

\begin{definition}
Let \emph{stage $i$} of TimSort run on our input instance be defined as follows. Consider a set of runs on the stack with sizes $$\{n/2,n/4,n/8,2^{i-1},2^{i-2}, \ldots, 8, 4, 2\}.$$
Since their sizes are decreasing powers of 2, no merges occur by \Cref{obs:merge-terminate}. 
In the next iteration of the main loop, an additional run is added.
We say this is the beginning of stage $i$.
The run lengths are like so: 
$$\{n/2,n/4,n/8,2^{i-1},2^{i-2}, \ldots, 8, 4, 2,2\},$$ which
causes merge condition \#2 to be repeatedly applied. For $i < \log(n/8)$ and by \Cref{obs:merge-terminate}, the merge sequence terminates when the stack is of size four, with run lengths like so: 
$$\{n/2,n/4,n/8,2^{i}\}.$$ 
\end{definition}

Furthermore, let $k = O(1)$ be the size of our shallow stack. As defined by the walk-back algorithm, we only maintain the top $k$ elements of the stack. The rest are forgotten.

\begin{observation}\label{obs:stage-size}
At the beginning of stage $i$, there are $i+3$ runs in the stack. 
\end{observation}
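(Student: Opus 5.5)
The claim is that at the beginning of stage $i$ the stack holds exactly $i+3$ runs. The plan is to count the runs directly from the definition of stage $i$, and to justify that configuration by induction on $i$ using \Cref{obs:merge-terminate} and merge condition \#2.

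\textbf{Counting from the definition.} Just before stage $i$ begins, the stack has run lengths
\[
\{n/2,n/4,n/8,2^{i-1},2^{i-2},\ldots,4,2\}.
\]
The three long runs contribute $3$ entries. The powers $2^{i-1},\ldots,2^1$ contribute $i-1$ entries, one for each exponent $1,\ldots,i-1$. That gives $i+2$ runs. By definition, stage $i$ begins when one more run of size $2$ is pushed, so the stack then holds $i+3$ runs, with lengths $\{n/2,n/4,n/8,2^{i-1},\ldots,2,2\}$.

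\textbf{Checking that this is really the state.} I would argue by induction on $i$ that the pre-stage configuration is reached, so that the count above is not vacuous.

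For the base case, after pushing the three long runs and the first run of size $2$, the lengths $\{n/2,n/4,n/8,2\}$ are strictly decreasing powers of $2$. By \Cref{obs:merge-terminate} no merge occurs, and this is the pre-stage configuration for $i=1$: there the middle list $2^{i-1},\ldots,2$ is empty, so the stack holds $i+2=3$ long runs plus the one run of size $2$. (How the indexing of the first stage is conventionally fixed should be double-checked.)

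For the inductive step, pushing a new size-$2$ run onto the pre-stage-$i$ configuration makes $r_1=r_2=2$. Merge condition \#1 ($r_1>r_3$) fails, so merge condition \#2 fires. Repeatedly applying it produces $4,8,\ldots$ until the top run has length $2^i$. The stack is then $\{n/2,n/4,n/8,2^i\}$, which terminates by \Cref{obs:merge-terminate} as long as $i<\log(n/8)$.

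After that, further pushes of size-$2$ runs rebuild a decreasing chain of powers of $2$ below $2^i$ before the next size-$2$ run causes the next collapse.

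\textbf{Main obstacle.} The delicate step is confirming that between collapses the pushed runs assemble exactly into the chain $2^{i-1},\ldots,2$ with no other merges triggered. In particular, condition \#3 ($r_1+r_2\ge r_3$) and condition \#4 must fail on strictly decreasing powers of $2$. This holds because $2^a+2^b<2^c$ when $a<b<c$. That inequality is exactly the content of \Cref{obs:merge-terminate}, so the whole argument reduces to counting.
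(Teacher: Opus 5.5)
Your count of $3+(i-1)+1=i+3$, read directly off the definition of stage $i$, is correct and is exactly the reasoning behind the paper's observation, which the paper states without proof as an immediate consequence of that definition. The inductive reachability check is extra and not needed, but its base case is off by one: the pre-stage configuration for $i=1$ is just $\{n/2,n/4,n/8\}$. So $\{n/2,n/4,n/8,2\}$ is the \emph{beginning} of stage~1, which performs no merges, and your inductive step's claim $r_1=r_2=2$ holds only for $i\ge 2$.
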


\begin{lemma}
There are $\Theta(\log n)$ stages that occur throughout TimSort's main loop. 
\end{lemma}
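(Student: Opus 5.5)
Our plan is to track how the tail of the stack, namely the runs that lie above the three large runs of lengths $n/2$, $n/4$, $n/8$, evolves as the $n/16$ runs of size $2$ are pushed one by one. We claim that this tail behaves like a binary counter. After $t$ runs of size $2$ have been pushed and the merge sequence has finished, the tail consists of runs of lengths $2^{j+1}$, one for each bit $j$ set in the binary representation of $t$, listed in decreasing order.

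\emph{Invariant.} We prove this by induction on $t$. Assume the stack is $\{n/2,n/4,n/8\}$ followed by strictly decreasing powers of $2$. Pushing a new run of size $2$ repeatedly triggers merge condition \#2 while the top two runs are equal. This is the usual carry operation: it merges a run of size $2^{j}$ with a new run of size $2^{j}$. Once the top run is strictly smaller than the one beneath it, every length on the stack is again a strictly decreasing power of $2$, so by \Cref{obs:merge-terminate} the sequence terminates. We must also check that conditions \#1, \#3 and \#4 never fire in the middle of a carry. Take a moment when the stack is strictly decreasing powers of $2$ except that the top two runs are equal, of size $2^{j}$. The run below them has size at least $2^{j+1}$. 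Hence $r_1 > r_3$ fails. Condition \#2 holds, and it is tested before \#3, so it is the one applied. Throughout, the three large runs never merge. The reason is that every tail run has size less than $n/8$ as long as fewer than $n/16$ pairs have been absorbed, which fixes a bound on the carry length.

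\emph{Counting stages.} By this invariant, stage $i$ begins exactly when $t+1$ is a multiple of $2^{i-1}$ but the carry reaches level $i-1$. More simply, the tail takes the form $\{2^{i-1},\dots,2\}$ precisely when $t = 2^{i-1}-1$ (all bits set), and the next push collapses it into $\{2^{i}\}$. So stage $i$ occurs whenever $2^{i}-1 \le n/16$, that is, for every $i$ from $1$ up to about $\log(n/16)$. The total mass of the tail, $2t \le n/8$, caps $i$ at $\log(n/8)$. Together these give $\Theta(\log n)$ stages: at least $\log n - O(1)$ of them, and at most $\log(n/8)$, one for each $i$.

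The main obstacle is the careful verification of the merge-order argument in the first step, namely that TimSort's conditions applied in order really reproduce the binary-counter carries. It would fail if, for instance, condition \#4 ($r_2 + r_3 \ge r_4$) fired. We would check it directly. With strictly decreasing powers of $2$ we have $r_2 + r_3 < 2r_3 \le r_4$, and for the boundary run of size $n/8$ the tail sum is at most $n/8 - 2$. This settles the termination check of \Cref{obs:merge-terminate}, including the case where runs $R_3$ and $R_4$ come from the large prefix.
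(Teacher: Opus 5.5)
Your proposal is correct and uses the same counting as the paper: after stage $i$ the last run has size $2^i$, and the $n/8$ elements outside the three large runs bound $i$, so $i$ ranges over $\Theta(\log n)$ values. Your binary-counter invariant adds a verification that the paper skips, since the paper builds the collapse behaviour directly into its definition of a stage. One small cleanup: drop the sentence characterizing stage $i$ by $t+1$ being a multiple of $2^{i-1}$, which is not exact (for example, $t+1=3\cdot 2^{i-1}$ also carries to level $i-1$ without the stage configuration), and rely on your correct characterization $t=2^{i-1}-1$. Under that characterization the exact existence condition is $2^{i-1}\le n/16$; your $2^i-1\le n/16$ is only sufficient, which is still enough for the lower bound.
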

\begin{proof}
The first three runs have total length $7n/8$. Then there are $n/8$ remaining elements across all remaining runs. At the end of stage $i$, the last run is of size $2^i$. This occurs for all $i$ such that $1 \leq i < \log(n/8)$ by definition. Then $i$ is at most $\log(n/8) - 1$. Thus, we have $\Theta(\log n)$ stages throughout the algorithm.
\qed
\end{proof}

\begin{corollary}\label{lem:TimSort-stages}
There are $\Theta(\log n)$ stages such that $i > k$.
\end{corollary}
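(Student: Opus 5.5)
The plan is to count directly, using the preceding lemma. That lemma says the stages are indexed by $i$ with $1 \le i \le \log(n/8) - 1$, so there are exactly $\lfloor \log(n/8) \rfloor - 1 = \log n - 4$ of them (up to rounding), and each integer $i$ in that range gives exactly one stage.

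The stages with $i > k$ are those with $k+1 \le i \le \log(n/8) - 1$. Their number is $\log(n/8) - 1 - k = \log n - 4 - k$. Since $k = O(1)$ is the fixed size of the shallow stack, this quantity is at least $\tfrac12 \log n$ once $n$ is large enough, say $\log n \ge 2(k+4)$.

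It is also at most the total number of stages, which is $O(\log n)$. Together these give $\Theta(\log n)$.

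The only point needing care is that $k$ does not grow with $n$, which holds by the definition of the walk-back algorithm. A subtraction of a constant from a $\Theta(\log n)$ count therefore leaves a $\Theta(\log n)$ count. Beyond this there is no real obstacle; the statement is an immediate arithmetic consequence of the preceding lemma.

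Where this is heading, though not needed for the corollary itself: by \Cref{obs:stage-size}, stage $i$ begins with $i+3 > k+3$ runs on the stack, so some runs have been forgotten. This is presumably what forces the expensive walk-backs in the subsequent argument.
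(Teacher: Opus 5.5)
Your proposal is correct and takes the same approach as the paper, whose proof is the one-line remark that $k$ is a constant. Removing the at most $k$ stages with $i \le k$ from the $\Theta(\log n)$ stages of the preceding lemma therefore still leaves $\Theta(\log n)$ stages; your explicit count of $\log n - 4 - k \ge \tfrac12 \log n$ for large $n$ spells out that arithmetic.
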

\begin{proof}
Follows from the fact that $k$ is a constant.
\qed
\end{proof}

\begin{lemma}\label{lem:tim-cost}
At all stages $i$ such that $i > k$, applying the walk-back algorithm incurs a
walk-back cost of $\Omega(n)$.
\end{lemma}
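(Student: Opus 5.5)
The argument follows the stack contents at the end of stage $i$ with $i>k$. Stage $i$ begins with the stack
$$\{n/2,n/4,n/8,2^{i-1},\ldots,4,2,2\},$$
and merge condition \#2 fires repeatedly: each time the top two runs (two copies of $2^j$) merge into a run of size $2^{j+1}$. After each merge the new top run has label $R_1$, and we need $r_2$, the length of the run below it, to test the merge conditions again (and, by \Cref{obs:timsort-validate}, to confirm at the very end that no condition holds).

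The key point is when the shallow stack runs dry. By \Cref{obs:stage-size} there are $i+3>k+3$ runs at the start of stage $i$. Each merge removes one run from the top $k$ window, and the walk-back algorithm keeps no information about deeper runs. After at most $k$ merges, every run whose length we still need, from the small power-of-two runs up to the run of size $n/8$, is one we have forgotten.

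At the end of the stage the stack is $\{n/2,n/4,n/8,2^i\}$, with $R_1$ of size $2^i$ and $R_2$ of size $n/8$. To certify termination we must learn $r_2$, or at least enough about it. I will argue that the stopping condition cannot cut the walk short. Condition \#2 fails only if $r_1<r_2$, and condition \#3 fails only if $r_1+r_2<r_3$. Deciding either of these requires walking back over all of $R_2$, or at least beyond $r_1=2^i$. More precisely, certifying $r_1+r_2<r_3$ needs $r_3=n/4$ to be known. This forces walking across $R_2$, of length $n/8$, and then across $R_3$, of length $n/4$. Together this is $\Omega(n)$ steps.

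The delicate point is that a cleverer stopping rule might avoid walking all the way. Checking condition \#2 could stop after $2^i+1$ steps. But condition \#3 compares $r_1+r_2$ against $r_3$, and that requires the exact value of $r_2=n/8$ and then enough of $r_3$ to exceed $n/8+2^i$. Either way the walk covers more than $n/8$ cells. So the cost at the termination check of every stage with $i>k$ is $\Omega(n)$, and the lemma follows.

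The main obstacle is making rigorous that $R_2$ and $R_3$ are forgotten at that moment. The top runs $n/2,n/4,n/8$ sit deep in the stack when the stage begins. They reach positions $R_2$–$R_4$ only through merges, so the shallow stack never retained them, given that there were more than $k$ runs above them.
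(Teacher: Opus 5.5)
Your proposal is correct and follows the paper's route. For $i>k$ the run of size $n/8$ has been forgotten, it becomes $R_2$ when stage $i$ ends with $\{n/2,n/4,n/8,2^i\}$, and certifying that the merge sequence has terminated forces a walk across it at cost at least $n/8$. Your added check that no stopping rule can shortcut this is just a more explicit version of the paper's observation that $r_2$ must be known to validate termination: merge conditions \#1 and \#3 need information about $R_3$, which can only be reached by traversing all of $R_2$.
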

\begin{proof}
By \Cref{obs:stage-size}, there are $i+3$ runs in the stack at the beginning of this stage. By the fact that $i > k$, this means that the bottom three runs of size $n/2$, $n/4$, and $n/8$ have been forgotten. 
When the merge sequence terminates, there are four runs in the stack. 
This means that the run of size $n/8$ takes label $R_2$. By \Cref{obs:timsort-validate}, we must know $r_2$ to determine that the merge sequence is indeed finished. The result follows from the fact that $r_2 = n/8 \in \Omega(n)$.
\qed
\end{proof}

By \Cref{lem:TimSort-stages} and \Cref{lem:tim-cost}, the total walk-back cost of TimSort under this input instance is in $\Omega(n\log n)$. As described, however, our counterexample input instance has entropy $\Omega(\log n)$. Thus, standard TimSort would also run in time $\Omega(n\log n)$. 

A small modification to our counterexample gives
us the desired result.
Instead of $n/16$ runs of size 2, reduce this to $n/(16\log n)$ runs. Have the remaining elements form a single run of size $n/8 - n/(8\log n)$. See \Cref{fig:timsort-ctr}. 
The Shannon entropy of this new array is constant, so 
standard TimSort would run in linear time.
However, a similar analysis of this variant shows
that applying the walk-back algorithm incurs a walk-back
cost of $\Omega(n\log(n/\log n)) \subseteq
\Omega(n\log n)$.
We conclude the following.

\begin{theorem}\label{thm:timsort-proof}
TimSort is not walkable.
\end{theorem}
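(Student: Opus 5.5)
To prove \Cref{thm:timsort-proof}, I will use the modified counterexample: runs of length $n/2$, $n/4$, $n/8$, then one run of length $n/8 - n/(8\log n)$, and finally $t = n/(16\log n)$ runs of size $2$. I will show two things. First, standard TimSort sorts this input in $O(n)$ time. Second, the walk-back implementation with any constant shallow-stack size $k$ spends $\omega(n)$ time, in fact $\Omega(n\log n)$. Together these show the walk-back algorithm increases TimSort's running time by more than a constant factor.

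For the first part, I bound the entropy directly. There are four large runs, each contributing $O(1)\cdot n$ to $n\mathcal H(A)$. The $t$ short runs contribute $2t\log(n/2) = O(n/\log n \cdot \log n) = O(n)$. Hence $\mathcal H(A) = O(1)$, and the instance-optimality of TimSort~\cite{auger2019} gives running time $O(n(1+\mathcal H(A))) = O(n)$.

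For the second part, I redo the stage analysis. One detail needs care: the fourth large run of length $n/8 - n/(8\log n)$ is not a power of $2$, so I must check that it also stays untouched until the collapse. I will handle this with a slight generalization of \Cref{obs:merge-terminate}. Whenever the stack is the four big runs followed by strictly decreasing powers of $2$, each much smaller than the big runs, no merge condition holds. All four TimSort conditions compare only adjacent triples, and for triples among the big runs the Fibonacci-type inequalities fail. For example, $n/8+(n/8-n/(8\log n)) < n/4$ and $n/4 + n/8 < n/2$. For the size-$2$ runs, the same argument as stage $i$ applies, except the stack now has a fifth fixed bottom run. By the same counting as the lemma above, the number of stages is $\log(2t) - O(1) = \log n - \log\log n - O(1) = \Theta(\log n)$. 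Since $k$ is constant, $\Theta(\log n)$ of these stages have more than $k$ small runs on the stack, so the big runs have been forgotten.

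At the end of each such stage, the stack collapses to the four big runs plus a single power-of-$2$ run. The run labeled $R_2$ is then the fourth big run, of length $n/8 - n/(8\log n) = \Omega(n)$, and its length was forgotten. By \Cref{obs:timsort-validate}, confirming termination requires knowing $r_2$. The walk-back algorithm has no information beyond the shallow stack, so it must walk back the full $\Omega(n)$ cells. It cannot stop earlier, because the check $r_1 \ge r_2$ needs $r_2$ exactly relative to a small $r_1$, and the check $r_2 + r_3 \ge r_4$ involves $r_2$ itself. Hence each stage costs $\Omega(n)$, and the total walk-back cost is $\Omega(n\log n)$, which is $\omega(n)$.

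The main obstacle is making the per-stage lower bound rigorous against every possible stopping condition. The walk-back algorithm must not be able to certify that the merge sequence has finished without learning $r_2$ exactly. I will argue this adversarially. While fewer than $r_2$ cells have been walked, both $r_2 \le r_1$ and $r_2 > r_1$ remain consistent with everything seen so far, because a run boundary could lie anywhere further back. So the comparison in merge condition \#2 cannot be resolved early in the direction needed to terminate, and the algorithm must continue walking.
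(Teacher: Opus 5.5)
Your plan is the paper's proof: the same instance (runs of length $n/2,n/4,n/8$ followed by size-$2$ runs), the same $\Theta(\log n)$ stages in which a forgotten linear-length run resurfaces as $R_2$, and the same thinning of the size-$2$ runs to $n/(16\log n)$ so that $\mathcal H(A)=O(1)$.

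\textbf{The gap.} The step you flag as the main obstacle is resolved with a false argument. Merge condition \#2 does not force a long walk. As soon as you have walked $r_1+1$ cells into $R_2$ without meeting a descent, you know $r_2>r_1$, so $r_1\ge r_2$ is refuted after $O(2^i)$ steps. It is not true that $r_2\le r_1$ stays consistent until $r_2$ cells have been seen.

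What actually forces the cost is any condition that mentions a run \emph{below} $R_2$:
\begin{itemize}
\item To refute \#1 ($r_1>r_3$), you must see at least $r_1$ cells of $R_3$.
\item To refute \#3 ($r_1+r_2\ge r_3$), you must see at least $r_1+r_2+1$ cells of $R_3$.
\end{itemize}
$R_3$ lies entirely to the left of $R_2$, and the walk can only locate $R_2$'s left end by reaching the descent there. So the walk must traverse all $r_2=\Omega(n)$ cells of $R_2$. Your passing remark about \#4 is the right kind of reason. Since \#1 is tested first, this gives the per-stage $\Omega(n)$ bound, and it should replace your adversary argument about \#2.

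\textbf{A smaller point.} Placing the extra run $L=n/8-n/(8\log n)$ directly beneath the size-$2$ runs makes condition \#3 tight at the end of every stage. With stack $\{n/2,n/4,n/8,L,2^i\}$, it fires if and only if $2^i\ge n/8-L=n/(8\log n)$. So ``each much smaller than the big runs'' does not justify that no merge occurs, and your generalized termination claim is false as stated. What actually holds is that the size-$2$ runs have total size exactly $n/(8\log n)$. Hence a single merged small run reaches that size only at the very last collapse. The same bound rules out \#4 when two small runs sit above $L$.

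Alternatively, put the long run after the size-$2$ runs. That avoids the tightness entirely and makes the stage analysis identical to that of the first instance.
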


\subsection{Additional Counterexample}\label{sec:timsort-proof}

We can use a similar argument to above to show that $\alpha$-MergeSort
is not walkable.

First, let us consider the merge policy of $\alpha$-MergeSort.

\begin{algorithm}
	\caption{$\alpha$-MergeSort's Merge Policy (see \Cref{line:merge-policy-loop} of \Cref{alg:generic-stack})}\label{alg:alpha-mergesort}
	\begin{algorithmic}[1]
        \While{$r_2 < \alpha r_1$ or $r_3 < \alpha r_2$}
            \If{$r_3 < r_1$}
                \State $R_2 \gets \Call{Merge}{R_3, R_2}$
            \Else
                \State $R_1 \gets \Call{Merge}{R_2, R_1}$
            \EndIf
        \EndWhile
    \end{algorithmic}
\end{algorithm}

\begin{theorem}
    There exist input sequences that can be sorted in $O(n)$ time using standard
    $\alpha$-MergeSort that take time $\Omega(n\log n)$ when using $\alpha$-MergeSort
    implemented with the walk-back algorithm. Therefore, $\alpha$-MergeSort is not walkable.
\end{theorem}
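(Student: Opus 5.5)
The plan is to mirror the TimSort counterexample of \Cref{sec:counter}. I will build an input on which standard $\alpha$-MergeSort runs in linear time, yet the walk-back implementation is forced $\Theta(\log n)$ times to walk back across a run of length $\Theta(n)$ that has been forgotten. Throughout, $\varphi<\alpha<2$, and $k=O(1)$ is the size of the shallow stack.

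\textbf{Construction.} The array starts with three long runs of lengths $n/2$, $n/4$, $n/8$. These are followed by $n/(16\log n)$ runs of length $2$ and then one final run of length $n/8-n/(8\log n)$. As in \Cref{fig:timsort-ctr}, only $O(1)$ runs carry a $\Theta(1)$ fraction of the mass, and the short runs have total mass $O(n/\log n)$, each contributing $O(\log n)$ per element to $n\mathcal H(A)$. Hence $\mathcal H(A)=O(1)$. Since $\alpha$-MergeSort is instance-optimal for $\varphi<\alpha<2$, standard $\alpha$-MergeSort runs in $O(n)$ time on this input.

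\textbf{Key behavioral lemma.} The analogue of \Cref{obs:merge-terminate} is this: a stack whose lengths are strictly decreasing powers of two, with consecutive ratios at least $2>\alpha$, violates neither $r_2<\alpha r_1$ nor $r_3<\alpha r_2$. Such a stack is therefore stable. In particular the bottom triple $n/2,n/4,n/8$ is never touched during the main loop, except possibly at the very end when the final long run arrives.

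I then show by induction that the size-$2$ runs behave as a binary counter. Pushing a $2$ onto a stack of small runs $2^{a_1}>2^{a_2}>\dots>2^{a_t}$ triggers a cascade. Two equal runs $x,x$ on top satisfy $x<\alpha x$, and the branch test $r_3<r_1$ fails whenever $r_3\ge 2x$, so $R_2,R_1$ are merged. This repeats exactly as carry propagation. For example, $\{8,4,2,2\}\to\{8,4,4\}\to\{8,8\}\to\{16\}$, while $\{8,2,2\}\to\{8,4\}$ stops because $8\ge \alpha\cdot 4$. I must also check the interaction with the bottom run $n/8$ as long as the small top run is at most $n/16$: the branch $r_3<r_1$ never fires against $n/8$, and the test $r_3<\alpha r_2$ with $r_3=n/8$ fails. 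This yields \emph{stages} defined as in the TimSort section. Whenever the number of size-$2$ runs pushed so far equals $2^{i}$, the stack has the form $\{n/2,n/4,n/8,2^{i+1}\}$. Immediately before that push, the stack had depth $i+3$. There are $\Theta(\log n)$ such stages, and $\Theta(\log n)$ of them have $i>k$.

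\textbf{Walk-back cost.} At the end of a stage with $i>k$, the run of length $n/8$ had been pushed out of the shallow stack and now sits at label $R_2$. To terminate the merge sequence, the algorithm must certify that $r_3<\alpha r_2$ is false, i.e.\ that $r_3\ge\alpha r_2$. I expect this step to be the main obstacle, because the stopping condition allows early termination, and I must argue it cannot help here.

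The walk starts at the left end of $R_1$ and moves leftward. It cannot learn anything about $r_3$ until it has traversed all of $R_2$. Moreover, no prefix of $R_2$ shorter than its full length can certify $r_3\ge\alpha r_2$, since $r_2$ could still be larger. The check $r_2<\alpha r_1$ can stop after $\alpha r_1$ steps, but the second check forces a walk of at least $r_2=n/8$. So each such stage costs $\Omega(n)$, and the total walk-back cost is $\Omega(n\log n)$.

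Finally, I will confirm that appending the final long run does not invalidate anything: it only affects merges after all stages are complete. This gives a linear-time instance on which the walk-back version takes $\Omega(n\log n)$ time, so $\alpha$-MergeSort is not walkable.
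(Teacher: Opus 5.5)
Your proposal is correct and follows essentially the same route as the paper. You reuse the low-entropy TimSort counterexample and note that only $R_2,R_1$ merges fire, so the size-$2$ runs collapse stage by stage exactly as under TimSort. You then charge $\Omega(n)$ per stage with $i>k$ to the forced walk across the forgotten run of length $n/8$ needed to certify that $r_3<\alpha r_2$ is false. The only difference is presentational and in your favor: the paper fixes $\alpha=2$ without loss of generality and only sketches rescaled runs $n/\alpha,n/\alpha^2,n/\alpha^3$ for other $\alpha$, whereas you keep the power-of-two sizes and observe that the ratio $2>\alpha$ makes the same instance work directly for every $\varphi<\alpha<2$.
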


\begin{proof}
    Our proof is similar to the proof for TimSort, and in fact, when $\alpha =
    2$, the identical counterexample works.
    Similar sequences work for other values of $\alpha$ as well where the three
    original runs are of sizes $n / \alpha, n / \alpha^2, n / \alpha^3$.
    Without loss of generality, we assume $\alpha = 2$ for this proof and use
    the same counterexample as for TimSort.
    
    We note that, similar to TimSort, only the first while loop condition (on
    line 1) will ever succeed, while the check in the if statement (on line 2)
    will always fail, and consequently only the merge on line 5, between runs
    $R_3$ and $R_2$, will ever occur.
    Therefore the algorithm will perform identical merges to TimSort on this
    input.
    And as with TimSort, the problematic walk-back cost occurs after a stage
    has collapsed, and the stack has just four runs: $\{n / 2, n / 4, n / 8, 2^i\}$.
    On the second while loop condition, where we check if $r_3 < 2 r_2$, when
    implementing the walk-back algorithm, we must at least fully load $R_3$ to
    confirm that the condition is false.
    Therefore, the walk-back cost is at least $n / 8$ (and will in-fact be
    $3 n / 8$), and our TimSort analysis holds.
    \qed
\end{proof}

\section{Additional Walkable Mergesort Proofs}\label{sec:additional-proofs}

In this appendix, we provide proofs showing that additional
mergesorts are walkable.

\subsection{The Original TimSort}

We showed in \Cref{sec:counter} that the modern TimSort algorithm~\cite{auger2019,de2015openjdk} is not walkable. Surprisingly, in this section we show that the original TimSort algorithm~\cite{tim}, in which the fourth condition is removed, is walkable. Refer to \Cref{alg:og-timsort} for the merge policy of this algorithm. 
In this section, we prove the following theorem.

\begin{theorem}
Original TimSort is walkable.
\end{theorem}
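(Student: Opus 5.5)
We follow the template of \Cref{thm:powersort} and of the $c$-Adaptive ShiversSort proof. By \Cref{lem:walkable-collapse} only the main loop needs analysis, and we must show that the walk-back overhead is $O(m+n)$. The proposed charging argument is not yet complete: one case, flagged below, is open. Original TimSort uses only merge conditions \#1--\#3 of \Cref{alg:timsort}, so it is almost-$3$-aware and the shallow stack holds $R_1,R_2,R_3$. By \Cref{obs:always-know} we always know $r_1$. The costly lookups are $r_2$ and $r_3$ after a merge of $R_2$ and $R_1$, because that merge promotes a forgotten run to label $R_2$.

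\textbf{Stopping rules.} Conditions \#1 and \#3 are jointly false exactly when $r_3 \ge r_1+r_2$, since this already forces $r_3>r_1$. Hence, once $r_2$ is known, we walk back from the start of $R_2$ at most $r_1+r_2$ cells. Either we find the start of $R_3$, or we stop and conclude that neither condition holds. Condition \#2 only asks whether $r_2\le r_1$, so its stopping rule is to walk at most $r_1$ cells past the start of $R_1$.

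\textbf{Successful merges.} If condition \#1 fires, the lookup cost is at most $r_2+r_3$, which is exactly the cost of merging $R_3$ with $R_2$. If condition \#2 or \#3 fires, the cost of the $R_2$ lookup plus the truncated $R_3$ lookup is at most $2(r_1+r_2)$, which is twice the cost of merging $R_2$ with $R_1$. In both cases the total is $O(m)$.

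\textbf{Failed checks.} At most one failed check ends each merge sequence. The truncated walk for $R_3$ costs at most $r_1+r_2$. It is charged to the current $R_1$ and $R_2$ once $r_2$ is known, and paying $r_2$ itself is the hard case below. The current $R_1$ is charged only once, because after the next push it can never again become $R_1$, as in the proof of \Cref{lem:failed-cost}. Charging the current $R_2$ is harder: unlike in $c$-Adaptive ShiversSort, a run can return to label $R_2$ without being destroyed, namely when the two runs above it are merged. This is exactly the mechanism behind \Cref{thm:timsort-proof}.

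\textbf{Main obstacle.} The open case is a check that fails while the current $R_2$ is a run $X$ just re-exposed by a merge of $R_2$ and $R_1$. Write $s$ for the size of the merged run, which is now $R_1$. If $r_X\le s$, the whole lookup costs $O(s)$ and is charged to that merge. If $r_X>s$, then condition \#2 fails after $s$ steps, but deciding condition \#3 appears to require $r_X$ exactly. The first approach to try is to charge $r_X$ to the runs sitting above $X$, using the fact that at the previous termination the invariant $r_3\ge r_1+r_2$ held with $X$ as $R_3$, together with the growth of run sizes that the invariants force. The aim is to show that between two consecutive costly re-exposures of $X$ the material merged above $X$ has total size $\Omega(r_X)$. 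That would let the cost $r_X$ be billed to merges of total size $\Omega(r_X)$ above $X$, each charged $O(1)$ times. If this fails, the next approach is a potential function that counts how often $X$ can be re-exposed with $r_X>s$, using the missing condition \#4 to show it is $O(1)$ per run. That is precisely where the modern variant breaks down.
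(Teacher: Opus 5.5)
Your proposal does not prove the statement. The case you leave open is the whole difficulty, and it cannot be closed.

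Take the low-entropy input of \Cref{sec:counter}: runs of sizes $n/2$, $n/4$, $n/8$, then $n/(16\log n)$ runs of size $2$, then one filler run of size $n/8-n/(8\log n)$.
\begin{itemize}
\item During the stages, the stack above the $n/8$ run $X$ is a strictly decreasing sequence of powers of two, except that the top two may be equal. So the only condition that ever succeeds is \#2, and condition \#4 never fires. Original TimSort therefore makes exactly the same merges as the modern version.
\item Its total merge cost is $O(n)$, since all merging above $X$ is a binary counter over $n/(8\log n)$ elements.
\item At the end of each of the $\Theta(\log n)$ stages with $i>k$, $X$ is re-exposed as $R_2$ by a merge of $R_2$ and $R_1$ with $s=2^i\ll r_X$.
\item While $X$ was $R_3$ it was only probed by truncated walks, whose progress is discarded, so $r_X$ is unknown. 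Deciding condition \#1 (let alone \#3) requires locating $R_3$, which means walking through all of $X$.
\item That costs $\Omega(n)$ per stage and $\Omega(n\log n)$ in total, on an input with $O(1)$ entropy.
\end{itemize}

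This example defeats both of your proposed repairs:
\begin{itemize}
\item The merge work above $X$ sums to $O(n)$ over all $\Theta(\log n)$ re-exposures, so it cannot pay $\Omega(r_X)$ each time.
\item Condition \#4 plays no role here, so its absence cannot bound the number of re-exposures.
\end{itemize}
The mechanism behind \Cref{thm:timsort-proof} is not condition \#4. It is the promotion of an intact $R_3$ to $R_2$ by $R_2$--$R_1$ merges, which Original TimSort shares.

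The paper's proof does not supply the missing idea either. It bounds each failed check by $2(r_1+r_2)$. It then asserts that a run failing as $R_2$ is pushed to $R_3$ and ``cannot move back to either $R_2$ or $R_1$ without merging,'' so each run is charged as $R_2$ at most once. That is exactly the step you rightly distrust. Conditions \#2 and \#3 lift $R_3$ to $R_2$ without touching it, and in the example above $X$ is charged as $R_2$ $\Theta(\log n)$ times.

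So for the walk-back algorithm as defined (top-$k$ lengths, backward scanning, stopping conditions that discard partial progress), the statement fails. No completion of your charging scheme can establish it. Your handling of successful merges and of the $R_1$ charges is fine, but it does not reach this case.

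A minor point: conditions \#1 and \#3 are jointly false iff $r_3>r_1+r_2$, not $r_3\ge r_1+r_2$.
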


\begin{algorithm}
	\caption{Original TimSort's Merge Policy (see \Cref{line:merge-policy-loop} of \Cref{alg:generic-stack})}\label{alg:og-timsort}
	\begin{algorithmic}[1]
        \While{true}
            \If{$|S| > 3$ and $r_1 > r_3$} \Comment{merge condition \#1}
                \State $R_2 \gets \Call{Merge}{R_3, R_2}$
            \ElsIf{$|S| > 2$ and $r_1 \geq r_2$} \Comment{merge condition \#2}
                \State $R_1 \gets \Call{Merge}{R_2, R_1}$
            \ElsIf{$|S| > 3$ and $r_1 + r_2 \geq r_3$} \Comment{merge condition \#3}
                \State $R_1 \gets \Call{Merge}{R_2, R_1}$
            \Else
                \State \textbf{break}
            \EndIf
        \EndWhile
    \end{algorithmic}
\end{algorithm}



As before, we begin by analyzing the walk-back cost of successful merges.

\begin{lemma}
When we perform a merge, the walk-back cost is paid for by at most 2 times the subsequent merge cost. 
\end{lemma}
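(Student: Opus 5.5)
We go through the three merge conditions of \Cref{alg:og-timsort} one at a time. In each case we bound the walk-back cost spent evaluating the condition that fires. By \Cref{obs:always-know} we always know $r_1$, so the cost comes only from recovering $r_2$ and $r_3$. We assume the shallow stack remembers any run length it recovered until a merge or a push invalidates it. Walking back from the start of $R_1$ recovers $r_2$ at cost $r_2$. Continuing from the start of $R_2$ recovers $r_3$ at a further cost $r_3$. So any successful check costs at most $r_2+r_3$ on top of what is already known.

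\textbf{Condition \#1} ($r_1>r_3$, merging $R_3$ and $R_2$). To test it we must recover $r_3$, which costs at most $r_2+r_3$. This is exactly the cost of the merge that follows, so the factor is $1$.

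\textbf{Condition \#2} ($r_1\ge r_2$, merging $R_2$ and $R_1$). Before this condition is reached, condition \#1 was tested and failed. Here we use a stopping condition: we walk back only until we have seen more than $r_1$ cells past the start of $R_2$, i.e.\ at most $r_2+r_1$ steps. If $R_3$ has not ended by then, we already know $r_3\ge r_1$, so condition \#1 fails. Hence the walk-back spent on condition \#1 is at most $r_2+\min(r_3,r_1)\le r_2+r_1$. This is the merge cost of $R_2$ and $R_1$, giving factor $1$.

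\textbf{Condition \#3} ($r_1+r_2\ge r_3$, merging $R_2$ and $R_1$). Here $r_1<r_2$, and conditions \#1 and \#2 have failed. Testing condition \#3 requires knowing $r_3$ up to the threshold $r_1+r_2$. So we walk back at most $r_2+(r_1+r_2)\le 2(r_1+r_2)$ cells, again stopping early once the threshold is exceeded. This is at most twice the merge cost $r_1+r_2$, which gives the claimed factor of $2$.

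The main obstacle is bookkeeping. We have to make sure that one walk-back is not counted again across several conditions in the same iteration. We also have to check that after a merge the next iteration does not reuse stale information without paying for it again. The fix is to assign the cost to the iteration in which it is incurred. Since a fresh walk is only needed after the stack labels change, each merge pays only for the walks done in its own iteration, and each of those walks is bounded as above. Walk-backs in iterations that end with \textbf{break} are not part of this lemma; they are the failed-merge costs handled separately.
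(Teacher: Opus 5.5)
Your proposal is correct and follows essentially the same route as the paper: a case analysis over the three merge conditions. Conditions~\#1 and~\#2 are paid for at factor $1$, since failing \#1 at the $r_1$ threshold already recovers $r_2$ at total cost at most $r_1+r_2$, and condition~\#3 at factor $2$. The only difference is cosmetic and lies in case~\#3: you resume the partial walk into $R_3$ and bound the total by $r_2+(r_1+r_2)$, whereas the paper counts $r_1+r_2+r_3$ and uses $r_3\le r_1+r_2$ (which holds because the condition succeeded); either way the bound is $2(r_1+r_2)$.
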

\begin{proof}
We prove this by discussing the walk-back and merge costs of each of the three conditions. 
By \cref{obs:always-know}, we always know $r_1$ throughout the algorithm. For simplicity, assume that we begin each iteration of
a merge sequence only knowing $r_1$.
To test merge condition 1, we must spend $r_2$ time just walking from the end of $R_1$ to the end of $R_2$ as we assume we do not know $r_2$. If the condition is true, we spend an additional $r_3$ time walking back. As such, the resulting merge of $R_2$ and $R_3$ perfectly pays for the walk-back cost.

Now we discuss condition 2. We know that we just tested and failed condition 1. Again we spent $r_2$ time walking to find the end of $R_3$. However, because condition 1 failed, we only spent $r_1$ time walking before stopping. In doing so, we have recovered $r_2$. Thus, condition 2 can be determined without any additional walk-back cost. We spent $r_1 + r_2$ time walking back and end up merging $R_1$ and $R_2$, so the merge again perfectly pays for the walk-back cost.

Lastly, we discuss condition 3. To test and fail conditions 1 and 2 incurs $r_1 + r_2$ walk-back cost. An additional $r_3$ work is needed to confirm that $r_1 + r_2 \geq r_3$. Thus, we spend $r_1 + r_2 + r_3$ time in total walking back. Since the condition succeeded, we know $r_1 + r_2 \geq r_3$, so the walk-back cost is upper-bounded by twice the merge cost of merging $R_1$ and $R_2$.
\end{proof}

Now we study the walk-back cost of failed merges. 

\begin{observation}
Failing all three merge conditions in Original TimSort incurs a walk-back cost of at most $2(r_1+r_2)$.
\end{observation}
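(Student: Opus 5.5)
The plan is to trace exactly how much walking back is needed to certify that all three merge conditions of \Cref{alg:og-timsort} fail, assuming, as in the previous lemma, that at the start of each iteration of the merge loop we know only $r_1$ (\Cref{obs:always-know}). The walk starts at the beginning of $R_1$ and proceeds leftward, recovering $R_2$ first and then (partially) $R_3$. We bound the distance walked by a stopping rule for each condition.

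First, condition~1 ($r_1 > r_3$). To evaluate it we must first reach the left end of $R_2$, which costs $r_2$. We then walk into $R_3$. Once we have walked $r_1$ steps into $R_3$ without reaching its start, we know $r_3 \ge r_1$, so the condition fails. The cost so far is at most $r_2 + r_1$, and along the way we have recovered $r_2$ exactly.

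Second, condition~2 ($r_1 \ge r_2$). Since $r_1$ and $r_2$ are now known, it is evaluated with no further walking; when it fails we additionally learn $r_2 > r_1$.

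Third, condition~3 ($r_1+r_2 \ge r_3$). We continue the walk into $R_3$, stopping either when we find its start or when we have walked $r_1 + r_2$ steps into it. In the latter case we know $r_3 > r_1 + r_2$, and the condition fails. The total walk across all three tests is therefore at most $r_2 + (r_1 + r_2) = r_1 + 2r_2$. This is at most $2(r_1 + r_2)$; indeed, using $r_2 > r_1$ it is even at most $3r_2$. The walk for condition~3 simply extends the walk already made for condition~1, so nothing is double counted.

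The only delicate point is that the walks must be shared rather than restarted for each condition. The walk-back procedure only forgets its progress when a stopping condition ends the whole merge sequence, so the $r_1$ steps taken into $R_3$ for condition~1 count toward the $r_1 + r_2$ budget for condition~3. With that sharing, the bound $r_1 + 2r_2 \le 2(r_1 + r_2)$ is immediate.
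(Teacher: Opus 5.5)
Your argument is correct and follows the paper's condition-by-condition accounting: the paper charges $r_1+r_2$ to fail condition~1, nothing to condition~2 (since $r_2$ has been recovered), and an \emph{additional} $r_1+r_2$ for walking on $R_3$ to fail condition~3, which gives exactly $2(r_1+r_2)$. The only difference is that you reuse the $r_1$ steps already taken into $R_3$, which sharpens the bound to $r_1+2r_2$. This sharing is not needed for the stated bound; only keeping the exactly recovered $r_2$ is. Also, your rule that progress is forgotten only when the whole merge sequence ends is your own design choice, not the paper's, which discards progress whenever a stopping condition is met.
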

\begin{proof}
We once again assume that we are only aware of $r_1$. As discussed before, failing on condition 1 requires $r_1 + r_2$ time spent walking back and failing on condition 2 incurs no walk-back cost. Finally, failing on condition 3 requires an additional $r_1 + r_2$ time to walk back on $R_3$. 
\qed
\end{proof}

\begin{observation}
Each unique run can contribute to failing all merge conditions at most once as $R_1$ and at most once as $R_2$
\end{observation}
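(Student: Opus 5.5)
The plan is to track how the label of a fixed run $q$ can change in Original TimSort (\Cref{alg:og-timsort}). A run's label changes in only two ways. First, a \emph{push} moves every run down one label ($R_j \to R_{j+1}$). Second, a \emph{merge} either combines $R_3,R_2$ into a new $R_2$ (condition 1) or combines $R_2,R_1$ into a new $R_1$ (conditions 2 and 3). A merge destroys both participants, and every run below the merged pair moves up one label. The key fact to record first is that label $R_1$ is only ever held by a freshly pushed run or by the output of a merge of $R_2$ and $R_1$. An existing run therefore never moves up into $R_1$ intact: a run at $R_2$ is either destroyed by a merge or pushed down.

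\emph{The $R_1$ case.} Suppose $q$ is $R_1$ when all three conditions fail. The merge sequence then ends, and either the main loop exits (so no further merge sequences of the main loop occur) or a new run is pushed and $q$ becomes $R_2$. By the key fact, $q$ can never hold $R_1$ again. So it is $R_1$ at a failure at most once. This mirrors the corresponding step of \Cref{lem:failed-cost}.

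\emph{The $R_2$ case.} After a failure with $q=R_2$, the next push makes $q=R_3$. Condition 1 would destroy $q$, so $q$ can return to $R_2$ intact only through merges of $R_2$ and $R_1$ (conditions 2 or 3) that collapse everything above $q$ into a single run $T$. I would try to show that this cannot be followed by another failure with $q=R_2$. The natural tool is the failure inequalities at the first failure, $r_1<r_2$ and $r_1+r_2<r_3$, together with the sizes of runs pushed afterwards.

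I expect this step to be the main obstacle, because I do not think the statement holds with ``at most once'' read literally. For example, take $q$ of length $100$ with $r_1=10$: all conditions fail. Push $5$: the conditions fail again, and $q$ is now $R_3$. Push $20$: condition 1 merges $10,5$ into $15$, then condition 2 merges $15,20$ into $35$. This leaves $q=R_2$ with $r_1=35<100$, and $q$ is again $R_2$ at a failure.

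So the plan for this case is to fall back to the precise sense in which $q$ ``contributes'' walk-back cost. In the observation above, the $R_2$ portion of the $2(r_1+r_2)$ bound comes from walking over the whole of $R_2$ when its length is not known. I would try to show that once $q$ has been fully walked across as $R_2$ at a failure, every later failure with $q=R_2$ is preceded by a successful merge of the runs above it. That merge's cost, at least $|T|=r_1$, would pay for walking over $T$, and it would be enough to bound the rest of the work only if the stopping rule ends the walk over $R_2$ after $O(r_1)$ steps. I am not sure this holds: in the example above, confirming condition 1 fails ($35>100$ is false) seems to need only $r_1$ steps of walking into $q$, but deciding condition 2 exactly needs all $100$. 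If this argument goes through, it gives the ``at most once as $R_2$'' accounting needed for the $O(m+n)$ bound, counted per distinct walk over $q$. If it fails, the literal statement needs to be weakened to that charging form.
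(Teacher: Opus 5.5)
Your $R_1$ case is the paper's argument and is correct. Label $R_1$ is only ever reoccupied by a fresh push or by the product of merging $R_2$ with $R_1$, so a run demoted from $R_1$ never returns there intact.

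For $R_2$ you are right that the statement is false, and the paper's proof breaks exactly where you suspect. The paper asserts that once pushed to $R_3$ the run ``cannot move back to either $R_2$ or $R_1$ without merging.'' That step is borrowed from \Cref{lem:failed-cost}, where $R_1$ and $R_2$ are never merged. In Original TimSort, however, conditions 2 and 3 merge $R_2$ with $R_1$, which lifts $R_3$ to $R_2$ untouched. Your example works once you fix what lies below $q$: either nothing, or $r_3\ge 136$, because the second failure also needs $35+100<r_3$. The paper's own input from \Cref{sec:counter} refutes the claim as well. That input never triggers condition 4, so Original TimSort performs the same merges. There the run of length $n/8$ is $R_2$ in the failed test that ends each of the $\Theta(\log n)$ stages.

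Your fallback cannot be completed, for two reasons.

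First, you have the conditions swapped in your last step. At $[\dots,100,35]$ the test $35\ge 100$ is condition 2, and it is settled after $r_1$ steps. Conditions 1 and 3 need $r_3$, hence the left end of $R_2$, hence a full walk over $R_2$ whenever its length has been forgotten. So no $O(r_1)$ stopping rule exists for them.

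Second, merges above $q$ cannot pay for those walks. Each return of $q$ to $R_2$ at a failure must end with a condition-2 merge of some $X\supseteq T$ with $Y\ge X$. So the run above $q$ at least doubles between such failures. This bounds their number by $O(\log|q|)$, not $O(1)$, and the bound is attained. In the low-entropy variant of the \Cref{sec:counter} input, the repeated walks over the $n/8$ run cost $\Omega(n\log n)$. Meanwhile, Original TimSort's merge cost and $n\mathcal H(A)$ are both $O(n)$.

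So the observation is false, not merely misproved, and no charging-style weakening rescues it. That same input also contradicts the claim, which this observation is used to support, that Original TimSort is walkable.
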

\begin{proof}
If the run was $R_1$, the new run added in the next iteration pushes it to $R_2$. The same run cannot move back to $R_1$ without merging first.
Likewise, if the run was $R_2$, the new run pushes it to $R_3$. It cannot move back to either $R_2$ or $R_1$ without merging.
\qed
\end{proof}

As a result, each run contributes at most 4 times their size (2 times their size as $R_1$ and 2 times their size as $R_2$). This can happen for every original and intermediate run, thus the total walk-back cost of failed merges is $O(m+n)$. Therefore, Original TimSort is walkable.

\subsection{The 2-MergeSort}

In 2019 Buss and Knop introduced several new natural stack-based mergesorts,
which they call the 2-MergeSort and the $\alpha$-MergeSort~\cite{buss19}.
In \Cref{sec:timsort-proof}, we show that $\alpha$-MergeSort (not to be confused with the $\alpha$-StackSort
introduced by Auger et al.~\cite{auger2015}) is not walkable. 
Here, we show that the 2-MergeSort
is.
The merge policy of the 2-MergeSort is described in \Cref{alg:2-merge-sort}.

\begin{algorithm}[h!]
	\caption{2-MergeSort's Merge Policy (see \Cref{line:merge-policy-loop} of \Cref{alg:generic-stack})}\label{alg:2-merge-sort}
	\begin{algorithmic}[1]
		\While{$|S| \geq 3$ and $r_2 < 2 r_1$}
			\If{$r_3 < r_1$}
				\State $R_2 \gets \Call{Merge}{R_3, R_2}$ \Comment{removes
				$R_3$ from $S$}
			\Else
				\State $R_1 \gets \Call{Merge}{R_2, R_1}$ \Comment{removes
				$R_2$ from $S$}
			\EndIf
		\EndWhile
	\end{algorithmic}
\end{algorithm}


\begin{theorem}\label{thm:2-merge-sort}
	2-MergeSort is walkable.
\end{theorem}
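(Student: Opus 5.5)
Following the template used for $c$-Adaptive ShiversSort and Original TimSort, we split the walk-back cost into the cost of successful merges and the cost of the final failed check that ends each merge sequence. By \Cref{lem:walkable-collapse} only the main loop matters. By \Cref{obs:always-know} we always know $r_1$, and we will additionally keep $r_2$ in the shallow stack whenever it is available. As in \Cref{lem:powersort-r2}, a new $R_2$ is either the previous $R_1$ or the result of a merge. The only exception arises after a merge of $R_2$ and $R_1$, where the new $R_2$ is the old $R_3$. For simplicity we assume we start each iteration of the while loop knowing only $r_1$, and then charge as follows.

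\textbf{Successful merges.} Testing the loop condition $r_2 < 2r_1$ needs only a walk of at most $\min\{r_2, 2r_1\}\le r_2$ past $R_1$. If the condition holds, we must test $r_3 < r_1$, and we stop walking $R_3$ once we have walked $r_1$ steps, so this costs at most $r_1$.
\begin{itemize}
\item If $r_3<r_1$, we have recovered $r_3$ exactly and merge $R_3,R_2$. The walk cost $r_2+r_3$ equals the merge cost.
\item Otherwise we merge $R_2,R_1$. The walk cost is at most $r_2+r_1$, again equal to the merge cost.
\end{itemize}
Hence all successful merges cost $O(m)$ in total.

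\textbf{Failed checks.} A merge sequence terminates either because $|S|<3$ or because $r_2\ge 2r_1$. The stopping condition is simply that after walking $2r_1$ steps into $R_2$ without reaching its start, we know the condition fails. So a failed check costs at most $2r_1$. We charge it to the run currently labeled $R_1$. After the failure a new run is pushed, so this run becomes $R_2$. It can only regain label $R_1$ through a merge, and a merge destroys it and creates a new intermediate run. Thus each original or intermediate run is charged $2|q|$ at most once, giving $O(m+n)$ in total.

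\textbf{Main obstacle.} The delicate point is whether walking back is ever needed beyond $R_3$, and whether $r_2$ knowledge is preserved. The algorithm only ever inspects $r_1,r_2,r_3$, so it is almost-$3$-aware. Walks start from the known start of $R_1$, which is the position of the topmost run, and each step reads only one further boundary, so no deeper information is needed. The remaining check is that the bounded walk of $r_1$ steps into $R_3$ suffices to decide $r_3<r_1$, and that the costs above never double count. Since all costs are bounded by constant multiples of merged or charged run lengths, the total overhead is $O(m+n)$. This proves \Cref{thm:2-merge-sort}.
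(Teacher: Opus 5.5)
Your proof is correct and follows the paper's argument. Successful merges pay for their own walk-back: $r_2$ to decide $r_2<2r_1$, plus the probe into $R_3$. Each failed check costs $2r_1$ and is charged once to the run labeled $R_1$, which cannot regain that label without being merged.

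Your explicit cap of $r_1$ steps when probing $R_3$ is a useful detail that the paper's ``symmetric'' case glosses over. Without it, the $r_3\ge r_1$ case would not be paid for by the merge of $R_2$ and $R_1$.
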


\begin{proof}
    As always, we begin by discussing the walk-back costs when a merge occurs.
    By \Cref{obs:always-know}, we always know $r_1$.
    If a merge occurs, then the condition $r_2 < 2r_1$ is true and we incur walk-back cost $r_2$ to test this. If $r_3 < r_1$, we incur additional walk-back cost $r_3$. Since we merge $R_3$ and $R_2$, the merge perfectly pays for the walk-back cost of $r_2 + r_3$. The case is symmetric for if $r_3 \not < r_1$.

    Now we discuss the cost associated when no merge occurs. This occurs when we fail the initial comparison, so $r_2 \geq 2r_1$. Therefore, we incur walk-back cost $2r_1$. We observe that this can only happen once per unique run during the main loop. Whenever a failure occurs, we push a new run, displacing $R_1$ to $R_2$. To return to spot one of the stack, this run will have to merge, destroying it. Thus, the total walk-back cost is $O(m+n)$.
	\qed
\end{proof}

\section{In-Place Run Partitioning} \label{sec:compression}
\label{app:in-place}
In this appendix, for the sake of completeness, we describe a
simple partitioning algorithm
that can be used, for example, to move all short runs in an array, $A$,
to the end of $A$, which also moves the long runs to be before
that end region, without destroying the long runs. 

We can abstract this problem by representing each element of a long run as $0$'s,
and each element of a short run as $1$'s.
Suppose $A$ is an array of $n$ 
elements, each of which is either $0$ or $1$, such that there are $n-m$
0's and $m$ 1's in $A$.
Our goal of moving the short runs to the end of $A$ is equivalent to segregating
the 0's and 1's in $A$, such that the $n-m$ 0's precede the $m$ 1's.
While it is important for us that the 0's are moved stably, i.e., that the long
runs are not destroyed, it is not important for us and in general our algorithm
will not preserve the stability of the 1's, i.e., the short runs.
This can be done in-place in $O(n)$ time by making a single pass through $A$
while maintaining two pointers, one for the next 0 and one for the first
preceding 1.
See \Cref{alg:compress}.

In our setting, we do not know which elements are $0$'s or $1$'s {\it a priori}.
However, since these labels are determined by the length of each run and our 
algorithm performs a single pass through $A$, we can 
scan each run as we come across them to determine whether they are a run of $0$'s
or $1$'s. We may scan each run at most twice, once for each pointer, 
so this adds an additional $O(n)$ time.

\begin{algorithm}
	\caption{In-Place Run Partitioning}\label{alg:compress}
	\begin{algorithmic}[1]
		\State Let $A[i]$ be the first cell with $A[i]=1$
		\State Let $A[j]$ be the first cell with $A[j]=0$ and $j>i$
		\State Swap $A[i]$ and $A[j]$
		\While{$i<n$} 
			\State Let $A[i]$ be the next cell with $A[i]=1$
			\State Let $A[j]$ be the next cell with $A[j]=0$ and $j>i$
			\State Swap $A[i]$ and $A[j]$
		\EndWhile
	\end{algorithmic}
\end{algorithm}

Note that this algorithm will move 0's forward in a way that
preserves their original order in the array, $A$.
Thus, we have the following:

\begin{theorem}
Given an array, $A$, of $n$ cells, $m$ of which store 1's
and $n-m$ of which store 0's, we can stably partition
the 0's in-place to the first $n-m$ cells in $A$ in $O(n)$ time.
\end{theorem}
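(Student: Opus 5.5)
We prove the theorem in three steps: correctness, the linear running time, and stability of the 0's. Throughout we keep the invariant suggested by \Cref{alg:compress}: at every moment the prefix $A[0..i-1]$ consists solely of 0's, arranged in their original relative order. The pointer $i$ always marks the leftmost 1 not yet displaced. The pointer $j$ marks the next 0 to the right of $i$ that has not yet been moved.

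\emph{Correctness.} Suppose the invariant holds before a swap. Then $A[i..j-1]$ consists entirely of 1's. This is because $j$ is the first 0 after $i$, and every earlier 0 has already been moved into the prefix. Swapping $A[i]$ with $A[j]$ places the next 0 in original order at position $i$. The block $A[i+1..j]$ is then again all 1's. Advancing $i$ by one restores the invariant, and the next 0 is found by continuing the scan of $j$ from its current position. When $j$ runs off the end of the array, no 0's remain beyond $i$. Hence the prefix has length exactly $n-m$, and everything after it is a 1.

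\emph{Running time.} Both pointers only move rightward. So the total pointer movement is at most $2n$. Each swap is charged to the single 0 it places, which gives at most $n-m$ swaps. The total time is therefore $O(n)$, and only $O(1)$ extra cells are used for $i$, $j$ and a temporary.

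\emph{Stability of the 0's.} The 0's are placed at positions $0,1,2,\dots$ in exactly the order in which $j$ encounters them, and $j$ scans left to right. So the original relative order of the 0's is preserved.

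The only subtle point, and the one I expect to be the main obstacle, is the claim that $A[i..j-1]$ is all 1's. This is what guarantees that a swap never moves a 0 backwards past another 0. I would argue it directly from the invariant. Every 0 originally located before $j$ has already been placed into the prefix. Every cell between $i$ and $j$ is therefore either an original 1 or a 1 that an earlier swap moved there. In the short-run application the labels are not stored explicitly, so determining whether a cell is a 0 or a 1 requires knowing the length of its run. Each pointer rescans each run at most once, which adds only $O(n)$ to the time bound.
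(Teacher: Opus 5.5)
Your proof is correct and takes the same approach as the paper: the two-pointer swap procedure of \Cref{alg:compress}, where stability follows because each 0 is moved forward in the order the scanning pointer $j$ meets it. Your explicit invariant (the prefix $A[0..i-1]$ holds the 0's in original order, and $A[i..j-1]$ is all 1's) makes rigorous what the paper states in one line; the all-1's claim follows directly from choosing $j$ as the first 0 after $i$, so it is not a real obstacle.
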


\section{Experimentally Evaluating Walk-Back Cost}\label{sec:experiments}

\begin{figure}[t]
	\centering
	\includegraphics[width=0.9\linewidth]{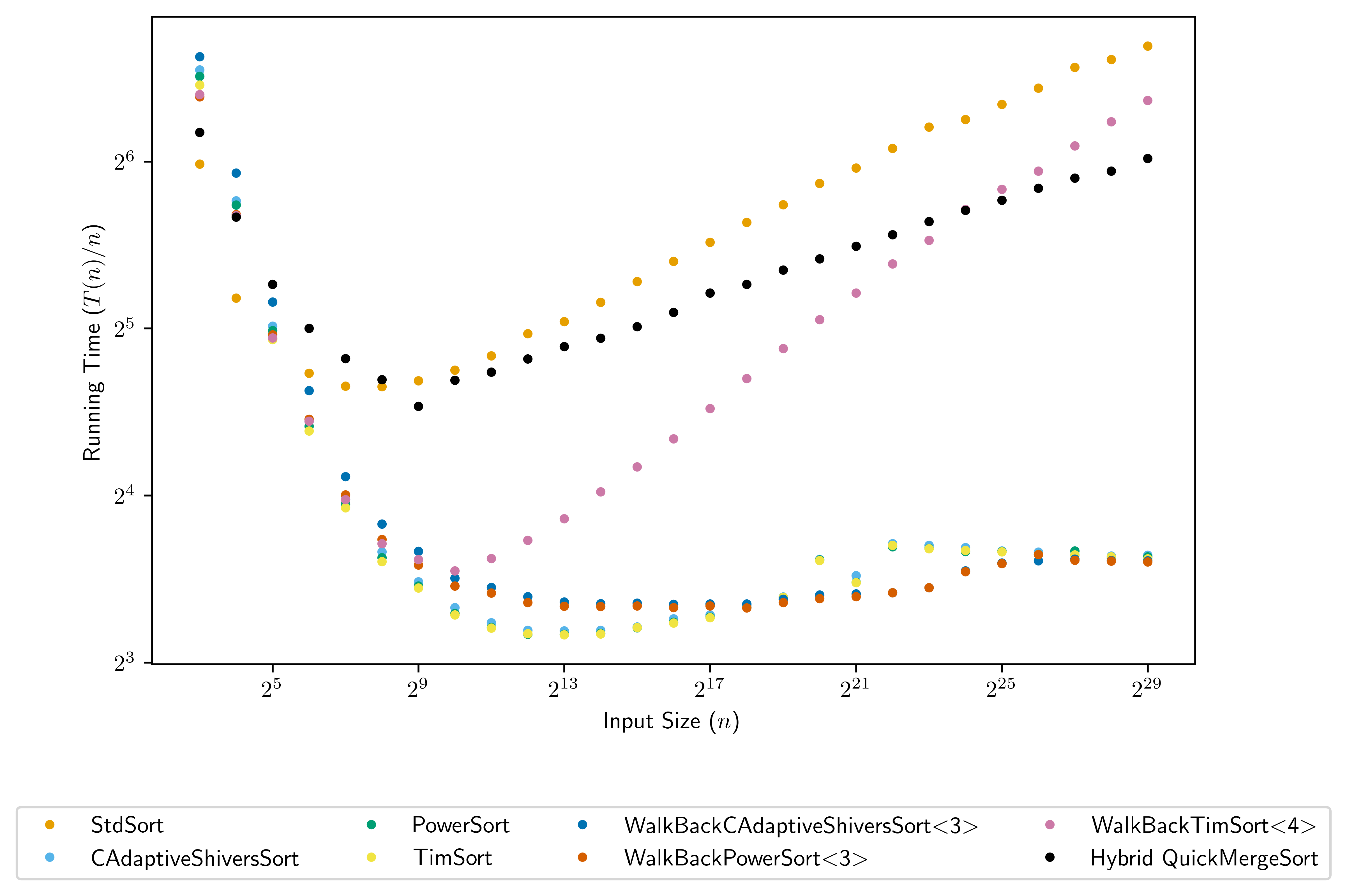}
	\caption{
		A comparison of normalized running times on our TimSort counterexample
		inputs as described in \Cref{sec:counter}. 
		Recall that for this input the Shannon entropy is constant.
		Thus, we scale the running times by $n$.
		As opposed to our stack-based mergesorts, \texttt{std::sort} and Hybrid
		QMS are not instance-optimal, so they cannot take advantage of the
		natural ordering of the input.
		Further, while all our in-place versions of the \textit{walkable}
		mergesorts are able to take advantage of the natural ordering of the
		input, walk-back TimSort is not.
        This confirms our proof in \Cref{sec:counter} that TimSort is not walkable 
        and it requires the encoding techniques described in \Cref{sec:encoding}.
	}
	\label{fig:running-times-bad-input}
\end{figure}

We support our theoretical results with experiments by comparing the
performance of several stack-based mergesorts with their in-place versions
produced from the walk-back algorithm. 
We also compare our algorithms against the performance of the built-in
C++ implementation of introsort, \texttt{std::sort}, and against QuickXsort, a
recent in-place sorting algorithm.

\subsection{Experimental setup}

We ran our C++ experiments on 
\texttt{Ubuntu 22.04.5
LTS} (Linux kernel\\ \texttt{5.15.0-135-generic}). The hardware comprised two
Intel\textregistered{} Xeon\textregistered{} X5680 processors (@ 3.33\,GHz,
24\,MiB total L3 cache) and 94\,GiB of RAM. The code was compiled using
\texttt{GCC} version \texttt{11.4.0} with the \texttt{-O3 -march=native}
optimization flags.
We ran our benchmarks using \texttt{Google Benchmark} version \texttt{1.8.3},
which is a microbenchmarking framework for C++.
We tested two types of inputs: (1) our TimSort counterexample input as described in
\Cref{sec:counter} 
and (2) a uniform input distribution formed by shuffling distinct integers
using the Mersenne Twister 19937 random number generator.
Many QuickXsort variants were introduced by Edelkamp, Wei{\ss}, and Wild
\cite{edelkamp2020quickxsort}, and by Edelkamp and Wei{\ss}
\cite{edelkamp2019worst}.
We picked the Hybrid QuickMergeSort (Hybrid QMS) variant, which is the most
competitive variant described in \cite{edelkamp2019worst}.
We implemented TimSort as described in Algorithm 3 of~\cite{auger2019}.

For simplicity, we used \texttt{std::merge} from the C++ standard library for
all the canonical stack based mergesorts, and \texttt{std::inplace\_merge} for
the in-place versions.
While the former is guaranteed to run in $O(n)$ time, C++'s in-place merge
algorithm has a worst case time complexity of $O(n\log n)$, although our
experiments suggest that this worst-case condition was not reached.
Another detail is that none of our inputs allowed duplicate elements, although
preliminary experiments suggest that this does not have a significant impact on
the running time of the algorithms.
All our code will be made available online.

\subsection{Results}

\begin{figure}[t]
	\centering
	\includegraphics[width=0.9\linewidth]{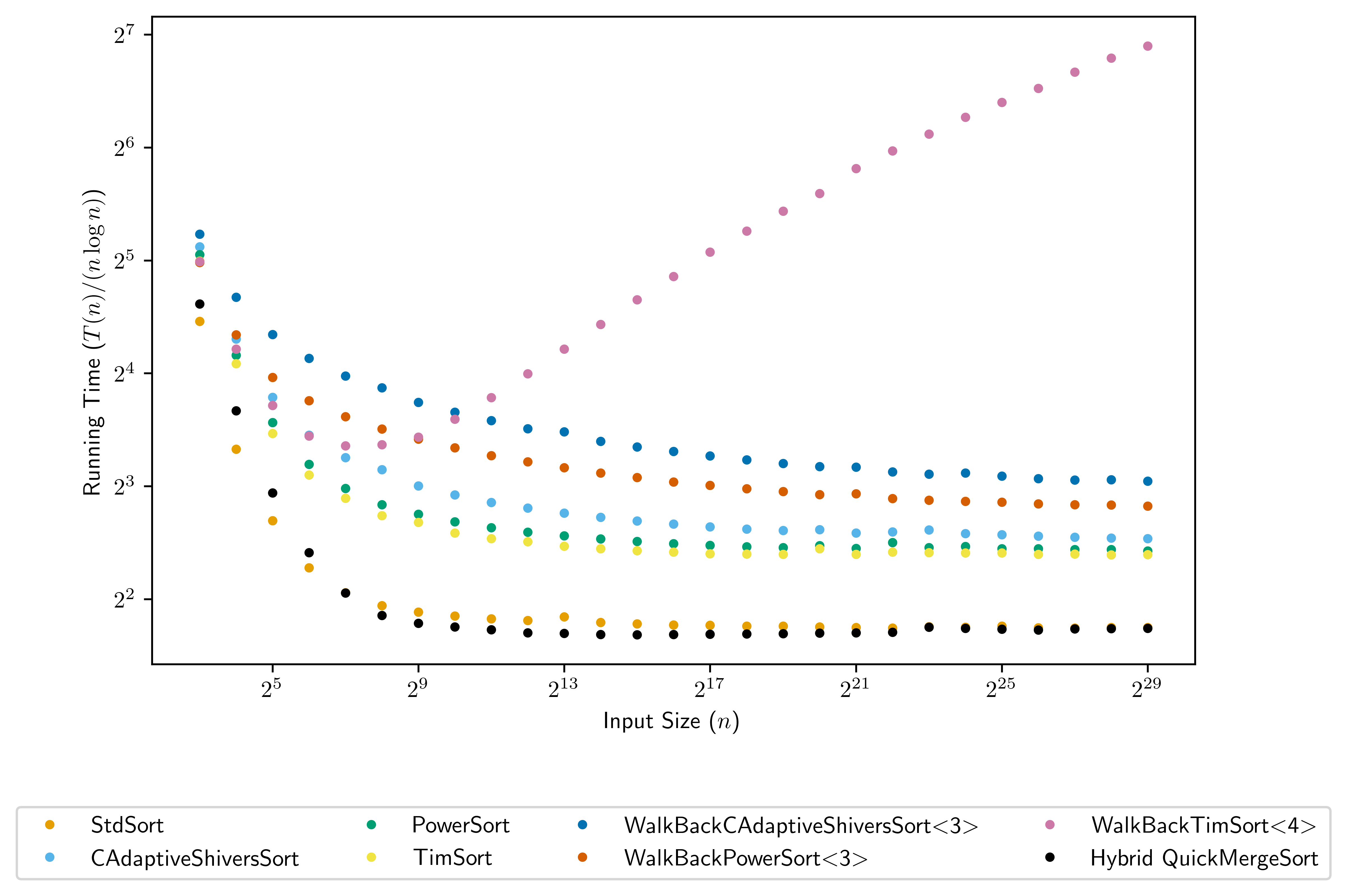}
	\caption{
		A comparison of normalized running times on uniformly random inputs.
		Since uniformly random inputs are expected to have a Shannon entropy of
		$\Theta(\log{n})$, we scaled the running times by $n \log{n}$, such that
		any algorithm that is optimal with respect to Shannon entropy should
		appear constant.
		As expected, \texttt{std::sort} (StdSort) and Hybrid QMS are optimized
		well for these inputs.
		Reassuringly, not only do all the stack-based mergesorts appear
		constant, but also all our in-place versions of the \textit{walkable}
		mergesorts appear constant.
		The only mergesort which does not appear constant is the in-place
		version of TimSort, which is not walkable, as shown in \Cref{sec:counter}.
	}
	\label{fig:running-times-normal}
\end{figure}

In \Cref{fig:running-times-bad-input}, we compare the algorithms on the TimSort counterexample
input as described in \Cref{sec:counter} and \Cref{sec:timsort-proof}.
This input has a constant Shannon entropy, so any algorithm that is
instance-optimal with respect to Shannon entropy should run in linear time, and
after scaling the running times by $n$, should appear constant.
Indeed, we see that all the standard stack-based mergesorts take advantage of the
natural ordering of the input and appear constant, while C++'s
\texttt{std::sort} function (using introsort) and QuickXsort, which are not
instance-optimal, perform poorly.
Reassuringly, after applying our simple walk-back algorithm, the in-place
versions of the \textit{walkable} mergesorts (PowerSort and $c$-Adaptive
ShiversSort) appear to remain input-sensitive like their non-in-place counterparts.
In contrast, the in-place version of TimSort, which as we proved in
\Cref{sec:counter} is not walkable, performs poorly.

In \Cref{fig:running-times-normal}, we compare the algorithms on uniformly random inputs,
which are expected to have a Shannon entropy of $\Theta(\log{n})$, so we scale
the running times by $n\log{n}$, such that any $O(n\log{n})$ algorithm should
appear constant.
And indeed, all the algorithms (notably except the version of TimSort implemented with the walk-back algorithm)
appear constant, as expected.
The fact that walk-back TimSort appears to perform worse than
$O(n\log{n})$ on such inputs is interesting.



\section{Supplemental Experiments} \label{app:supplemental-experiments}


\begin{figure}[t]
	\centering
	\begin{subfigure}[b]{0.48\textwidth}
		\centering
		\includegraphics[width=\textwidth]{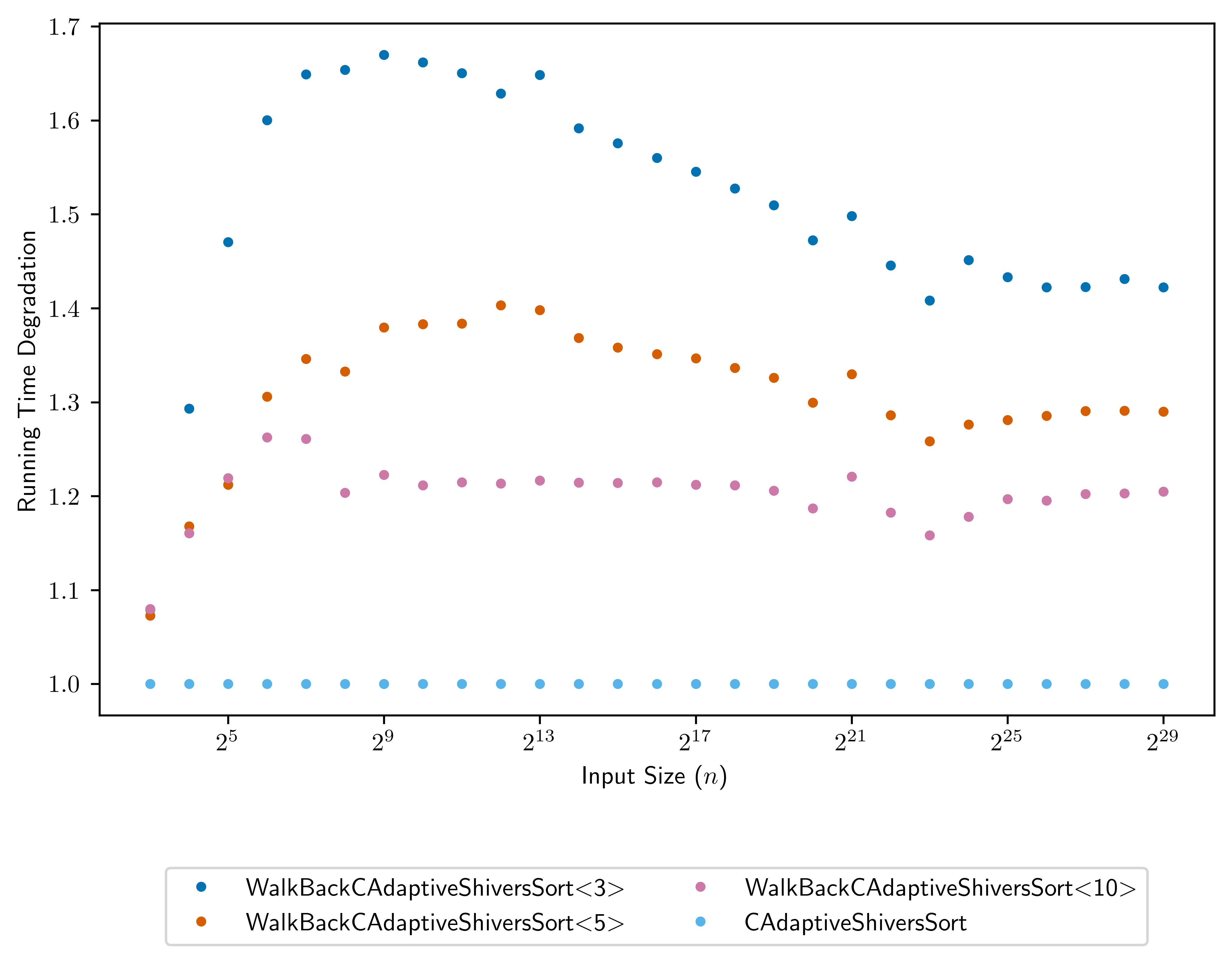}
	\end{subfigure}
	\hfill
	\begin{subfigure}[b]{0.48\textwidth}
		\centering
		\includegraphics[width=\textwidth]{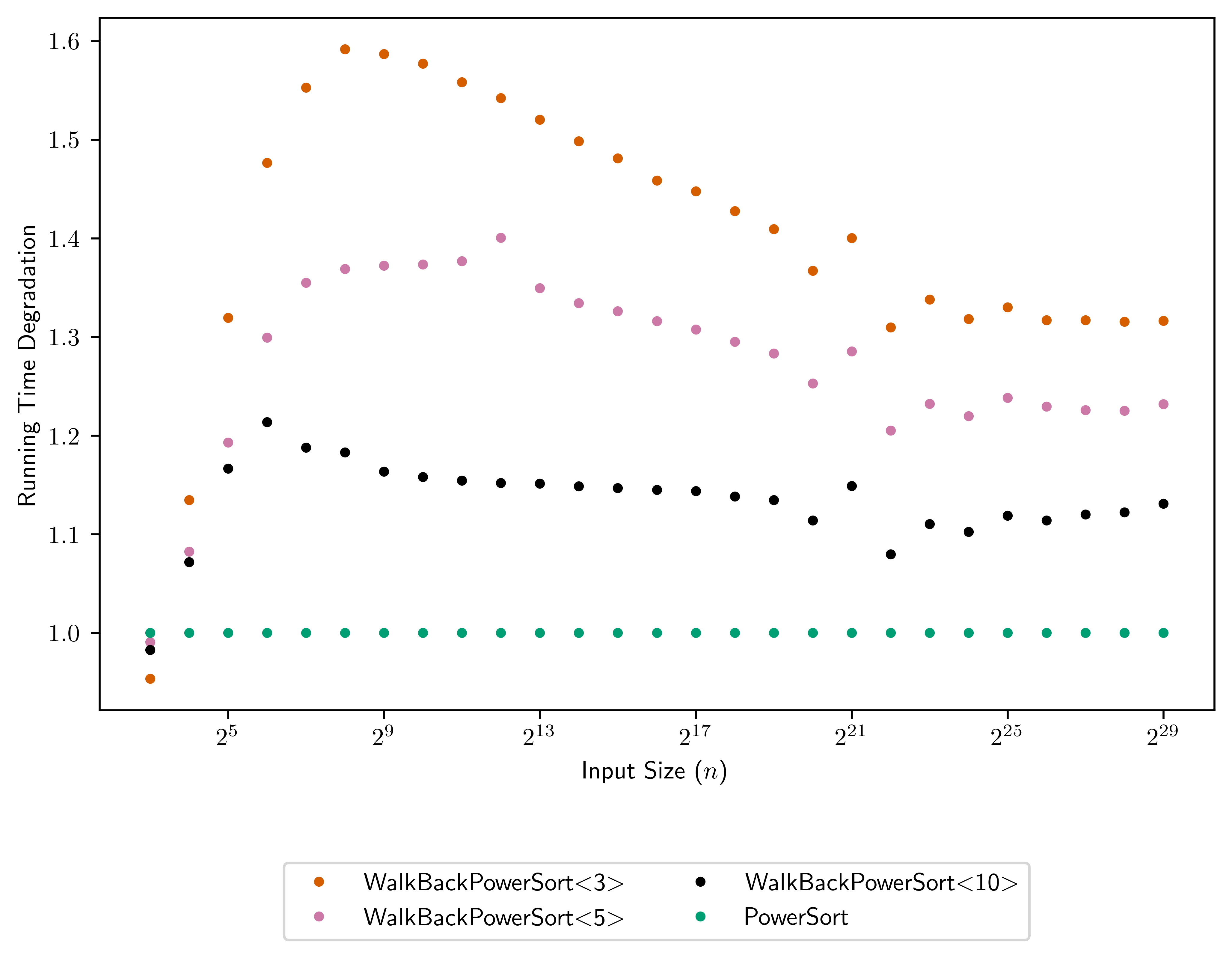}
	\end{subfigure}
	\vspace{1em}
	\begin{subfigure}[b]{0.48\textwidth}
		\centering
		\includegraphics[width=\textwidth]{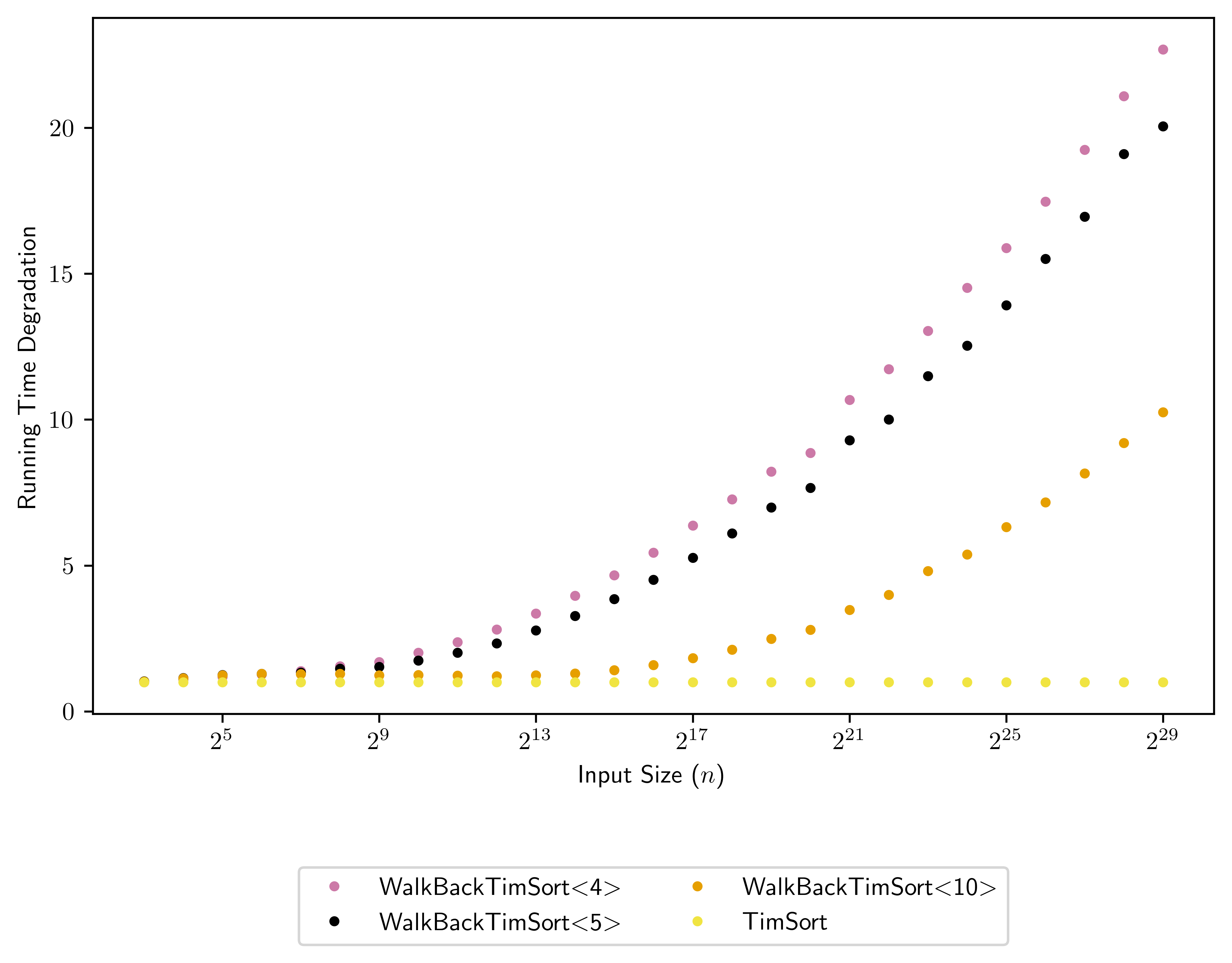}
	\end{subfigure}
	\caption{
		A comparison of the different in-place versions of each algorithm
		against their standard version on uniformly random inputs.
		Each algorithm is measured with respect to the standard version, which
		is normalized to 1.
		While the in-place versions of $c$-Adaptive ShiversSort and PowerSort
		are within a small constant factor of the standard version, the in-place
		versions of TimSort clearly perform asymptotically worse.
	}
	\label{fig:all-compare}
\end{figure}

\subsection{Impact of Different Stack Sizes} \label{sec:stack-size}

In the previous section, we gave each in-place variant their minimal stack size
required, $k$, which for PowerSort and $c$-Adaptive ShiversSort is 3, and for
TimSort is 4.
We now compare each in-place algorithm as its stack size increases
from $k$ to 5, to 10, against the standard version (which can be thought of as
having stacks of size up to $\Theta(\log n)$).

As expected, all the in-place versions of our stack-based natural mergesorts
perform better as their stack size increases, performing the best when the stack
size is not bounded.
The walkable mergesorts in particular perform very well, performing within a very
small constant factor of the standard version.
Even for the smallest stack size of just 3, the in-place versions of PowerSort
and $c$-Adaptive ShiversSort perform within a factor of 1.7 and 1.6,
respectively, of the standard version.
This further confirms that not only is our simple walk-back algorithm effective,
but that it is likely benefiting from being very cache-friendly.
In contrast, despite the cache-friendliness of our walk-back algorithm, the
in-place version of TimSort clearly does not keep up with the standard version.
Even when allowed to store the top 10 runs, it is only able to keep up with the
standard version for so long before diverging.
See \Cref{fig:all-compare}.
This directly aligns with our results from \Cref{sec:counter}.

\subsection{Stack Height Variability} \label{sec:stack-height-variability}

\begin{figure}[t]
	\centering
	\begin{subfigure}[b]{0.48\textwidth}
		\centering
		\includegraphics[width=\textwidth]{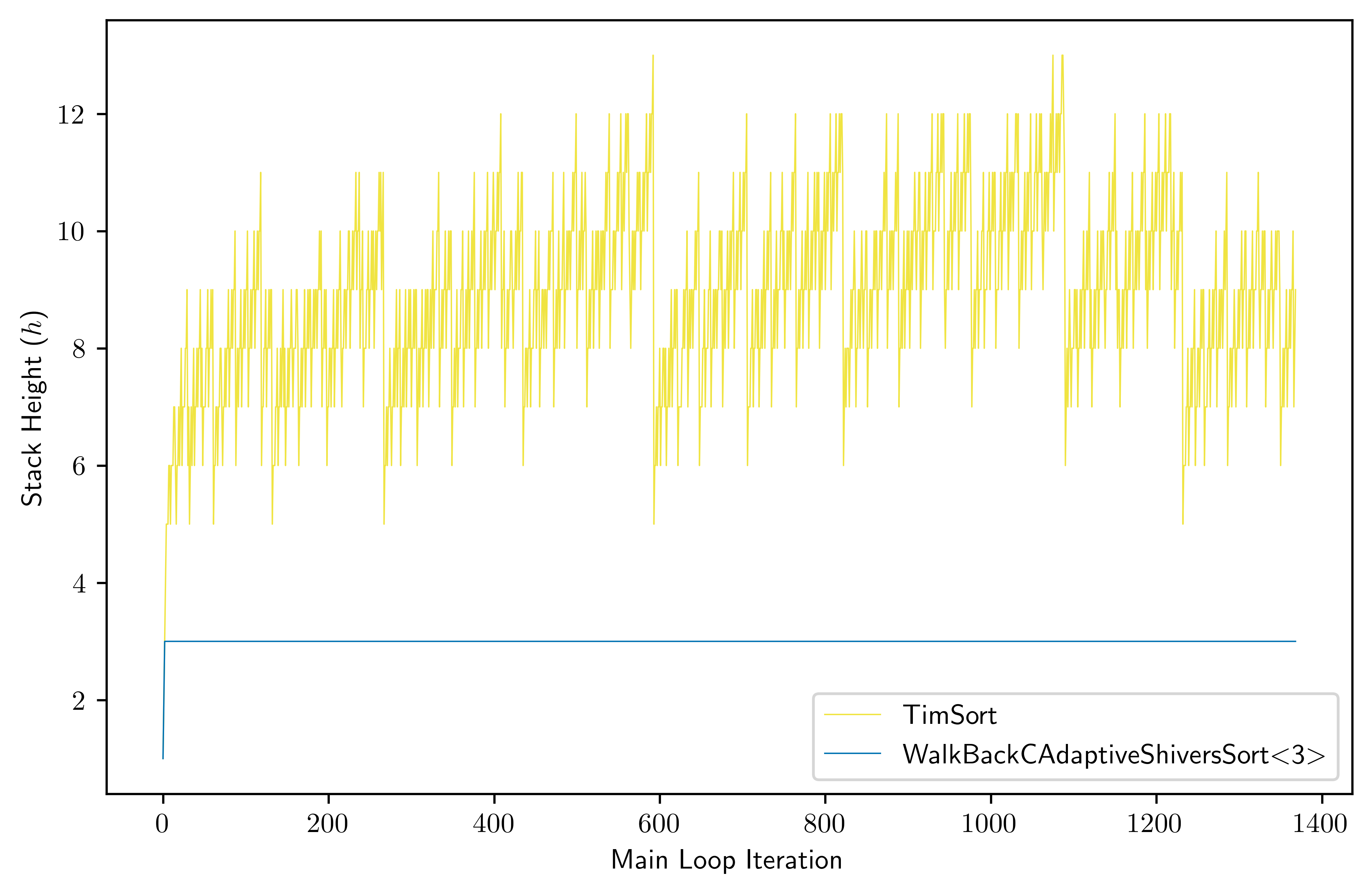}
	\end{subfigure}
	\hfill
	\begin{subfigure}[b]{0.48\textwidth}
		\centering
		\includegraphics[width=\textwidth]{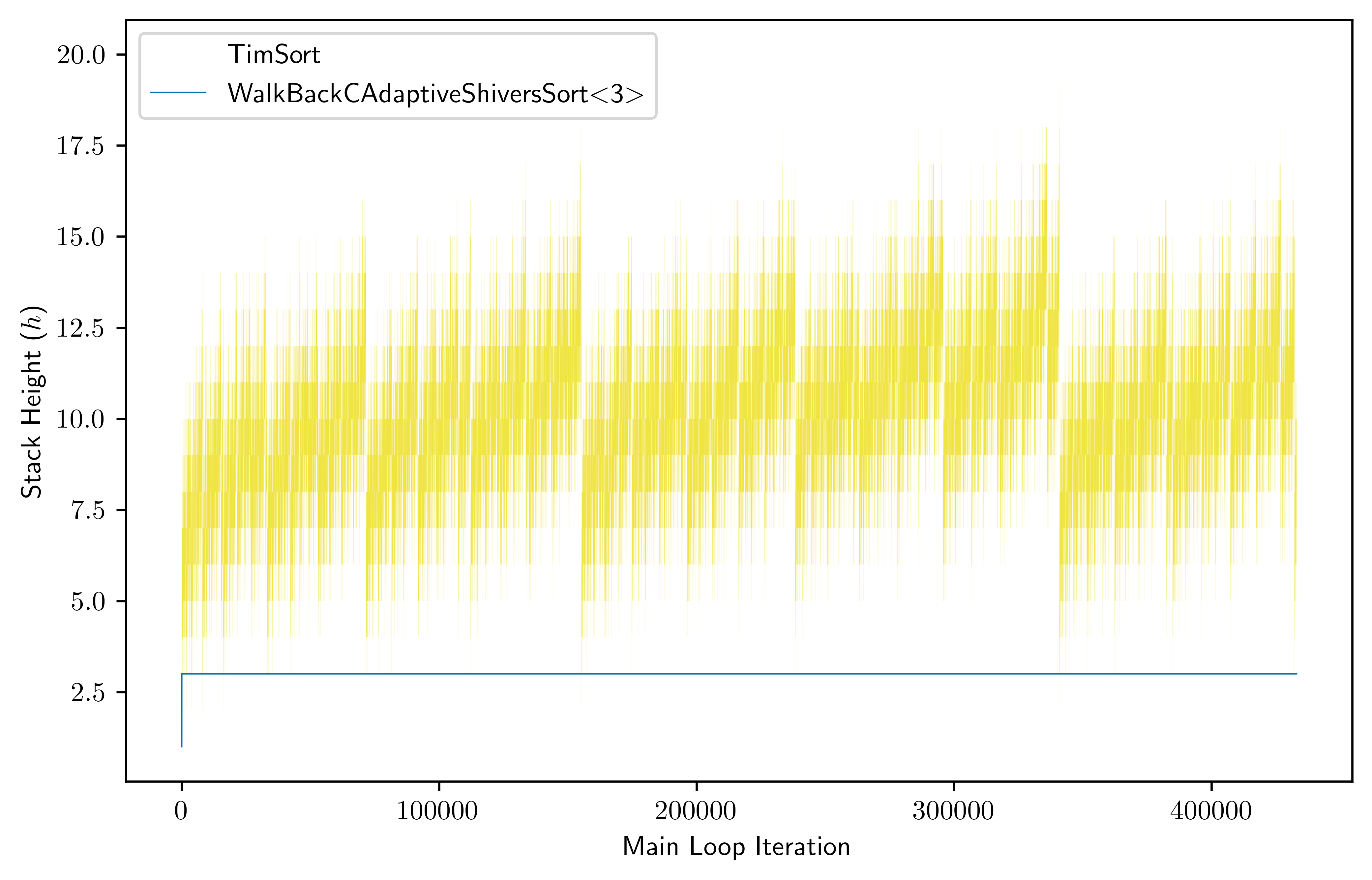}
	\end{subfigure}
	\caption{
		Here we track TimSort's stack height ($h$) during each iteration of the main
		loop, right after a new run is pushed onto the stack $S$ for arrays of
		size $n = 2^{20}$.
		On the left, we use our TimSort counterexample input as described in
		\Cref{sec:counter} while on the right we use uniformly random inputs.
	}
	\label{fig:bad-tim-sort-stack-sizes}
\end{figure}

In \Cref{fig:bad-tim-sort-stack-sizes}, we plot TimSort's stack height ($h$)
during each iteration of the main loop, right after a new run is pushed onto the
stack $S$ for arrays of size $n = 2^{20}$ for both our bad TimSort example
(left) and uniformly random inputs (right), and compare it to the stack size of
the in-place version of $c$-Adaptive ShiversSort.
It is clear to see from both plots how TimSort's stack size often significantly
fluctuates in hard-to-predict ways, while the in-place version of $c$-Adaptive
ShiversSort has a very consistent stack size, bounded by 3.
These simple figures demonstrate the predictability of our in-place versions,
which may lead to their implementations being less error-prone.

%


\end{document}